\documentclass[twocolumn,notitlepage,floatfix,reprint,superscriptaddress,aps]{revtex4-2}

\usepackage{graphicx}
\usepackage{amsmath}
\usepackage{amsthm}
\usepackage{amssymb}
\usepackage{latexsym}
\usepackage{array}
\usepackage{hyperref}
\usepackage{amsfonts}
\usepackage{dsfont}
\usepackage{mathrsfs}
\usepackage{verbatim}
\usepackage{bbold}
\usepackage[normalem]{ulem}
\usepackage{upgreek}
\usepackage{makecell}
\usepackage{adjustbox}
\usepackage{algorithm}
\usepackage{algpseudocode}
\usepackage{color}
\usepackage{bm}
\usepackage{times}
\usepackage{booktabs}
\usepackage{multirow}

\usepackage{textcomp}
\usepackage[dvipsnames]{xcolor}

\usepackage{bm}
\usepackage{times}

\newcommand{\abs}[1]{\left|#1\right|}
\newcommand{\norm}[1]{\left\|#1\right\|}
\newcommand{\ket}[1]{\left|#1\right\rangle}
\newcommand{\bra}[1]{\left\langle #1\right|}

\newcommand{\ketbra}[2]{\ket{#1}\!\!\bra{#2}}
\newcommand{\Tr}{\operatorname{Tr}}
\newcommand{\Var}{\operatorname{Var}}

\DeclareMathOperator{\supp}{supp}
\newcommand{\id}{\mathds{1}}
\newcommand{\hU}{\hat U}
\newcommand{\hrho}{\hat\rho}
\newcommand{\EP}{\mathrm{EP}}
\newcommand{\EPD}{\mathrm{EPD}}
\newcommand{\LC}{\mathrm{LC}}
\newcommand{\Haar}{\mathrm{Haar}}
\newcommand{\prodens}{\mathrm{prod}}
\newcommand{\SM}{Supplemental Material}

\newcommand{\hid}{\hat{\id}}
\newcommand{\hsigma}{\hat{\sigma}}
\newcommand{\hG}{\hat{G}}
\newcommand{\hH}{\hat{H}}
\newcommand{\hO}{\hat{O}}
\newcommand{\hB}{\hat{B}}
\newcommand{\hM}{\hat{M}}
\newcommand{\hK}{\hat{K}}
\newcommand{\hT}{\hat{T}}
\newcommand{\hS}{\hat{S}}

\newcommand{\hE}{\hat{E}}
\newcommand{\hA}{\hat{A}}
\newcommand{\hR}{\hat{R}}
\newcommand{\hX}{\hat{X}}
\newcommand{\hY}{\hat{Y}}
\newcommand{\hZ}{\hat{Z}}

\theoremstyle{plain}
\newtheorem{theorem}{Theorem}
\newtheorem{proposition}{Proposition}
\newtheorem{lemma}{Lemma}
\newtheorem{corollary}{Corollary}
\theoremstyle{definition}
\newtheorem{definition}{Definition}
\newtheorem{remark}{Remark}

\makeatletter
\renewcommand\part[1]{%
  \clearpage
  \onecolumngrid
  \section*{#1}
  
}
\makeatother

\makeatletter
\let\origaddcontentsline\addcontentsline
\renewcommand{\addcontentsline}[3]{}
\makeatother

\begin{document}

\part{}

\title{Breaking the One-Dimensional Expressibility--Trainability Tradeoff}

\author{Kyoungho~Cho}
\affiliation{Institute for Convergence Research and Education in Advanced Technology, Yonsei University, Seoul 03722, Republic of Korea}
\affiliation{Department of Statistics and Data Science, Yonsei University, Seoul 03722, Republic of Korea}

\author{Yu-Seong~Jeon}
\affiliation{Department of Physics, Hanyang University, Seoul 04763, Republic of Korea}

\author{Jinhyoung~Lee}\email{hyoung@hanyang.ac.kr}
\affiliation{Department of Physics, Hanyang University, Seoul 04763, Republic of Korea}
\affiliation{Hanyang Institute for Quantum Science and Quantum Technology, Hanyang University, Seoul 04763, Republic of Korea}

\author{Jeongho~Bang}\email{jbang@yonsei.ac.kr}
\affiliation{Institute for Convergence Research and Education in Advanced Technology, Yonsei University, Seoul 03722, Republic of Korea}
\affiliation{Department of Quantum Information, Yonsei University, Incheon 21983, Republic of Korea}

\date{\today}

\begin{abstract}
Expressive parameterized quantum circuits (PQCs) are often designed under a dilemma: the growth of expressibility and entangling power (EP) that improves Hilbert-space coverage is also expected to randomize an ansatz and activate barren-plateau (BP) conditions. We show that this dilemma is not a one-dimensional tradeoff. The usual picture collapses three inequivalent objects---parameter-ensemble coverage, fixed-circuit entangling response, and local gradient moments---into one scalar narrative. For a fixed circuit probed by Haar-product inputs, EP is a global two-copy mean of the output-entanglement distribution, whereas entangling-power deviation (EPD) is a global four-copy fluctuation descriptor. Gradient variance, however, is a local two-copy contraction selected by a parameter light cone and a cost observable. This moment hierarchy yields an analytic separation: equal EP need not imply equal trainability, as witnessed by equal-EP circuits with different EPDs and different gradient variances. These separations turn EP and EPD into a two-dial design rule for PQC ansatzes: EP measures how far the circuit has moved along the coverage dial, while EPD monitors whether input-dependent variability remains. We find that ansatz routes can reach high, Haar-like coverage before EPD and gradient variance collapse, showing that coverage and BP activation are distinct crossover events. The EP/EPD framework thus breaks the apparent one-dimensional expressibility–trainability tradeoff into a practical design rule: search for highly expressive PQCs in the window where coverage is high but BP-like homogenization has not yet erased trainable structure.
\end{abstract}

\maketitle

{\em Introduction.}---Parameterized quantum circuits (PQCs) are judged by two requirements that pull in opposite directions. PQCs must generate sufficiently rich state families for variational quantum algorithms and/or quantum machine learning~\cite{Peruzzo2014VQE,Farhi2014QAOA,Kandala2017HEA,Havlicek2019QML,Cerezo2021VQA}. PQCs must also retain gradients large enough to be estimated and used by a classical optimizer. The first requirement is commonly quantified through expressibility, namely the closeness of parameter-sampled states to Haar coverage, together with the average Meyer-Wallach entanglement generated by the circuit, which we call entangling power (EP)~\cite{Sim2019expressibility,Zanardi2000EP}. The second requirement is trainability: the circuit must not become so randomized that the cost landscape loses usable gradient information. Barren-plateau theory made this requirement sharp by proving that sufficiently random circuits can have exponentially small gradients~\cite{McClean2018barrenplateaus,Cerezo2021BP,Holmes2022expressibility}. These facts are often compressed into the slogan that increasing expressibility together with EP drives a circuit toward Haar-random structure and hence toward barren-plateau (BP) behavior.

This slogan is useful as a warning against blindly random circuits, but it is too coarse to serve as a design principle. A BP criterion tests whether the gradient-relevant moment has already randomized; it does not say what kind of coverage was gained, nor whether input-dependent structure remains before that randomization. The reason is threefold. First, expressibility is a property of a parameter ensemble, whereas EP is naturally a property of one fixed unitary tested on many product inputs. Second, EP is a global scalar, whereas the gradient of a particular parameter is a local object selected by a cost observable and an effective light cone. Third, a mean does not determine a fluctuation profile. These distinctions help explain why the descriptor-performance relations in classification learning studies can be nuanced: expressibility may correlate with learning accuracy, while the average EP can be much less predictive~\cite{Hubregtsen2021evaluation}.

In this Letter, we formulate the missing structure as a moment hierarchy. For that, we extend the entangling-power deviation (EPD) viewpoint, previously developed for bipartite gates~\cite{ChoBang2026EPD}, to $n$-qubit PQCs. The result is a two-dial design principle: First, a coverage dial records how far the ansatz has moved toward high mean entangling strength or Haar-like coverage. Second, a variability dial records whether the output entanglement generated from product inputs remains input-dependent. We show analytically that these dials are distinct, prove that equal mean EP does not imply equal trainability, and show that useful ansatz routes can reach high coverage before EPD and gradient variance collapse.

Our central claim is that the aforementioned slogan, i.e., the expressibility--vs--trainability tradeoff, is not a hard one-dimensional obstruction: high coverage can be reached before the BP condition is locally activated. In this window, EP records that the ansatz has become sufficiently nonlocal, while EPD records that product-input responses have not homogenized. Thus, our EP/EPD framework identifies high-coverage, BP-free candidates by separating mean coverage from the collapse of input-dependent variability. It provides a constructive screening principle for PQC design: maximize coverage without crossing into the low-EPD regime where gradient variance is expected to collapse.

{\em Fixed-circuit descriptors.}---We start by separating the two ensemble levels that are often conflated.

\begin{definition}[Ensemble levels]
Let $\hU_L(\bm{\theta})$ be a depth-$L$ PQC on $\mathcal{H}_n=(\mathbb{C}^2)^{\otimes n}$. The parameter ensemble $\nu_L$ samples circuit instances $\hU_L(\bm{\theta})$ and underlies expressibility. The product-input ensemble
\begin{eqnarray}
\mu_{\prodens} = \bigotimes_{j=1}^n\mu_{\Haar}^{(1)},
\quad
\ket{\psi} = \bigotimes_{j=1}^n\ket{\psi_j},
\label{eq:product_ensemble_main}
\end{eqnarray}
with independent one-qubit Haar states $\ket{\psi_j}$, probes a single fixed circuit. These two ensembles are not interchangeable.
\end{definition}

For a pure state $\ket{\phi}$, we use the Meyer-Wallach entanglement~\cite{MeyerWallach2002}
\begin{eqnarray}
Q(\ket{\phi}) = \frac{2}{n}\sum_{j=1}^n\left(1-\Tr\hat{\rho}_j^2\right),
\quad
\hat{\rho}_j = \Tr_{\bar j}\ketbra{\phi}{\phi}.
\label{eq:Q_def_main}
\end{eqnarray}
It satisfies $0\le Q\le1$ and vanishes on fully product states. 

We now define the two fixed-circuit quantities used throughout this Letter.

\begin{definition}[Fixed-circuit EP and EPD]
For a fixed unitary $\hU \in U(2^n)$,
\begin{eqnarray}
\EP(\hU) &=& \mathbb E_{\psi\sim\mu_{\prodens}}\bigl[Q(\hU\ket{\psi})\bigr], \nonumber \\
\EPD(\hU) &=& \sqrt{\Var_{\psi\sim\mu_{\prodens}}\bigl[Q(\hU\ket{\psi})\bigr]} .
\label{eq:EP_EPD_def_main}
\end{eqnarray}
EP is the mean entangling strength of a circuit instance. EPD is the standard deviation of the same output-entanglement distribution, and hence measures the input-state sensitivity~\cite{ChoBang2026EPD}. Both are invariant under local pre- and post-unitaries.
\end{definition}

The fixed-circuit viewpoint has two practical advantages. First, it separates the nonlocal action of a circuit block from the arbitrary local basis in which the block is written. Second, it lets us ask a question: does the same circuit entangle all product inputs in nearly the same way, or does it preserve strong input dependence? This second question is invisible to the mean but is directly relevant to the trainability, because the gradients are also response functions of a circuit with respect to perturbations of an input, a parameter, or an observable.

The distinction between the two descriptors becomes exact in copy space. Let $S_j$ swap $j$-qubit between two copies and let us define $\hat{K}_Q = \frac{2}{n} \sum_{j=1}^n (\hat{\id} - \hat{S}_j)$. Then, $Q(\ket{\phi})=\Tr[\hat{K}_Q \hat{\rho}^{\otimes2}]$. On four copies, we define $\hat{T}_Q = \hat{K}_Q^{(12)}\hat{K}_Q^{(34)}$, where the superscripts denote copy pairs. For the fixed-circuit moment
\begin{eqnarray}
\hat{M}_{\hU}^{(t)}=\mathbb E_{\psi \sim \mu_{\prodens}} \left[\left(\hU\ketbra{\psi}{\psi}\hU^\dagger\right)^{\otimes t}\right],\label{eq:MU_t_main}
\end{eqnarray}
we have the following structural identity.

\begin{theorem}[Moment hierarchy]
\label{thm:1}
For every fixed circuit $\hU$,
\begin{eqnarray}
\EP(\hU) &=& \Tr\bigl[\hat{K}_Q \hat{M}_{\hU}^{(2)}\bigr], \nonumber \\
\EPD(\hU)^2 &=& \Tr\bigl[\hat{T}_Q \hat{M}_{\hU}^{(4)}\bigr] - \EP(\hU)^2.
\label{eq:main_hierarchy_theorem}
\end{eqnarray}
Thus, EP is a global two-copy scalar, while EPD is a global four-copy fluctuation descriptor.
\end{theorem}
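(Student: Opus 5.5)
The plan is to derive both identities from the swap-trick representation $Q(\ket{\phi})=\Tr[\hat{K}_Q\hat{\rho}^{\otimes2}]$ stated just before Theorem~\ref{thm:1}, combined with linearity of trace and expectation. I would first verify that representation. For a pure state $\hat\rho=\ketbra{\phi}{\phi}$ on copies $1,2$, one has $\Tr[\hat S_j\hat\rho^{\otimes 2}]=\Tr\hat\rho_j^2$, since the partial swap of qubit $j$ contracts the two copies' $j$-th factors crosswise and traces out the rest. Also $\Tr[\hid\,\hat\rho^{\otimes2}]=1$. Summing over $j$ with prefactor $2/n$ reproduces Eq.~\eqref{eq:Q_def_main}.

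For the EP identity, set $\hat\rho_\psi=\hU\ketbra{\psi}{\psi}\hU^\dagger$. By the definition in Eq.~\eqref{eq:EP_EPD_def_main},
\[
\EP(\hU)=\mathbb E_\psi\,\Tr[\hat K_Q\hat\rho_\psi^{\otimes2}].
\]
The integrand is bounded and the map $X\mapsto\Tr[\hat K_QX]$ is linear. We can therefore exchange the expectation with the trace, which gives $\Tr[\hat K_Q\hat M^{(2)}_{\hU}]$ by Eq.~\eqref{eq:MU_t_main}. Integrability is immediate because the product-Haar measure is a probability measure on a compact set and the integrand is continuous.

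For the EPD identity, I would write $\Var[Q]=\mathbb E[Q^2]-(\mathbb E Q)^2$, so it suffices to show $\mathbb E_\psi[Q(\hU\ket\psi)^2]=\Tr[\hat T_Q\hat M^{(4)}_{\hU}]$. The key step is the factorization of a product of traces into one trace on four copies:
\[
Q^2=\Tr[\hat K_Q\hat\rho^{\otimes2}]\,\Tr[\hat K_Q\hat\rho^{\otimes2}]=\Tr\bigl[(\hat K_Q^{(12)}\otimes\hat K_Q^{(34)})\,\hat\rho^{\otimes4}\bigr].
\]
This uses $\Tr[A]\Tr[B]=\Tr[A\otimes B]$, with $\hat\rho^{\otimes4}=\hat\rho^{\otimes2}\otimes\hat\rho^{\otimes2}$ ordered as copies $(12)(34)$. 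Here I would note that $\hat T_Q=\hat K_Q^{(12)}\hat K_Q^{(34)}$ is exactly this tensor product, since the two factors act on disjoint copy pairs and hence commute. Taking the expectation and again exchanging it with the linear trace yields $\Tr[\hat T_Q\hat M^{(4)}_{\hU}]$. Subtracting $\EP(\hU)^2$ from the first part then completes the proof.

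There is no real obstacle here. The only care needed is bookkeeping of the copy labels, so that the four-copy operator and $\hat M^{(4)}$ use the same ordering, and confirming the partial-swap identity for $\Tr\hat\rho_j^2$. The interpretive claim (a two-copy mean versus a four-copy fluctuation) follows directly from which moment operator appears in each identity.
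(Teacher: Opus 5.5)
Your proposal is correct and follows the paper's own proof. The paper also uses the partial-swap identity to write $Q=\Tr[\hat{K}_Q\hat{\rho}^{\otimes 2}]$, and squares this on the disjoint copy pairs $(12)$ and $(34)$ to get $Q^2=\Tr[\hat{T}_Q\hat{\rho}^{\otimes 4}]$. It then averages over Haar-product inputs after applying $\hU$, exchanging expectation and trace, and subtracts the squared mean. The justifications you add for the interchange and for $\hat{T}_Q=\hat{K}_Q\otimes\hat{K}_Q$ under the $(12)|(34)$ grouping are points the paper uses implicitly or states in passing.
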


\begin{proof}[Proof sketch]
The local swap identity $\Tr[\hat{S}_j (\hat{A} \otimes \hat{B})] = \Tr[(\Tr_{\bar j} \hat{A}) (\Tr_{\bar j} \hat{B})]$ gives Eq.~(\ref{eq:Q_def_main}) as $\Tr[\hat{K}_Q \hat{\rho}^{\otimes 2}]$. Squaring the same two-copy expectation on the two disjoint copy pairs $(1,2)$ and $(3,4)$ gives $Q(\ket{\phi})^2 = \Tr[\hat{T}_Q \hat{\rho}^{\otimes4}]$. Averaging these two identities over Haar-product inputs after applying $\hU$ yields Eq.~(\ref{eq:main_hierarchy_theorem}). The full swap derivation and norm bounds are given in Sec.~II of the \SM.
\end{proof}

\begin{proposition}[Spectral obstruction]
\label{prop:1}
For every $t \ge 1$, $\hat{M}_{\hU}^{(t)} = \hU^{\otimes t} \hat{M}_{\prodens}^{(t)} \hU^{\dagger\otimes t}$. Hence, the spectrum of $\hat{M}_{\hU}^{(t)}$ is independent of $\hU$. For $n > 1$ and $t \ge 2$, it differs from the spectrum of the full Haar pure-state moment.
\end{proposition}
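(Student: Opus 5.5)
The first claim follows from linearity. Define $\hat{M}_{\prodens}^{(t)} = \mathbb E_{\psi\sim\mu_{\prodens}}[\ketbra{\psi}{\psi}^{\otimes t}]$. Since $(\hU\ketbra{\psi}{\psi}\hU^\dagger)^{\otimes t} = \hU^{\otimes t}\ketbra{\psi}{\psi}^{\otimes t}\hU^{\dagger\otimes t}$ and the expectation commutes with the fixed conjugation, we get $\hat{M}_{\hU}^{(t)} = \hU^{\otimes t}\hat{M}_{\prodens}^{(t)}\hU^{\dagger\otimes t}$. Unitary conjugation preserves the spectrum, so the spectrum of $\hat{M}_{\hU}^{(t)}$ coincides with that of $\hat{M}_{\prodens}^{(t)}$ for every $\hU$.

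For the second claim we compute both spectra explicitly. We regroup the $t$ copies of $n$ qubits as $n$ factors of $(\mathbb{C}^2)^{\otimes t}$, one per qubit site. Independence of the single-qubit Haar states gives
\begin{equation}
\hat{M}_{\prodens}^{(t)} \cong \bigotimes_{j=1}^n \mathbb E_{\psi_j}\bigl[\ketbra{\psi_j}{\psi_j}^{\otimes t}\bigr] = \bigotimes_{j=1}^n \frac{\hat{P}_{\mathrm{sym}}^{(2,t)}}{t+1},
\end{equation}
using the standard Haar moment $\mathbb E[\ketbra{\psi}{\psi}^{\otimes t}] = \hat{P}_{\mathrm{sym}}/\binom{d+t-1}{t}$ with $d=2$. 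Each factor $\hat{P}_{\mathrm{sym}}^{(2,t)}/(t+1)$ is a projector onto a $(t+1)$-dimensional space, scaled by $1/(t+1)$. Hence $\hat{M}_{\prodens}^{(t)}$ has the single nonzero eigenvalue $(t+1)^{-n}$ with multiplicity $(t+1)^n$.

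The full Haar moment on $\mathbb{C}^{2^n}$ is $\hat{P}_{\mathrm{sym}}^{(2^n,t)}/\binom{2^n+t-1}{t}$. Its nonzero eigenvalue is $\binom{2^n+t-1}{t}^{-1}$, with multiplicity equal to the rank $\binom{2^n+t-1}{t}$. The two spectra are therefore equal only if the ranks agree, that is, only if $(t+1)^n = \binom{2^n+t-1}{t}$.

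The remaining step is to show this equality fails for $n>1$ and $t\ge2$. The rank comparison is a strict containment: the span of $\ket{\psi}^{\otimes t}$ over product $\ket{\psi}$ is $\bigotimes_j \mathrm{Sym}^t(\mathbb{C}^2)$, which lies inside $\mathrm{Sym}^t(\mathbb{C}^{2^n})$. To show the containment is strict, I would exhibit a symmetric vector outside it, for example $\ket{\phi}^{\otimes t}$ with $\ket{\phi}$ a Bell pair on qubits $1,2$ tensored with product states elsewhere.

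Alternatively, the dimensions can be compared directly. $\binom{2^n+t-1}{t}$ is increasing in $t$ faster than $(t+1)^n$; already at $t=2$ one has $2^{n-1}(2^n+1) > 3^n$ for $n\ge2$ (check: $10>9$ at $n=2$, then the ratio grows), and induction in $t$ via the ratio of consecutive terms settles $t\ge2$. The case $t=1$ is excluded because both moments equal $\id/2^n$.

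This final inequality is the only mild obstacle, and the explicit entangled symmetric vector avoids the numerics entirely. That vector is the route I would take.
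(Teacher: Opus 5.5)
Your proposal is correct and follows the paper's proof. Linearity gives the conjugation orbit, both reference moments are normalized projectors onto $\bigotimes_j\mathrm{Sym}^t(\mathbb{C}^2_j)$ and $\mathrm{Sym}^t(\mathcal{H}_n)$, and strict containment is certified by an explicit globally symmetric vector: your Bell-pair witness $\ket{\phi}^{\otimes t}$ works as well as the paper's symmetrized $\ket{110\cdots0}$ vector, since it is not invariant under swapping only the qubit-1 factors of two copies.
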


\begin{proof}[Proof sketch]
Linearity pulls $\hU^{\otimes t}$ outside the product-input expectation. The product-input moment is proportional to the projector onto the locally symmetric subspace $\bigotimes_j\mathrm{Sym}^t(\mathbb C^2_j)$, of dimension $(t+1)^n$. The full Haar moment is proportional to the projector onto $\mathrm{Sym}^t(\mathcal H_n)$, of dimension $\binom{2^n+t-1}{t}$. These subspaces are not equal for $n > 1, t \ge 2$; the detailed support comparison is given in Sec.~I of the \SM. Therefore, Haar is used below as a scalar entanglement benchmark and as a local light-cone benchmark, not as a global operator limit of fixed-circuit product-input moments.
\end{proof}

This obstruction is important for interpretation. When a deep parameterized circuit is called ``Haar-like'', the statement usually concerns a particular observable, fidelity distribution, or low-order local contraction. It cannot mean that one fixed unitary acting on Haar-product inputs literally produces the full Haar moment operator on the global Hilbert space.


{\em Gradient variance is local.}---Consider a cost
\begin{eqnarray}
C(\bm{\theta},\psi) = \Tr[\hat{O}\hat{\rho}_\psi(\bm{\theta})],
\end{eqnarray}
where $\hat{\rho}_\psi(\bm{\theta})=\hU(\bm{\theta})\ketbra{\psi}{\psi}\hU(\bm{\theta})^\dagger$ and $\hat{O} = \hat{O}^\dagger$ is the bounded cost observable whose expectation value defines the optimization objective. Here and below, $\bm{\theta}$ denotes the full parameter vector, whereas $\theta_i$ denotes its $i$th component. Suppose that $\theta_i$ occurs in the elementary gate $\exp(-i \theta_i \hat{G}_i)$ with Hermitian generator $\hat{G}_i$. We decompose the circuit as
\begin{eqnarray}
\hU(\bm{\theta})=\hU_i^{\rm post}\exp(-i \theta_i \hat{G}_i)\hU_i^{\rm pre},
\label{eq:prepost_decomposition_main}
\end{eqnarray}
where $\hU_i^{\rm pre}$ collects all gates before the parameterized gate and $\hU_i^{\rm post}$ collects all gates after it. The output-picture generator is therefore $\hat{H}_i^{\rm out} = \hU_i^{\rm post} \hat{G}_i \hU_i^{{\rm post},\dagger}$, so
\begin{eqnarray}
\partial_{\theta_i}\hat{\rho}_\psi = - i \bigl[ \hat{H}_i^{\rm out}, \hat{\rho}_\psi \bigr].
\label{eq:state_derivative_main}
\end{eqnarray}
Only gates after the parameter, contained in $\hU_i^{\rm post}$, dress the generator; the cost observable then selects the part of that dressed operator that can contribute to the gradient.

\begin{definition}[Effective gradient observable]
For parameter $\theta_i$ and cost observable $\hat{O}$, define
\begin{eqnarray}
\hat{B}_i = i \bigl[ \hat{H}_i^{\rm out}, \hat{O} \bigr], 
\quad 
\LC(i) = \mathrm{supp}(\hat{B}_i),
\label{eq:Bi_def_main}
\end{eqnarray}
where $\hat{B}_i$ is the effective gradient observable; it is Hermitian and traceless because it is $i$ times a commutator of Hermitian operators. The set $\LC(i)$ is its effective light cone. Let $r_i=\abs{\LC(i)}$, $d_i=2^{r_i}$, and let $\hat{\sigma}_\psi^{(i)}=\Tr_{\overline{\LC(i)}}[\hat{\rho}_\psi]$ be the reduced output state on $\LC(i)$. Here, $\mathrm{supp}(\hat{X})$ denotes the smallest set of qubits outside which $\hat{X}$ acts trivially, i.e., $\hat{X}=\hat{X}_{\mathrm{supp}}\otimes\hat{\id}_{\overline{\mathrm{supp}}}$ up to identity factors.
\end{definition}

\begin{proposition}[Local two-copy gradient criterion]
\label{prop:2}
For Haar-product inputs,
\begin{eqnarray}
g_i(\psi) = \partial_{\theta_i}C = \Tr[\hat{B}_i \hat{\sigma}_\psi^{(i)}],
\quad
\mathbb E_\psi[g_i(\psi)]=0,
\end{eqnarray}
and
\begin{eqnarray}
\Var_\psi [g_i(\psi)] &=&\Tr\left[ \left( \hat{B}_i \otimes \hat{B}_i \right) \hat{M}_{i,\LC}^{(2)} \right], \nonumber \\
M_{i,\LC}^{(2)} &=&\mathbb E_\psi \left[ \bigl( \hat{\sigma}_\psi^{(i)} \bigr)^{\otimes 2} \right].
\label{eq:grad_var_local_main}
\end{eqnarray}
The Haar benchmark on the light cone is
\begin{eqnarray}
\Var_{\Haar,r_i}[g_i] = \frac{\Tr\bigl( \hat{B}_i^2 \bigr)}{d_i(d_i+1)} \le \frac{4\|{\hat{G}_i}\|_\infty^2 \|{\hat{O}}\|_\infty^2}{2^{r_i}+1}.
\label{eq:haar_scale_main}
\end{eqnarray}
\end{proposition}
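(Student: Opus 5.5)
The plan is to verify the three claims of Proposition~\ref{prop:2} in order: the gradient formula, the vanishing mean, and the variance identity. I will then evaluate the Haar benchmark on the light cone with the standard two-copy Weingarten formula. Everything reduces to cyclicity of the trace, the first and second moments of the input ensembles, and an operator-norm bound on a commutator.

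\emph{Gradient formula.} Starting from Eq.~(\ref{eq:state_derivative_main}), I write $g_i=\Tr[\hO\,\partial_{\theta_i}\hrho_\psi]=-i\Tr\bigl(\hO[\hH_i^{\rm out},\hrho_\psi]\bigr)$. Cyclicity gives $-i\Tr\bigl([\hO,\hH_i^{\rm out}]\hrho_\psi\bigr)=\Tr\bigl(i[\hH_i^{\rm out},\hO]\hrho_\psi\bigr)=\Tr[\hB_i\hrho_\psi]$. By the definition of $\LC(i)$, $\hB_i=\hB_{i,\LC}\otimes\hid_{\overline{\LC(i)}}$. Taking the partial trace over the complement then yields $g_i=\Tr[\hB_{i,\LC}\hsigma^{(i)}_\psi]$. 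Throughout, I identify $\hB_i$ with its restriction to the light cone, and $\Tr(\hB_i^2)$ denotes the trace on the $d_i$-dimensional light-cone space.

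\emph{Mean and variance.} The first moment of the product-input ensemble is $\mathbb E_\psi\ketbra{\psi}{\psi}=\bigotimes_j\hid/2=\hid/2^n$. This is the $t=1$ case of Proposition~\ref{prop:1}, and it is unchanged by conjugation with $\hU$. Hence $\mathbb E_\psi g_i=\Tr(\hB_i)/2^n=0$, because $\hB_i$ is traceless. I need to check that tracelessness survives restriction to $\LC(i)$; this is immediate since $\Tr\hB_i=2^{n-r_i}\Tr\hB_{i,\LC}$. With zero mean, $\Var_\psi g_i=\mathbb E_\psi g_i^2$. Writing $g_i^2=\Tr[(\hB_i\otimes\hB_i)(\hsigma^{(i)}_\psi)^{\otimes2}]$ and exchanging expectation and trace gives Eq.~(\ref{eq:grad_var_local_main}).

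\emph{Haar benchmark.} Here I replace $\hsigma^{(i)}_\psi$ by a Haar-random pure state on the $d_i$-dimensional light cone. I then use $\mathbb E_{\Haar}\ketbra{\phi}{\phi}^{\otimes2}=(\hid+\hS)/[d_i(d_i+1)]$, where $\hS$ is the swap of the two copies. Since $\Tr[(\hB\otimes\hB)\hS]=\Tr\hB^2$ and $\Tr[\hB\otimes\hB]=(\Tr\hB)^2=0$, this gives $\Tr(\hB_i^2)/[d_i(d_i+1)]$. For the inequality, $\Tr(\hB_i^2)\le d_i\|\hB_i\|_\infty^2$. Moreover $\|\hB_i\|_\infty\le2\|\hH_i^{\rm out}\|_\infty\|\hO\|_\infty=2\|\hG_i\|_\infty\|\hO\|_\infty$, because conjugation by the unitary $\hU_i^{\rm post}$ preserves the operator norm. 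Combining these bounds gives $4\|\hG_i\|_\infty^2\|\hO\|_\infty^2/(d_i+1)$, and $d_i+1=2^{r_i}+1$.

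The only step I expect to need care is conceptual rather than computational. The Haar formula is a benchmark, not a consequence of the product-input ensemble: by Proposition~\ref{prop:1} the true moment $\hat M^{(2)}_{i,\LC}$ is generally not the Haar moment, and the reduced state $\hsigma^{(i)}_\psi$ is typically mixed. I will state explicitly that Eq.~(\ref{eq:haar_scale_main}) is the value obtained by substituting the light-cone Haar pure-state moment. I will also note that the norm bound is stated for the light-cone restriction of $\hB_i$, whose operator norm equals that of $\hB_i$ itself.
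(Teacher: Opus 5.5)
Your proposal is correct and takes essentially the same route as the paper's proof in the Supplemental Material: cyclicity gives $g_i=\Tr[\hat{B}_i\hat{\rho}_\psi]$, and a partial trace reduces this to $\LC(i)$; the maximally mixed first moment together with tracelessness of $\hat{B}_i$ gives the zero mean, and squaring the linear functional gives the variance. For the Haar benchmark you likewise use the $(\hat{\id}+\hat{S})/[d_i(d_i+1)]$ second moment with $\Tr(\hat{B}_i^2)\le d_i\|\hat{B}_i\|_\infty^2$ and $\|\hat{B}_i\|_\infty\le 2\|\hat{G}_i\|_\infty\|\hat{O}\|_\infty$. Your extra remarks, that restricting $\hat{B}_i$ to $\LC(i)$ preserves tracelessness and operator norm and that the Haar value is a benchmark rather than a consequence of the product-input moment, match the paper's own caveats.
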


\begin{proof}[Proof sketch]
Eq.~(\ref{eq:state_derivative_main}) gives $g_i=\Tr[\hat{B}_i \hat{\rho}_\psi]$. Since $\hat{B}_i$ is supported only on $\LC(i)$, this is the reduced-state contraction above. Squaring the linear functional gives $g_i^2=\Tr\bigl[(\hat{B}_i \otimes \hat{B}_i)\bigl(\hat{\sigma}_\psi^{(i)}\bigr)^{\otimes2}\bigr]$. Averaging gives Eq.~(\ref{eq:grad_var_local_main}). The mean vanishes because the product-input first moment is maximally mixed and $\Tr\hat{B}_i=0$. The Haar expression follows from the standard second moment $(\hat{\id}+\hat{S})/[d_i(d_i+1)]$ for a traceless observable. Details, including the dressed-generator construction, light-cone reduction, and local moment-gap form, are given in Secs.~III~A--III~C of the \SM.
\end{proof}

Proposition~\ref{prop:1} and Theorem~\ref{thm:1} explain why a one-number global mean descriptor cannot determine trainability. EP is a single global contraction of $\hat{M}^{(2)}_{\hU}$. A gradient variance is a different contraction, local to $\LC(i)$ and depending on $i$ and $\hat{O}$. EPD is not the trainability invariant either, but because it probes fourth-order fluctuation structure, it contains information discarded by EP.

Operationally, Eq.~(\ref{eq:grad_var_local_main}) says that the plateau question must be asked parameter by parameter. If $\hat{O}$ is local and the forward cone of $\hat{G}_i$ does not reach it, then $\hat{B}_i=0$ and the gradient vanishes for a purely causal reason. If the cone is small, the Haar scale in Eq.~(\ref{eq:haar_scale_main}) need not be exponentially small in the total system size. If the cone grows to $\Theta(n)$ and the local two-copy contraction becomes Haar-like, the variance is exponentially suppressed. Thus, the relevant randomization is not global in the first instance; it is the randomization visible on the effective subsystem selected by the parameter and cost.


{\em Equal EP does not imply equal trainability.}---The hierarchy leads to a non-identifiability statement. Let $\mathcal{W}_{\bm{\theta}}(\hat{E})$ denote a fixed trainable wrapper into which a fixed entangling block $\hat{E}$ is inserted. The parameter locations, generators, input ensemble, and cost observable are held fixed; only the entangling block $\hat{E}$ is changed. For such a comparison, define the parameter-resolved gradient-variance profile
\begin{eqnarray}
\Gamma_{\hat{E}}(\bm{\theta}) = \left( \Var_\psi[g_1^{(\hat{E})}(\psi,\bm{\theta})],\ldots, \Var_\psi[g_p^{(\hat{E})}(\psi,\bm{\theta})] \right),
\label{eq:gamma_profile_main}
\end{eqnarray}
where $g_k^{(\hat{E})}$ is the gradient of the common cost with respect to the $k$th parameter after inserting $\hat{E}$.

\begin{theorem}[Non-identifiability from mean EP]
\label{thm:2}
There exists a finite-dimensional PQC architecture and a fixed cost observable for which the trainability profile does not factor through the scalar mean descriptor EP. Equivalently, there are two entangling blocks $\hat{E}$ and $\hat{E}'$ in the same wrapper such that
\begin{eqnarray}
\EP(\mathcal W_{\bm{\theta}}(\hat{E}))=\EP(\mathcal W_{\bm{\theta}}(\hat{E}')) \quad \text{for all}~\bm{\theta},
\label{eq:equal_ep_formal_main}
\end{eqnarray}
whereas, for the same parameter vector (indeed for all $\bm{\theta}$ in the constructive witness),
\begin{eqnarray}
\Gamma_{\hat{E}}(\bm{\theta}) \ne \Gamma_{\hat{E}'}(\bm{\theta}).
\label{eq:unequal_gamma_formal_main}
\end{eqnarray}
Consequently, no function $F:\mathbb R\to\mathbb R^p$ can satisfy $\Gamma_{\hat{E}}(\bm{\theta}) = F(\EP(\mathcal{W}_{\bm{\theta}}(\hat{E})))$ on this architecture class. The witness can also be chosen so that the two circuits have different EPDs.
\end{theorem}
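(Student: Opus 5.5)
The approach is to build an explicit small witness. The idea is to exploit the fact that EP is invariant under local pre- and post-unitaries while gradient variance is not. I would first try the simplest nontrivial case, $n=2$ (or $n=3$ if a spectator qubit is needed).

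Take the wrapper $\mathcal W_{\bm\theta}(\hat E)=\hat E\,\hat V(\bm\theta)$, where $\hat V(\bm\theta)=\bigotimes_j \exp(-i\theta_j \hat Z_j/2)$ is a layer of single-qubit rotations applied before the block. Since $\mu_{\prodens}$ is invariant under local unitaries, $\EP(\mathcal W_{\bm\theta}(\hat E))=\EP(\hat E)$ and $\EPD(\mathcal W_{\bm\theta}(\hat E))=\EPD(\hat E)$ for every $\bm\theta$. The parameter dependence of both descriptors therefore drops out, and Eq.~(\ref{eq:equal_ep_formal_main}) reduces to the single condition $\EP(\hat E)=\EP(\hat E')$.

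Next I choose two blocks with equal EP but different gradient behaviour. The cleanest choice for gradients is $\hat E'=(\hat A\otimes\hat B)\hat E$ with a local correction, so EP is equal automatically. In the output picture of Eq.~(\ref{eq:state_derivative_main}), the dressed generator is $\hat H^{\rm out}_i=\hat E\hat G_i\hat E^\dagger$, so $\hat B_i=i[\hat E\hat G_i\hat E^\dagger,\hat O]$. Replacing $\hat E$ by $(\hat A\otimes\hat B)\hat E$ is equivalent to keeping $\hat E$ and using the rotated observable $(\hat A\otimes\hat B)^\dagger\hat O(\hat A\otimes\hat B)$.

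Proposition~\ref{prop:2} then gives the variance explicitly. The product-input second moment on a qubit is $(\hid+\hS)/6$, so $\hat M^{(2)}_{i,\LC}$ is computable. For concrete choices, say $\hat E=\mathrm{CNOT}$, $\hat O=\hat Z_1$, and $\hat A=$ Hadamard, the variances $\Var_\psi[g_i]$ can be evaluated in closed form as trigonometric polynomials in $\bm\theta$. The plan is to check that they differ for every $\bm\theta$. Ideally one profile is $\bm\theta$-independent and nonzero while the other vanishes identically by a causal light-cone reason; for example, $\hat B_i=0$ because $\hat G_i$ commutes with the dressed observable.

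For the claim about different EPDs, a local correction is not enough, since it leaves EPD unchanged. I would instead pick two genuinely inequivalent blocks with equal EP and different EPD. The bipartite results of Ref.~\cite{ChoBang2026EPD} suggest using the canonical Cartan form $\exp[i(c_1\hat X\hat X+c_2\hat Y\hat Y+c_3\hat Z\hat Z)]$. EP depends on $(c_1,c_2,c_3)$ through a two-copy polynomial, while EPD depends on a four-copy one. One can then pick two points on an EP level set, for instance $(c,0,0)$ and $(a,b,0)$ tuned by the intermediate value theorem, for which Theorem~\ref{thm:1} yields different fourth moments. Gradient variance for $\hat O=\hat Z_1$ and the $\hat Z$-rotation wrapper can be compared the same way as before.

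The final clause of the theorem is immediate. If $F$ existed, equal EP would force $\Gamma_{\hat E}(\bm\theta)=\Gamma_{\hat E'}(\bm\theta)$, which contradicts Eq.~(\ref{eq:unequal_gamma_formal_main}).

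The main obstacle is the EPD-distinct variant. It requires evaluating four-copy Haar-product integrals for the Cartan family and checking both inequalities at the same pair of parameters. Besides EPD differing, the gradient variances must differ for all $\bm\theta$, not just generically. I would handle this by choosing the wrapper so that the variances are $\bm\theta$-independent: put the rotations before the block and average over inputs, which absorbs the local pre-rotations. It would then suffice to compare two numbers.
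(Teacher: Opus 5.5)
\textbf{Main claim: correct, by a different route.} Your main witness works, and it is genuinely different from the paper's. Take the wrapper $\hE\,\hat V(\bm\theta)$ with $\hO=\hZ_1$, $\hE=\mathrm{CNOT}_{1\to2}$, and $\hE'=(\hat A\otimes\hid)\hE$ with $\hat A$ the Hadamard. The Heisenberg-picture observable $\hE^\dagger\hZ_1\hE=\hZ_1$ commutes with both generators, so $\Gamma_{\hE}\equiv(0,0)$. By contrast, $(\hE')^\dagger\hZ_1\hE'=\hX_1\hX_2$ gives effective input observables $-\hY_1\hX_2$ and $-\hX_1\hY_2$. By local invariance of $\mu_{\prodens}$ and $\mathbb E[u_a^2]=1/3$, this yields $\Gamma_{\hE'}\equiv(1/9,1/9)$ for every $\bm\theta$, and EP equality is automatic.

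The paper instead inserts two \emph{locally inequivalent} Cartan representatives, $\hE_{\mathrm C}=e^{-i\frac{\pi}{4}\hX_1\hX_2}$ and $\hE_{\mathrm B}=e^{-i\frac{\pi}{4}\hX_1\hX_2-i\frac{\pi}{8}\hY_1\hY_2}$, into a one-parameter $\hZ_1$-rotation wrapper with $\hO=\hid\otimes\hZ_2$. It obtains $1/9$ versus $1/18$ from the anticommuting-Pauli conjugation lemma.

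Your construction is more elementary and proves slightly more. Since $Q(\hE'\ket{\psi})=Q(\hE\ket{\psi})$ pointwise, $\Gamma$ fails to factor through any functional of the output-entanglement distribution, $(\EP,\EPD)$ included. For exactly that reason it cannot give the final clause, as you note. The paper's single pair delivers equal EP, different EPD and different $\Gamma$ at once, which is what supports the two-dial reading.

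\textbf{EPD clause: a genuine gap.} You assert that two EP-equal Cartan points have different EPD because Theorem~\ref{thm:1} ``yields different fourth moments.'' But Theorem~\ref{thm:1} only rewrites EPD as a four-copy contraction. It says nothing about whether that contraction is constant on EP level sets, and non-constancy is the entire content of the clause. The gradient comparison at your tuned pair is also left unchecked.

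What is missing is an explicit pair with computed values; no intermediate-value tuning is needed. The paper uses the CNOT class $(\pi/4,0,0)$ and the B class $(\pi/4,\pi/8,0)$, both with $\EP=4/9$. It imports $\EPD=4\sqrt{11}/45$ versus $2\sqrt{35}/45$ from Ref.~\cite{ChoBang2026EPD}, after the factor-$2$ rescaling between linear entropy and $Q$ at $n=2$.

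You can also check this directly. For $n=2$, $Q=1-\abs{\vec r_1}^2$, where $\vec r_1$ is the output Bloch vector of qubit~1. For input Bloch vectors $\vec m,\vec n$, the conjugation lemma gives $\vec r_1=(m_x,\,-m_zn_x,\,m_yn_x)$ for $\hE_{\mathrm C}$ and $\vec r_1=\bigl((m_x+m_zn_y)/\sqrt2,\,-m_zn_x,\,(m_yn_x+n_z)/\sqrt2\bigr)$ for $\hE_{\mathrm B}$. The sphere moments $\mathbb E[u_a^4]=1/5$ and $\mathbb E[u_a^2u_b^2]=1/15$ then give equal means $4/9$ and variances $176/2025$ versus $140/2025$.

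In your own wrapper with $\hO=\hZ_1$, the same lemma gives $\hE_{\mathrm C}^\dagger\hZ_1\hE_{\mathrm C}=\hY_1\hX_2$ and $\hE_{\mathrm B}^\dagger\hZ_1\hE_{\mathrm B}=(\hY_1\hX_2+\hZ_2)/\sqrt2$. Hence $\Gamma=(1/9,1/9)$ versus $(1/18,1/18)$ for all $\bm\theta$, which closes the gap.
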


\begin{proof}[Proof sketch]
Equality of EP fixes only the single global contraction $\Tr[\hat{K}_Q \hat{M}^{(2)}_{\hU}]$. In contrast, Eq.~(\ref{eq:grad_var_local_main}) shows that each gradient variance is a parameter- and cost-selected local contraction $\Tr[(\hat{B}_i \otimes \hat{B}_i) \hat{M}_{i,\LC}^{(2)}]$. These contractions need not be determined by the EP contraction. The statement is made constructive in Sec.~IV~B of the \SM, where two locally inequivalent entangling blocks with identical Meyer-Wallach EP but different EPD are inserted into the same one-parameter wrapper and yield different gradient variances for the same cost. The resulting non-factorization through EP is formulated abstractly in Sec.~IV~C of the \SM.
\end{proof}

Theorem~\ref{thm:2} rules out the strongest possible mean-descriptor picture. Equal EP means equal average entangling strength over product inputs; it does not fix the width of the output-entanglement distribution, nor does it fix how the local two-copy moment is oriented relative to a chosen generator and cost observable. Thus, two circuits can be indistinguishable by the coverage while lying in different trainability regimes.

This is also the role of EPD. EPD is not itself a barren-plateau invariant, because trainability remains local and cost-dependent. However, EPD is a strictly richer fixed-circuit descriptor than EP in the sense relevant here: it detects fluctuation structure that the mean discards and that can accompany a change in an actual gradient variance. 


\begin{figure*}[t]
\centering
\includegraphics[width=0.92\textwidth]{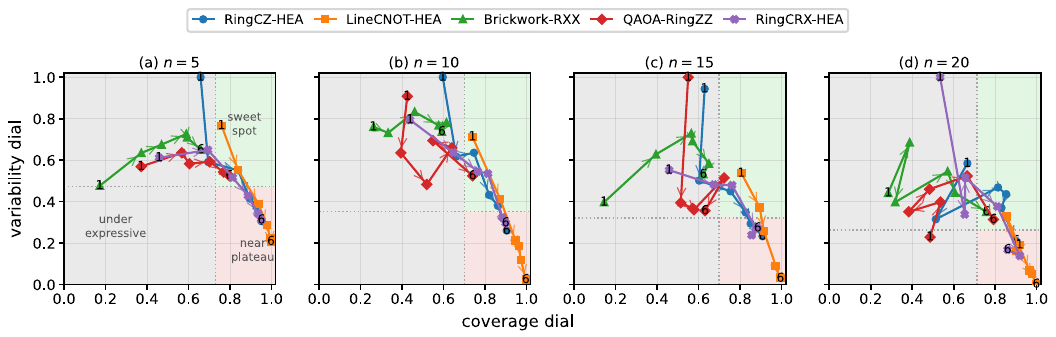}
\caption{Ansatz trajectories in the EP-EPD two-dial plane for $n=5,10,15,20$. Curves denote ansatz families; markers trace $L=1,\ldots,6$, with labels at the first and final depths and arrows in the increasing-$L$ direction. The axes are the coverage and variability dials in Eq.~(\ref{eq:two_dial_coords_main}). Gray, green, and red mark the underexpressive, sweet-spot, and near-plateau regimes in Eq.~(\ref{eq:sweet_region_main}), with thresholds fixed by Eq.~(\ref{eq:threshold_rule_main}).}
\label{fig:route}
\end{figure*}

{\em Two-dial routes for PQC design.}---We now use this hierarchy as a design protocol. For an ansatz family $A$, increasing depth $L$ traces a route whose two coordinates measure coverage and residual input-dependent variability. The objective is not to maximize depth, expressibility, or mean EP alone, but to locate a high-coverage, nonhomogenized window before the gradient-controlling local moments approach their Haar scale.

Fig.~\ref{fig:route} implements this EP/EPD scan for five PQC ansatz families: ring-CZ HEA, line-CNOT HEA, brickwork RXX, QAOA-style ring-ZZ, and ring-CRX HEA. For $n=5,10,15,20$ and $L=1,\ldots,6$, we estimate
\begin{eqnarray}
\overline{\EP}_{A,L}^{(n)} &=& \mathbb E_{\bm{\theta} \sim \nu_{A,L}^{(n)}} \bigl[ \EP(\hU_{A,L}^{(n)}(\bm{\theta})) \bigr], \nonumber \\
\overline{\EPD}_{A,L}^{(n)} &=&\mathbb E_{\bm{\theta} \sim \nu_{A,L}^{(n)}} \bigl[ \EPD(\hU_{A,L}^{(n)}(\bm{\theta})) \bigr].
\end{eqnarray}
Here, $\nu_{A,L}^{(n)}$ denotes the parameter-sampling measure on the depth-$L$ parameter space $\Theta_{A,L}^{(n)}$ for ansatz family $A$ and system size $n$; in the route scan it is the product distribution over the circuit angles used to sample fixed circuit instances, i.e., the family-specific version of the ensemble $\nu_L$ introduced above. The two plotted coordinates are
\begin{eqnarray}
x_{A,L}^{(n)}=\frac{\overline{\EP}_{A,L}^{(n)}}{Q_{\Haar}^{(n)}},
\qquad
y_{A,L}^{(n)}=\frac{\overline{\EPD}_{A,L}^{(n)}}{\max_{A,L}\overline{\EPD}_{A,L}^{(n)}},
\label{eq:two_dial_coords_main}
\end{eqnarray}
where $Q_{\Haar}^{(n)}=(2^n-2)/(2^n+1)$. Thus $x$ is a coverage dial and $y$ is the residual variability dial within the same $n$ panel. Projecting onto $x$ alone recovers the usual EP-vs-BP reading---larger mean entanglement as motion toward Haar-like behavior~\cite{McClean2018barrenplateaus,Cerezo2021BP,Holmes2022expressibility}---but cannot separate a useful high-coverage circuit from a homogenized one. The EPD dial resolves this ambiguity.

For reproducible region boundaries, we use the route data rather than hand-tuned constants. For a finite set $Z\subset[0,1]$, sort the distinct values and consider midpoint cuts $\tau$ between adjacent values. Each cut partitions $Z$ into $Z_{<\tau}$ and $Z_{\ge\tau}$, with means $\bar z_{<}$ and $\bar z_{\ge}$, and assigns
\begin{eqnarray}
W_Z(\tau) = \sum_{z \in Z_{<\tau}} \left( z - \bar{z}_{<} \right)^2 + \sum_{z \in Z_{\ge\tau}}\left( z - \bar{z}_{\ge} \right)^2.
\end{eqnarray}
We define $\tau_*(Z)=\arg\min_\tau W_Z(\tau)$, equivalently the Otsu/Fisher maximum-separation split~\cite{Otsu1979Threshold}, and set
\begin{eqnarray}
x_c^{(n)} = \tau_*\big(\{x_{A,L}^{(n)}\}_{A,L}\big),
\quad
y_c^{(n)} = \tau_*\big(Y_+^{(n)}\big),
\label{eq:threshold_rule_main}
\end{eqnarray}
where $Y_+^{(n)}=\{y_{A,L}^{(n)}:x_{A,L}^{(n)}\ge x_c^{(n)}\}$. The $y$ cut is conditional on coverage so that low variability is interpreted as plateau-like homogenization only after the circuit has passed the coverage cut. Details and thresholds are in Sec.~V-C of the \SM. We then define
\begin{eqnarray}
\mathcal S^{(n)}=\{(A,L):x_{A,L}^{(n)}\ge x_c^{(n)},
\;y_{A,L}^{(n)}\ge y_c^{(n)}\}.
\label{eq:sweet_region_main}
\end{eqnarray}
The green sweet spot in Fig.~\ref{fig:route} is $\mathcal S^{(n)}$: coverage has passed the cut while EPD remains open; gray is underexpressive, and red is high-coverage but homogenized. We rank candidates by
\begin{eqnarray}
s_{A,L}^{(n)}=x_{A,L}^{(n)}y_{A,L}^{(n)},
\label{eq:sweet_score_main}
\end{eqnarray}
which rewards simultaneous coverage and noncollapsed variability. Thresholds and best-score depths are in Table~I of Sec.~V~C of the \SM. Only the EPD dial separates the green high-coverage/nonhomogenized window from the red high-coverage/homogenized sector.

\begin{definition}[Two-dial trainability window]
For a chosen ansatz family and cost class, a practical trainability window is a depth interval in which $(A,L)\in\mathcal S^{(n)}$, or more generally in which the coverage is already high while $\overline{\EPD}_{A,L}^{(n)}$ has not yet collapsed relative to its early-depth scale.
\end{definition}

This definition is a screening principle, not a replacement for the local gradient criterion: the actual variance is still selected by the parameter light cone and cost observable in Eq.~(\ref{eq:grad_var_local_main}). The scan says where task-specific optimization is most worthwhile---away from the underexpressive gray and already homogenized red regions, and toward high-coverage points whose EPD dial remains open. 

A consistency check in Sec.~V-C of the \SM{} compares normalized EPD with representative gradient variance. The relation is not universal because Eq.~(\ref{eq:grad_var_local_main}) is local and cost-dependent, but low EPD broadly coincides with smaller gradients, especially at larger $n$.

\begin{figure}[t]
\centering
\includegraphics[width=0.46\textwidth]{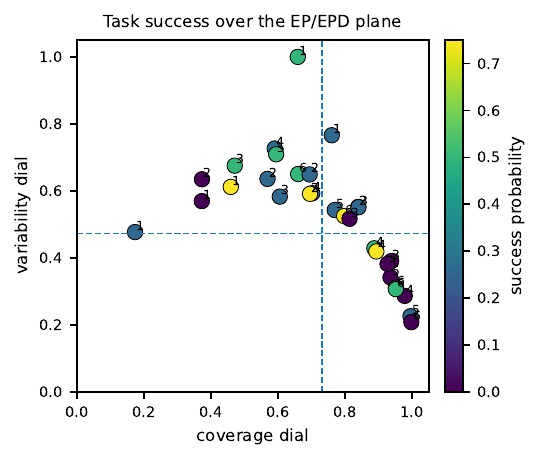}
\caption{Task-level validation on the $n=5$ route plane. Points are ansatz-depth pairs, color is empirical success probability, and dashed lines are $x_c,y_c$. High coverage with low EPD is less favorable for training.}
\label{fig:task_validation_scatter}
\end{figure}

\begin{figure}[t]
\centering
\includegraphics[width=0.46\textwidth]{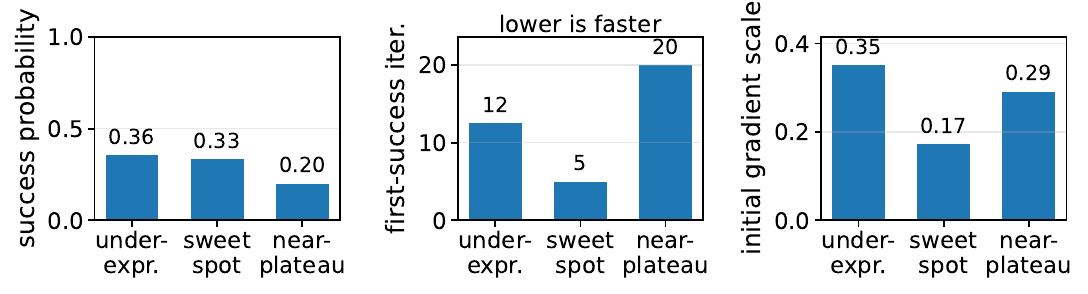}
\caption{Region-wise diagnostics: success probability, first-success iteration (lower is faster), and initial SPSA gradient scale. The sweet region is the fastest high-coverage region and outperforms near-plateau in success probability.}
\label{fig:task_validation_region}
\end{figure}

{\em Task-level validation.}---We finally test the descriptor-defined window in a descriptor-blind $n=5$ teacher--student QML benchmark. The EP/EPD coordinates, region labels, and sweet scores are fixed before training; all $30$ ansatz-depth candidates use the same embedding, readout, loss, and Adam-SPSA budget for two sweet-region teacher tasks. 

Figs.~\ref{fig:task_validation_scatter} and \ref{fig:task_validation_region} show that, within the high-coverage subset, the sweet region outperforms the low-EPD near-plateau sector in accuracy, loss, success probability, and first-success time, reaching the threshold after a median of $5$ iterations versus $20$ near the plateau. Thus, the sweet score can be a first-pass design rule. Full protocol, result tables, and analysis are in Sec.~VI of the \SM.

{\em Discussion.}---We have introduced an EP/EPD framework that separates three objects often compressed into a single expressibility--trainability narrative: EP is a \emph{global two-copy mean}, EPD is a \emph{global four-copy fluctuation} descriptor, and gradient variance is a \emph{local two-copy contraction} selected by a parameter light cone and cost observable. This hierarchy proves that equal EP does not imply equal EPD or equal gradient variance, and Fig.~\ref{fig:route} shows that Haar-like coverage and BP-like homogenization are distinct finite-depth crossovers.

The implication is that expressibility-vs-trainability is not a hard one-dimensional tradeoff. EP reports coverage, but only EPD tells whether high-coverage circuits still retain input-dependent variability. The EP/EPD scan therefore turns the usual EP-vs-BP tension into an ansatz-design rule: choose blocks, connectivities, and depth schedules that move far enough along the coverage dial while keeping the variability dial open. This matters for variational algorithms, hybrid heuristics, and quantum machine learning, where useful coverage must coexist with finite-shot gradients.

\begin{acknowledgments}
{Acknowledgments.}---This work was supported by the Ministry of Science and ICT/National Research Foundation of Korea (RS-2023-NR119924, RS-2024-00432214, RS-2025-03532992, and RS-2025-18362970), the IITP grant funded by the Korean government (RS-2019-II190003, RS-2026-25519864), and the Korean ARPA-H Project through KHIDI, funded by the Ministry of Health \& Welfare, Republic of Korea (RS-2025-25456722). We thank the Yonsei University Quantum Computing Project Group for support and access to Quantum System One.
\end{acknowledgments}

%


\clearpage
\newpage
\onecolumngrid

\makeatletter
\let\addcontentsline\origaddcontentsline
\makeatother

\part{Supplemental Material for ``Breaking the One-Dimensional Expressibility–Trainability Tradeoff''}

\tableofcontents

\section{Ensembles, fixed-circuit descriptors, and the spectral obstruction}\label{sec:setup}

This Supplemental Material follows the notation and proof of the original manuscript. Its purpose is threefold. First, it gives the detailed derivations behind the $2$-copy/$4$-copy hierarchy for fixed-circuit entangling power (EP)~\cite{SM-Zanardi2000EP} and entangling-power deviation (EPD)~\cite{SM-ChoBang2026EPD}. Second, it gives the local light-cone proof of the gradient-variance criterion used in the main manuscript. Third, it records the numerical protocols, supplementary diagnostics, and numerical analysis.

A useful way to read this Supplemental Material is to keep two orderings distinct. The first is the \emph{copy order} of the moment: EP is determined by a two-copy contraction, EPD by a four-copy contraction, and the gradient variance by a local two-copy contraction. The second is the \emph{ensemble order}: one may first average over product inputs for a fixed circuit and only then average over circuit parameters, or one may form Sim-style descriptors directly from parameter-sampled output states. These operations do not define the same object. The main manuscript uses this distinction to argue that a one-dimensional expressibility narrative is incomplete, while the present document supplies the algebra behind that statement.

Throughout, the $n$-qubit Hilbert space is
\begin{eqnarray}
\mathcal{H}_n=(\mathbb{C}^2)^{\otimes n},
\label{eq:sm_Hn}
\end{eqnarray}
and a depth-$L$ parameterized circuit is denoted by
\begin{eqnarray}
\hU_L(\bm{\theta})\in U(2^n),
\qquad \bm{\theta}\in\Theta_L\subset\mathbb R^{p_L}.
\label{eq:sm_pqc}
\end{eqnarray}
Here, $\bm{\theta}=(\theta_1,\ldots,\theta_{p_L})$ is a single parameter vector whose $p_L$ scalar components are the independent trainable parameters of the depth-$L$ ansatz. The number $p_L$ may depend on the depth, and $\Theta_L$ denotes the allowed joint domain of these parameters within the parameter space $\mathbb R^{p_L}$. Two randomness levels are used. The parameter ensemble $\nu_L$ samples a circuit instance and underlies Sim-style expressibility~\cite{SM-Sim2019expressibility}. The product-input ensemble used for fixed-circuit EP and EPD is instead
\begin{eqnarray}
\mu_{\mathrm{prod}}=\bigotimes_{j=1}^n\mu_{\mathrm H}^{(1)},
\qquad
\ket{\psi}=\bigotimes_{j=1}^n\ket{\psi_j},
\qquad
\ket{\psi_j}\sim\mu_{\mathrm H}^{(1)}.
\label{eq:sm_product_ensemble}
\end{eqnarray}
Thus, expressibility samples many circuit instances from a fixed reference input, whereas EP/EPD probe one fixed circuit with many product inputs.

For a pure state $\ket{\phi}$, the Meyer-Wallach global entanglement measure is~\cite{SM-MeyerWallach2002}
\begin{eqnarray}
Q(\ket{\phi})=\frac{2}{n}\sum_{j=1}^{n}\left(1-\Tr \hrho_j^2\right),
\qquad
\hrho_j=\Tr_{\bar j}\ketbra{\phi}{\phi}.
\label{eq:sm_Q_def}
\end{eqnarray}
Here, $\bar j:=\{1,\ldots,n\}\setminus\{j\}$ denotes the complement of qubit $j$. Thus, $\Tr_{\bar j}$ traces out all qubits except $j$ and leaves the one-qubit reduced density operator $\hrho_j$. Since $1/2\le \Tr \hrho_j^2\le1$, one has $0\le Q\le1$. Also, $Q(\ket{\phi})=0$ if and only if $\ket{\phi}$ is fully separable: if $Q=0$, every one-qubit reduction is pure; purity of a subsystem of a global pure state implies factorization across the corresponding bipartition; iterating over all sites gives full product structure.

For a fixed unitary $\hU$, we define
\begin{eqnarray}
\EP(\hU) := \mathbb E_{\psi\sim\mu_{\mathrm{prod}}}\left[Q(\hU\ket{\psi})\right],
\qquad
\EPD(\hU) := \sqrt{\Var_{\psi\sim\mu_{\mathrm{prod}}}\left[Q(\hU\ket{\psi})\right]}.
\label{eq:sm_EP_EPD_def}
\end{eqnarray}
If $\hat{V}_{\rm in}, \hat{V}_{\rm out}\in U(2)^{\otimes n}$ are local unitaries, then
\begin{eqnarray}
\EP(\hat{V}_{\rm out} \hU \hat{V}_{\rm in})=\EP(\hU),
\qquad
\EPD(\hat{V}_{\rm out} \hU \hat{V}_{\rm in})=\EPD(\hU).
\label{eq:sm_local_invariance}
\end{eqnarray}
\label{rem:LU_invariance}
The post-multiplication by $\hat{V}_{\rm out}$ preserves $Q$, while the pre-multiplication by $\hat{V}_{\rm in}$ preserves the product-Haar measure.

For $t\ge1$, we define the fixed-circuit output moment
\begin{eqnarray}
\hM^{(t)}_{\hU}:= \mathbb{E}_{\psi\sim\mu_{\mathrm{prod}}} \left[\left(\hU\ketbra{\psi}{\psi}\hU^\dagger\right)^{\otimes t}\right].
\label{eq:sm_MUt}
\end{eqnarray}
For one qubit,
\begin{eqnarray}
\int d\mu_{\mathrm H}^{(1)}(\psi)(\ketbra{\psi}{\psi})^{\otimes t} = \frac{\hat{\Pi}^{(t,1)}_{\rm sym}}{t+1},
\label{eq:sm_one_qubit_moment}
\end{eqnarray}
where $\hat{\Pi}^{(t,1)}_{\rm sym}$ projects onto $\mathrm{Sym}^t(\mathbb C^2)$. Grouping together, the $t$ copies of each physical qubit gives
\begin{eqnarray}
\hM^{(t)}_{\rm prod} = \frac{1}{(t+1)^n}\hat{\Pi}^{(t)}_{\rm loc},
\qquad
\hat{\Pi}^{(t)}_{\rm loc}=\bigotimes_{j=1}^n \hat{\Pi}^{(t)}_{{\rm sym},j}.
\label{eq:sm_prod_moment}
\end{eqnarray}
The full Haar pure-state moment on $d=2^n$ dimensions is
\begin{eqnarray}
\hM^{(t)}_{\rm Haar} =\frac{1}{D_{n,t}}\hat{\Pi}^{(t)}_{\rm glob},
\qquad
D_{n,t}=\binom{2^n+t-1}{t},
\label{eq:sm_haar_moment}
\end{eqnarray}
where $\hat{\Pi}^{(t)}_{\rm glob}$ projects onto $\mathrm{Sym}^t(\mathcal H_n)$.

\begin{proposition}[Spectral obstruction]\label{prop:spectral_obstruction}
For every fixed unitary $\hU$ and every $t\ge1$,
\begin{eqnarray}
\hM^{(t)}_{\hU}=\hU^{\otimes t}\hM^{(t)}_{\rm prod}\hU^{\dagger\otimes t}.
\label{eq:sm_spectral_orbit}
\end{eqnarray}
Therefore, the spectrum of $\hM^{(t)}_{\hU}$ is independent of $\hU$. If $n>1$ and $t\ge2$, then $\hM^{(t)}_{\rm prod}$ and $\hM^{(t)}_{\rm Haar}$ do not have the same spectrum.
\end{proposition}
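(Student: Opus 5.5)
The plan has three parts. First, prove the conjugation identity by linearity. Then read off the spectral invariance. Finally, separate the two spectra by comparing the ranks of the normalized projectors.

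\emph{Conjugation identity.} For every input $\psi$, $(\hU\ketbra{\psi}{\psi}\hU^\dagger)^{\otimes t}=\hU^{\otimes t}(\ketbra{\psi}{\psi})^{\otimes t}\hU^{\dagger\otimes t}$. The map $X\mapsto \hU^{\otimes t}X\hU^{\dagger\otimes t}$ is linear and does not depend on $\psi$, so it commutes with the expectation over $\mu_{\mathrm{prod}}$. This gives Eq.~(\ref{eq:sm_spectral_orbit}), with $\hM^{(t)}_{\rm prod}:=\mathbb E_\psi[(\ketbra{\psi}{\psi})^{\otimes t}]$.

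\emph{Spectral invariance.} Since $\hU^{\otimes t}$ is unitary, $\hM^{(t)}_{\hU}$ is unitarily similar to $\hM^{(t)}_{\rm prod}$. Its spectrum, with multiplicities, is therefore independent of $\hU$.

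\emph{Explicit product moment.} Reorder the tensor factors so that the $t$ copies of each qubit $j$ sit together. Under this reordering, $(\ketbra{\psi}{\psi})^{\otimes t}=\bigotimes_j(\ketbra{\psi_j}{\psi_j})^{\otimes t}$. Independence of the $\psi_j$ factorizes the expectation, and the one-qubit identity Eq.~(\ref{eq:sm_one_qubit_moment}) then gives Eq.~(\ref{eq:sm_prod_moment}). That one-qubit identity is the standard Haar symmetric-subspace formula: by Schur's lemma the average is proportional to $\hat\Pi_{\rm sym}$, and the trace is fixed to $1$ with $\dim\mathrm{Sym}^t(\mathbb C^2)=t+1$. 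The same argument in dimension $2^n$ gives Eq.~(\ref{eq:sm_haar_moment}).

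\emph{Comparing the spectra.} Both moments are normalized projectors. The spectrum of $\hM^{(t)}_{\rm prod}$ is $\{1/(t+1)^n\}$ with multiplicity $(t+1)^n$, together with $0$. The spectrum of $\hM^{(t)}_{\rm Haar}$ is $\{1/D_{n,t}\}$ with multiplicity $D_{n,t}$, together with $0$. Equal spectra would force equal ranks, so it suffices to show $(t+1)^n\neq D_{n,t}$ for $n>1$ and $t\ge2$. In fact $(t+1)^n<D_{n,t}$ in this range, and this strict inequality is the only real obstacle.

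A structural route handles it. The locally symmetric subspace $\bigotimes_j\mathrm{Sym}^t(\mathbb C^2_j)$ contains $\mathrm{Sym}^t(\mathcal H_n)$, because global copy-permutations act as simultaneous local permutations. The containment is strict: take $t=2$, $n=2$ and the vector $(\ket{01}\ket{10}-\ket{10}\ket{01})$ on copies $1,2$. Each qubit's pair of copies is then symmetric under its local swap, since swapping the first-qubit copies and swapping the second-qubit copies each map $\ket{01}\ket{10}\leftrightarrow\ket{11}\ket{00}$, up to a rearrangement that must be checked. An example built this way lies in the local space but not the global one.

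The counting route is simpler, and I would use it: compare the dimensions directly. For $t\ge2$,
\[
D_{n,t}\ge\binom{2^n+1}{2}=2^{n-1}(2^n+1).
\]
At $t=2$ this is $2^{n-1}(2^n+1)$ against $3^n$, which for $n=2$ gives $10>9$, and the gap grows with $n$. For general $t$ with $n\ge2$, I would argue by induction on $t$. The ratio $D_{n,t+1}/D_{n,t}=(2^n+t)/(t+1)$ exceeds $(t+2)/(t+1)$, which is the growth ratio of $(t+1)^n$ when $n=1$. The induction step therefore has to compare against the ratio $((t+2)/(t+1))^n$, and it holds since $(2^n+t)/(t+1)\ge((t+2)/(t+1))^n$. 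This inequality is elementary but is the step I expect to need care. I would prove it via the binomial bound $(1+1/(t+1))^n\le 1+(2^n-1)/(t+1)$, which holds by convexity of $x\mapsto(1+x)^n$ on $[0,1]$ with equality at $x=0$ and $x=1$, applied at $x=1/(t+1)\le 1$. The base case $t=2$ follows from $2^{n-1}(2^n+1)>3^n$ for all $n\ge2$. Distinct ranks give distinct spectra.
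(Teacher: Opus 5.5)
Your proof is correct, and its first two steps (the conjugation identity by linearity and spectral invariance by unitary similarity) are the same as the paper's. The difference is in how you show that the ranks $(t+1)^n$ and $D_{n,t}$ differ.

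The paper argues structurally. A global copy permutation acts as the same permutation applied simultaneously to the copies of every qubit. Hence every locally symmetric vector is globally symmetric, giving $\bigotimes_j\mathrm{Sym}^t(\mathbb C^2_j)\subseteq\mathrm{Sym}^t(\mathcal H_n)$. Strictness is witnessed by symmetrizing one copy of $\ket{110\cdots0}$ against $\ket{0^n}$ copies. For $n=t=2$ this is $\ket{11}\ket{00}+\ket{00}\ket{11}$, which a swap of only the first-qubit factors sends to $\ket{01}\ket{10}+\ket{10}\ket{01}$.

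You instead prove $(t+1)^n<D_{n,t}$ numerically, by induction on $t$. Your base case $2^{n-1}(2^n+1)>3^n$ and your step $(1+x)^n\le 1+(2^n-1)x$ for $x=1/(t+1)\in[0,1]$ (a convex function lies below its chord) are both right. The paper's route needs no arithmetic and yields more: the support of $\hM^{(t)}_{\rm prod}$ is a proper subspace of the support of $\hM^{(t)}_{\rm Haar}$. Your route is elementary and self-contained, but it only delivers the inequality of ranks.

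You should delete or repair your ``structural route'' paragraph, because both of its claims are wrong.

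\begin{itemize}
\item The containment is reversed. Your own justification, that global permutations are simultaneous local ones, shows that local invariance implies global invariance. So the locally symmetric subspace sits inside $\mathrm{Sym}^t(\mathcal H_n)$, not the other way around. The direction you wrote would force $(t+1)^n\ge D_{n,t}$, contradicting your counting argument.
\item Your example fails. Swapping the first-qubit factors sends $\ket{01}\ket{10}-\ket{10}\ket{01}$ to $\ket{11}\ket{00}-\ket{00}\ket{11}$, so it is not locally symmetric. It is also antisymmetric under the global swap, so it lies in neither subspace.
\end{itemize}

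Because your counting argument is complete on its own, this does not invalidate your proof. The paragraph should still not be left in as written.
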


\begin{proof}---Eq.~(\ref{eq:sm_spectral_orbit}) follows by linearity:
\begin{eqnarray}
\hM^{(t)}_{\hU} &=&\mathbb E_\psi\left[\hU^{\otimes t}(\ketbra{\psi}{\psi})^{\otimes t}\hU^{\dagger\otimes t}\right] =\hU^{\otimes t}\hM^{(t)}_{\rm prod}\hU^{\dagger\otimes t}.
\end{eqnarray}
Thus $\hM^{(t)}_{\hU}$ is unitarily conjugate to $\hM^{(t)}_{\rm prod}$.

By Eqs.~(\ref{eq:sm_prod_moment}) and (\ref{eq:sm_haar_moment}), the two reference moments are proportional to projectors with support dimensions
\begin{eqnarray}
\dim \left( \supp(\hM^{(t)}_{\rm prod}) \right) = (t+1)^n,
\quad
\dim \left( \supp(\hM^{(t)}_{\rm Haar}) \right) = \binom{2^n+t-1}{t}.
\label{eq:sm_rank_compare}
\end{eqnarray}
The locally symmetric subspace is strictly contained in the globally symmetric subspace when $n>1$ and $t\ge2$. To see strictness, take the vector obtained by symmetrizing one copy of $\ket{110\cdots0}$ among $t$ copies while all other copies are $\ket{0^n}$. This vector is globally copy-symmetric, but it is not invariant under swapping only the first-qubit factors of two copies. Hence, the supports, ranks, and spectra differ.
\end{proof}

This proposition is the reason that Haar is used in the main manuscript as a scalar entanglement reference and as a local-moment benchmark, not as an operator-level limit of fixed-circuit product-input moments. More generally, let $\mathcal H=\mathcal H_A\otimes\mathcal H_B$, with $d_A=\dim\mathcal H_A$ and $d_B=\dim\mathcal H_B$, and let $\hrho_A=\Tr_B\ketbra{\phi}{\phi}$ for a Haar-random $\ket{\phi}\in\mathcal H$. Lubkin's formula states $\mathbb E_{\phi\sim\Haar}[\Tr(\hrho_A^2)]=(d_A+d_B)/(d_Ad_B+1)$~\cite{SM-Lubkin1978}. Indeed, the Haar two-copy moment $\mathbb E[\ketbra{\phi}{\phi}^{\otimes2}]=(\hid+\hS_A\hS_B)/[d_Ad_B(d_Ad_B+1)]$, together with $\Tr(\hrho_A^2)=\Tr[\hS_A\ketbra{\phi}{\phi}^{\otimes2}]$, gives this result by a single swap contraction. Taking $d_A=2$ and $d_B=2^{n-1}$ yields the one-qubit-versus-rest value
\begin{eqnarray}
\mathbb E_{\phi\sim\Haar}\left[\Tr\hrho_j^2\right] =\frac{2+2^{n-1}}{2^n+1}.
\label{eq:sm_lubkin}
\end{eqnarray}
Substituting this into Eq.~(\ref{eq:sm_Q_def}) gives
\begin{eqnarray}
Q_{\Haar} = \mathbb E_{\phi\sim\Haar}[Q(\ket{\phi})] = \frac{2^n-2}{2^n+1}.
\label{eq:sm_Q_Haar}
\end{eqnarray}

\section{Exact two-copy and four-copy formulas for EP and EPD}\label{sec:moment}

Let $\hS_j$ denote the unitary that swaps qubit $j$ between two copies and acts trivially elsewhere. The following identity is repeatedly used to convert local purities into two-copy expectation values. If $\hA,\hB$ are operators on $\mathcal H_n$, then
\begin{eqnarray}
\Tr [\hS_j(\hA\otimes \hB)] = \Tr\left[(\Tr_{\bar j}\hA)(\Tr_{\bar j}\hB)\right].
\label{eq:sm_swap_identity}
\end{eqnarray}
To verify Eq.~(\ref{eq:sm_swap_identity}) explicitly, write basis states as $\ket{a,\mu}=\ket{a}_j\ket{\mu}_{\bar j}$ (or $\ket{b,\nu}=\ket{b}_j\ket{\nu}_{\bar j}$), where $a,b\in\{0,1\}$ are single-qubit labels and $\mu,\nu$ label an orthonormal basis of the complement. The swap acts as
\begin{eqnarray}
\hS_j(\ket{a,\mu}\otimes\ket{b,\nu})=\ket{b,\mu}\otimes\ket{a,\nu}.
\end{eqnarray}
Therefore,
\begin{eqnarray}
\Tr[\hS_j(\hA\otimes \hB)] &=&\sum_{a,b,\mu,\nu} \left(\bra{a,\mu}\otimes\bra{b,\nu}\right) \hS_j (\hA\otimes \hB) \left(\ket{a,\mu}\otimes\ket{b,\nu}\right)
\nonumber\\
&=&\sum_{a,b,\mu,\nu}
\bra{b,\mu}\hA\ket{a,\mu}
\bra{a,\nu}\hB\ket{b,\nu}
\nonumber\\
&=&\sum_{a,b}\bra{b}(\Tr_{\bar j}\hA)\ket{a}\bra{a}(\Tr_{\bar j}\hB)\ket{b}
\nonumber\\
&=&\Tr[(\Tr_{\bar j}\hA)(\Tr_{\bar j}\hB)].
\end{eqnarray}
For $\hrho=\ketbra{\phi}{\phi}$, Eq.~(\ref{eq:sm_swap_identity}) yields
\begin{eqnarray}
\Tr[\hS_j\hrho^{\otimes2}]=\Tr\hrho_j^2.
\end{eqnarray}
Therefore, with
\begin{eqnarray}
\hK_Q=\frac{2}{n}\sum_{j=1}^n(\hid-\hS_j),
\label{eq:sm_KQ}
\end{eqnarray}
one has the exact two-copy expression
\begin{eqnarray}
Q(\ket{\phi})=\Tr[\hK_Q\hrho^{\otimes2}].
\label{eq:sm_Q_two_copy}
\end{eqnarray}
On four copies, let $\hK_Q^{(12)}$ and $\hK_Q^{(34)}$ act on copy pairs $(1,2)$ and $(3,4)$, respectively. Define
\begin{eqnarray}
\hT_Q=\hK_Q^{(12)}\hK_Q^{(34)}
=\frac{4}{n^2}\sum_{j,k=1}^n(\hid-\hS_j^{(12)})(\hid-\hS_k^{(34)}).
\label{eq:sm_TQ}
\end{eqnarray}
Since $\hrho^{\otimes4}=\hrho^{\otimes2}_{12}\otimes\hrho^{\otimes2}_{34}$,
\begin{eqnarray}
Q(\ket{\phi})^2=\Tr[\hT_Q\hrho^{\otimes4}].
\label{eq:sm_Q_four_copy}
\end{eqnarray}

\begin{theorem}[Exact moment representation]
For a fixed circuit $\hU$,
\begin{eqnarray}
\EP(\hU)=\Tr[\hK_Q\hM^{(2)}_{\hU}],
\label{eq:sm_EP_two_copy}
\end{eqnarray}
and
\begin{eqnarray}
\EPD(\hU)^2
=\Tr[\hT_Q\hM^{(4)}_{\hU}]-\left(\Tr[\hK_Q\hM^{(2)}_{\hU}]\right)^2.
\label{eq:sm_EPD_four_copy}
\end{eqnarray}
\end{theorem}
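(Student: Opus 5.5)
The plan is to combine the pointwise copy-space identities already established, Eqs.~(\ref{eq:sm_Q_two_copy}) and (\ref{eq:sm_Q_four_copy}), with the linearity of the trace and of the product-input expectation. For each input $\ket{\psi}$ I will set $\hrho_\psi=\hU\ketbra{\psi}{\psi}\hU^\dagger$, which is again a pure state. Applying Eq.~(\ref{eq:sm_Q_two_copy}) to $\ket{\phi}=\hU\ket{\psi}$ gives $Q(\hU\ket{\psi})=\Tr[\hK_Q\hrho_\psi^{\otimes2}]$ for every $\psi$.

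Taking $\mathbb E_{\psi\sim\mu_{\mathrm{prod}}}$ of both sides, I will exchange the expectation with the trace. This is legitimate because everything is finite-dimensional and $\psi\mapsto\hrho_\psi^{\otimes 2}$ is a bounded, continuous operator-valued function on the compact product of Bloch spheres. The exchange yields $\EP(\hU)=\Tr[\hK_Q\,\mathbb E_\psi\hrho_\psi^{\otimes2}]=\Tr[\hK_Q\hM^{(2)}_{\hU}]$ by the definition (\ref{eq:sm_MUt}). This is Eq.~(\ref{eq:sm_EP_two_copy}).

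For the variance, I will write $\EPD(\hU)^2=\mathbb E_\psi[Q(\hU\ket{\psi})^2]-(\mathbb E_\psi[Q(\hU\ket{\psi})])^2$. Eq.~(\ref{eq:sm_Q_four_copy}) gives $Q(\hU\ket\psi)^2=\Tr[\hT_Q\hrho_\psi^{\otimes4}]$ pointwise. The same interchange of expectation and trace then gives $\mathbb E_\psi[Q^2]=\Tr[\hT_Q\hM^{(4)}_{\hU}]$. The subtracted term is the square of the EP identity just proved, which gives Eq.~(\ref{eq:sm_EPD_four_copy}).

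The only point requiring care is the copy labelling in the four-copy step. $\hT_Q=\hK_Q^{(12)}\hK_Q^{(34)}$ acts on the factorization $\hrho^{\otimes4}=\hrho^{\otimes2}_{12}\otimes\hrho^{\otimes2}_{34}$, and the trace of a tensor product of operators on disjoint copy pairs factorizes. One must also check that $\hM^{(4)}_{\hU}$ is taken with the same copy ordering. Since $\hrho_\psi^{\otimes4}$ is invariant under copy permutations, the ordering is harmless. I therefore expect no genuine obstacle; the proof is a direct averaging of the pointwise identities.
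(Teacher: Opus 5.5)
Your proposal is correct and matches the paper's proof step for step: apply the pointwise identities $Q=\Tr[\hK_Q\hrho^{\otimes2}]$ and $Q^2=\Tr[\hT_Q\hrho^{\otimes4}]$ to $\hU\ket{\psi}$, interchange trace and product-input expectation to obtain $\hM^{(2)}_{\hU}$ and $\hM^{(4)}_{\hU}$, and subtract the squared mean. Your added justification of the interchange and your check on copy ordering are sound, though the paper does not spell them out.
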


\begin{proof}---For $\hrho_\psi(\hU)=\hU\ketbra{\psi}{\psi}\hU^\dagger$, Eq.~(\ref{eq:sm_Q_two_copy}) gives
\begin{eqnarray}
Q(\hU\ket{\psi})=\Tr[\hK_Q\hrho_\psi(\hU)^{\otimes2}].
\end{eqnarray}
Averaging and interchanging trace and expectation gives Eq.~(\ref{eq:sm_EP_two_copy}). Similarly, Eq.~(\ref{eq:sm_Q_four_copy}) gives
\begin{eqnarray}
\mathbb E_\psi[Q(\hU\ket{\psi})^2]=\Tr[\hT_Q\hM^{(4)}_{\hU}].
\end{eqnarray}
Subtracting the square of the mean gives Eq.~(\ref{eq:sm_EPD_four_copy}).
\end{proof}

The theorem is the precise sense in which EP and EPD probe different moment orders. Equality of EP values constrains only one scalar contraction of a two-copy moment and does not determine the four-copy fluctuation profile.

For completeness, we record the stability bounds used implicitly in the main manuscript. For two circuits $\hat U,\hat V$,
\begin{eqnarray}
\abs{\EP(\hat U)-\EP(\hat V)}
\le \norm{\hK_Q}_2\norm{\hM^{(2)}_{\hat U}-\hM^{(2)}_{\hat V}}_2,
\label{eq:sm_EP_stability}
\end{eqnarray}
and
\begin{eqnarray}
\abs{\EPD(\hat U)^2-\EPD(\hat V)^2}
\le
\norm{\hT_Q}_2\norm{\hM^{(4)}_{\hat U}-\hM^{(4)}_{\hat V}}_2
+2\norm{\hK_Q}_2\norm{\hM^{(2)}_{\hat U}-\hM^{(2)}_{\hat V}}_2.
\label{eq:sm_EPD_stability}
\end{eqnarray}
The first inequality is Hilbert-Schmidt Cauchy-Schwarz. The second follows by applying the same inequality to the four-copy term and bounding the change in the squared mean by $2\abs{\EP(\hat U)-\EP(\hat V)}$, since $0\le\EP\le1$.

Expanding $\Tr \hK_Q^2$ gives
\begin{eqnarray}
\norm{\hK_Q}_2^2=4^n\frac{n+3}{n},
\qquad
\norm{\hT_Q}_2=4^n\frac{n+3}{n}.
\label{eq:sm_norms}
\end{eqnarray}
Indeed,
\begin{eqnarray}
\norm{\hK_Q}_2^2
=\Tr \hK_Q^2
=\frac{4}{n^2}\sum_{j,k=1}^n\Tr[(\hid-\hS_j)(\hid-\hS_k)].
\end{eqnarray}
If $j=k$, then $(\hid-\hS_j)^2=2(\hid-\hS_j)$ and
\begin{eqnarray}
\Tr[(\hid-\hS_j)^2]=2\Tr(\hid-\hS_j)=2(4^n-2\cdot4^{n-1})=4^n.
\end{eqnarray}
If $j\ne k$, the trace factorizes over the two nontrivial site pairs and the remaining $n-2$ sites:
\begin{eqnarray}
\Tr[(\hid-\hS_j)(\hid-\hS_k)] &=& \Tr[\hid-\hS_j-\hS_k+\hS_j\hS_k] \nonumber\\
	&=& 4^n-2\cdot4^{n-1}-2\cdot4^{n-1}+2\cdot2\cdot4^{n-2} \nonumber\\
	&=& 2\cdot2\cdot4^{n-2}=4^{n-1}.
\end{eqnarray}
Therefore,
\begin{eqnarray}
\norm{\hK_Q}_2^2=\frac{4}{n^2}\left(n4^n+n(n-1)4^{n-1}\right)=4^n\frac{n+3}{n}.
\end{eqnarray}
Because $\hT_Q=\hK_Q\otimes \hK_Q$ under the copy grouping $(12)|(34)$, the Hilbert-Schmidt norm is multiplicative and $\norm{\hT_Q}_2=\norm{\hK_Q}_2^2$.

\section{Gradient Variance and the Local 2-Copy Criterion for BP}\label{sec:gradientBP}

We now turn to trainability. The main purpose of this section is to isolate, in exact algebraic form, the object that controls the second moment of gradients. This is where the usual expressibility--vs--barren plateau (BP) narrative must be sharpened. The quantity $\mathrm{EP}(\hU)$ derived in Sec.~\ref{sec:moment} is a \emph{global} two-copy scalar functional of a fixed circuit, while $\mathrm{EPD}(\hU)$ is a \emph{global} four-copy scalar functional. By contrast, the variance of the gradient with respect to a specific parameter is controlled not by any global scalar descriptor, but by a \emph{local} two-copy moment associated with the parameter's effective light cone.

This is the precise sense in which the barren plateaus are fundamentally a \emph{local second-moment phenomenon}. In the language of McClean-type results, one often says that the gradients vanish when the relevant circuit becomes sufficiently random or design-like~\cite{SM-McClean2018barrenplateaus,SM-Cerezo2022challenges}. What the present section makes explicit is that the operative notion is not a global one-number summary such as expressibility or average entangling power, but the second-copy structure seen by each parameter through its causal cone.

Throughout this section, we consider a differentiable $n$-qubit PQC
\begin{eqnarray}
\hU(\bm{\theta}) = \hU_m(\bm{\theta})\cdots \hU_2(\bm{\theta})\hU_1(\bm{\theta}),
\label{eq:BP_circuit_seq}
\end{eqnarray}
built from geometrically local gates on a fixed connectivity graph. For notational clarity, we assume that each parameter $\theta_i$ appears in a single elementary gate. If the same parameter controls several gates, the total gradient is the sum of the corresponding single-gate contributions, and all arguments below apply termwise.

We study a cost function of the form
\begin{eqnarray}
C(\bm{\theta},\psi)
=
\Tr\left[
\hO\,\hrho_{\psi}(\bm{\theta})
\right],
\qquad
\hrho_{\psi}(\bm{\theta})
:=
\hU(\bm{\theta})\ketbra{\psi}{\psi}\hU(\bm{\theta})^{\dagger},
\label{eq:BP_cost}
\end{eqnarray}
where $\hO=\hO^{\dagger}$ is a bounded observable, which we normalize as
\begin{eqnarray}
\norm{\hO}_{\infty}\le 1.
\end{eqnarray}
The input state is again sampled from the Haar-product ensemble introduced in Sec.~\ref{sec:setup},
\begin{eqnarray}
\ket{\psi}
=
\bigotimes_{j=1}^{n}\ket{\psi_j},
\qquad
\ket{\psi_j}\sim \mu_{\Haar}^{(1)}
\quad
\text{independently}.
\label{eq:BP_input}
\end{eqnarray}
For a fixed parameter index $i$, we denote the corresponding gradient by
\begin{eqnarray}
g_i(\psi)
:=
\partial_{\theta_i} C(\bm{\theta},\psi).
\label{eq:def_gi}
\end{eqnarray}

\subsection{Dressed generators and effective light cones}\label{subsec:BP_lightcone}

Let the $i$th parameter enter the circuit through a gate of the form
\begin{eqnarray}
\hU_i(\theta_i)=e^{-i\theta_i \hG_i},
\label{eq:param_gate}
\end{eqnarray}
where $\hG_i=\hG_i^{\dagger}$ is a Hermitian generator supported on a bounded set of qubits $\Lambda_i$. We decompose the total circuit around this gate as
\begin{eqnarray}
\hU(\bm{\theta})
=
\hU_i^{\rm post}(\bm{\theta})\,e^{-i\theta_i \hG_i}\,\hU_i^{\rm pre}(\bm{\theta}),
\label{eq:circuit_split}
\end{eqnarray}
where $\hU_i^{\rm pre}$ collects all gates before the parameterized gate and $\hU_i^{\rm post}$ collects all gates after it, matching the pre/post notation used in the main manuscript.

The key output-picture object is the dressed generator
\begin{eqnarray}
\hH_i^{\mathrm{out}}
:=
i\,
\left(
\partial_{\theta_i}\hU(\bm{\theta})
\right)
\hU(\bm{\theta})^{\dagger}.
\label{eq:def_Hout}
\end{eqnarray}
Using Eq.~(\ref{eq:circuit_split}), we obtain the exact identity
\begin{eqnarray}
\hH_i^{\mathrm{out}}
=
\hU_i^{\rm post}(\bm{\theta})\,\hG_i\,\hU_i^{\rm post}(\bm{\theta})^{\dagger}.
\label{eq:Hout_explicit}
\end{eqnarray}
In particular, $\hH_i^{\mathrm{out}}$ is Hermitian and
\begin{eqnarray}
\norm{\hH_i^{\mathrm{out}}}_{\infty}
=
\norm{\hG_i}_{\infty}.
\label{eq:Hout_norm}
\end{eqnarray}

Eq.~(\ref{eq:Hout_explicit}) shows that only the gates \emph{after} the parameter dress the generator in the output picture. Because the circuit is local, the support of $\hH_i^{\mathrm{out}}$ can be obtained recursively by propagating the support $\Lambda_i$ through later gates: conjugation by a gate disjoint from the current support leaves the support unchanged, whereas conjugation by a gate overlapping the current support enlarges it at most to the union of the two supports. The resulting output subset is the usual forward light cone of the parameter. We denote it by
\begin{eqnarray}
\mathcal{F}(i)
:=
\operatorname{supp}\left(\hH_i^{\mathrm{out}}\right).
\label{eq:def_forward_cone}
\end{eqnarray}

Taking the derivative of the output state gives
\begin{eqnarray}
\partial_{\theta_i}\hrho_{\psi}(\bm{\theta})
&=&
\left(\partial_{\theta_i}\hU\right)\ketbra{\psi}{\psi}\hU^{\dagger}
+
\hU\ketbra{\psi}{\psi}\left(\partial_{\theta_i}\hU^{\dagger}\right)
\nonumber\\
&=&
-i\hH_i^{\mathrm{out}}\hrho_{\psi}(\bm{\theta})
+
i\hrho_{\psi}(\bm{\theta})\hH_i^{\mathrm{out}}
\nonumber\\
&=&
-i\left[
\hH_i^{\mathrm{out}},
\hrho_{\psi}(\bm{\theta})
\right].
\label{eq:rho_derivative_comm}
\end{eqnarray}
Hence, the gradient can be written as
\begin{eqnarray}
g_i(\psi)
&=&
\Tr\left[
\hO\,\partial_{\theta_i}\hrho_{\psi}(\bm{\theta})
\right]
=
-i\Tr\left[
\hO\left(\hH_i^{\mathrm{out}}\hrho_{\psi}-\hrho_{\psi}\hH_i^{\mathrm{out}}\right)
\right]
\nonumber\\
&=&
\Tr\left[
i\left[\hH_i^{\mathrm{out}},\hO\right]\hrho_{\psi}(\bm{\theta})
\right].
\label{eq:gradient_commutator_output}
\end{eqnarray}

This motivates the following definition.

\begin{definition}[Effective gradient observable and effective light cone]\label{def:effective_lightcone}
For the parameter $\theta_i$, we define the effective gradient observable
\begin{eqnarray}
\hB_i
:=
i\left[
\hH_i^{\mathrm{out}},\hO
\right].
\label{eq:def_Bi}
\end{eqnarray}
Its support
\begin{eqnarray}
\mathrm{LC}(i)
:=
\operatorname{supp}(\hB_i)
\label{eq:def_LCi}
\end{eqnarray}
is called the effective light cone of the parameter with respect to the chosen cost operator. We denote its size by
\begin{eqnarray}
r_i := |\mathrm{LC}(i)|,
\qquad
d_i := 2^{r_i}.
\label{eq:def_ri_di}
\end{eqnarray}
\end{definition}

By construction,
\begin{eqnarray}
\mathrm{LC}(i)
\subseteq
\mathcal{F}(i)\cup \operatorname{supp}(\hO).
\label{eq:LC_subset}
\end{eqnarray}
In particular, if $\hO$ is local and its support is disjoint from the forward light cone $\mathcal{F}(i)$, then $[\hH_i^{\mathrm{out}},\hO]=0$ and the corresponding gradient vanishes identically. Thus, the relevant subsystem for trainability is not the entire circuit, but the causal overlap between the parameter and the cost operator.

The observable $\hB_i$ inherits several basic structural properties from Eq.~(\ref{eq:def_Bi}).

\begin{lemma}[Basic properties of $\hB_i$]\label{lem:Bi_basic}
For each parameter index $i$, the operator $\hB_i$ defined in Eq.~(\ref{eq:def_Bi}) is Hermitian, traceless, and satisfies
\begin{eqnarray}
\norm{\hB_i}_{\infty}
\le
2\norm{\hG_i}_{\infty}\norm{\hO}_{\infty}.
\label{eq:Bi_norm_bound}
\end{eqnarray}
\end{lemma}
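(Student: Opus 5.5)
The three claims follow directly from the definition $\hB_i=i[\hH_i^{\rm out},\hO]$ together with Eqs.~(\ref{eq:Hout_explicit}) and (\ref{eq:Hout_norm}); I will check them in order.

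\emph{Hermiticity.} Take the adjoint of the commutator:
\begin{eqnarray}
\hB_i^\dagger = -i\bigl(\hH_i^{\rm out}\hO-\hO\hH_i^{\rm out}\bigr)^\dagger = -i\bigl(\hO\hH_i^{\rm out}-\hH_i^{\rm out}\hO\bigr)=\hB_i .
\end{eqnarray}
This uses only that $\hO$ is Hermitian and that $\hH_i^{\rm out}$ is Hermitian, the latter being the unitary conjugate of the Hermitian $\hG_i$.

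\emph{Tracelessness.} By cyclicity of the trace on the finite-dimensional space $\mathcal H_n$, $\Tr(\hH_i^{\rm out}\hO)=\Tr(\hO\hH_i^{\rm out})$, so $\Tr\hB_i=0$. The same conclusion holds for the restriction of $\hB_i$ to its support $\LC(i)$: the partial trace over the complement sends $\hB_i$ to $2^{n-r_i}$ times that local factor. This local form is the one used later in Proposition~\ref{prop:2}.

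\emph{Norm bound.} Apply the triangle inequality and submultiplicativity of the operator norm:
\begin{eqnarray}
\norm{\hB_i}_\infty\le\norm{\hH_i^{\rm out}\hO}_\infty+\norm{\hO\hH_i^{\rm out}}_\infty\le 2\norm{\hH_i^{\rm out}}_\infty\norm{\hO}_\infty .
\end{eqnarray}
Then substitute Eq.~(\ref{eq:Hout_norm}), $\norm{\hH_i^{\rm out}}_\infty=\norm{\hG_i}_\infty$, which holds by unitary invariance of $\norm{\cdot}_\infty$ under conjugation by $\hU_i^{\rm post}$.

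None of these steps is a real obstacle. The only point needing attention is making sure tracelessness is stated both on the full space and on the local support, since Proposition~\ref{prop:2} needs it on $\LC(i)$.
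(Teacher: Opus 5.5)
Your proposal is correct and follows the paper's proof essentially line by line: Hermiticity from taking the adjoint of $i$ times the commutator of two Hermitian operators, tracelessness from cyclicity of the trace, and the norm bound from the triangle inequality, submultiplicativity, and $\norm{\hH_i^{\rm out}}_\infty=\norm{\hG_i}_\infty$. Your extra remark is also correct: since $\Tr_{\overline{\LC(i)}}\hB_i$ equals $2^{n-r_i}$ times the local factor of $\hB_i$ on $\LC(i)$, that local factor is traceless, which is what the paper uses implicitly when it writes $\frac{1}{d_i}\Tr(\hB_i)=0$ in Proposition~\ref{prop:lightcone_reduction}.
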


\begin{proof}---Hermiticity follows from
\begin{eqnarray}
\hB_i^{\dagger}
=
\left(
i[\hH_i^{\mathrm{out}},\hO]
\right)^{\dagger}
=
-i[\hO,\hH_i^{\mathrm{out}}]
=
i[\hH_i^{\mathrm{out}},\hO]
=
\hB_i,
\end{eqnarray}
since both $\hH_i^{\mathrm{out}}$ and $\hO$ are Hermitian. Tracelessness follows from the cyclicity of trace:
\begin{eqnarray}
\Tr(\hB_i) = i\Tr\left(\hH_i^{\mathrm{out}}\hO-\hO\hH_i^{\mathrm{out}}\right) = 0.
\end{eqnarray}
Finally, we have
\begin{eqnarray}
\norm{\hB_i}_{\infty}
=
\norm{i[\hH_i^{\mathrm{out}},\hO]}_{\infty}
\le
\norm{\hH_i^{\mathrm{out}}\hO}_{\infty}
+
\norm{\hO\hH_i^{\mathrm{out}}}_{\infty}
\le
2\norm{\hH_i^{\mathrm{out}}}_{\infty}\norm{\hO}_{\infty}
=
2\norm{\hG_i}_{\infty}\norm{\hO}_{\infty},
\end{eqnarray}
where Eq.~(\ref{eq:Hout_norm}) was used in the last step.
\end{proof}

\subsection{Exact local 2-copy formula for gradient moments}\label{subsec:BP_local_moment}

We now prove the central structural statement of this section. The gradient associated with a given parameter is a linear functional of the reduced output state on its effective light cone, and its second moment is determined exactly by the corresponding local two-copy moment.

\begin{proposition}[Light-cone reduction of gradient moments]\label{prop:lightcone_reduction}
Let $g_i(\psi)=\partial_{\theta_i}C(\bm{\theta},\psi)$ be the gradient with respect to the parameter $\theta_i$, and let $\hB_i$ and $\mathrm{LC}(i)$ be as in Definition~\ref{def:effective_lightcone}. Define the reduced output state on the effective light cone by
\begin{eqnarray}
\hsigma_{\psi}^{(i)}
:=
\Tr_{\overline{\mathrm{LC}(i)}}
\left[
\hrho_{\psi}(\bm{\theta})
\right].
\label{eq:def_sigma_i}
\end{eqnarray}
Then,
\begin{eqnarray}
g_i(\psi)
=
\Tr\left[
\hB_i\,\hsigma_{\psi}^{(i)}
\right].
\label{eq:gradient_local_linear}
\end{eqnarray}

Moreover, if we define the local two-copy moment
\begin{eqnarray}
\hM_{i,\mathrm{LC}}^{(2)}
:=
\mathbb{E}_{\psi\sim\mu_{\mathrm{prod}}}
\left[
\left(
\hsigma_{\psi}^{(i)}
\right)^{\otimes 2}
\right],
\label{eq:def_local_moment2}
\end{eqnarray}
then
\begin{eqnarray}
\mathbb{E}_{\psi}\left[g_i(\psi)^2\right]
=
\Tr\left[
(\hB_i\otimes \hB_i)\,
\hM_{i,\mathrm{LC}}^{(2)}
\right].
\label{eq:gradient_second_moment_local}
\end{eqnarray}
In fact, $\mathbb{E}_{\psi}[g_i(\psi)]=0$, and therefore
\begin{eqnarray}
\Var_{\psi}[g_i(\psi)]
=
\Tr\left[
(\hB_i\otimes \hB_i)\,
\hM_{i,\mathrm{LC}}^{(2)}
\right].
\label{eq:gradient_variance_local}
\end{eqnarray}
\end{proposition}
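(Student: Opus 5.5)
The plan is to reduce everything to the commutator formula already derived in Eq.~(\ref{eq:gradient_commutator_output}), namely $g_i(\psi)=\Tr[\hB_i\hrho_\psi(\bm{\theta})]$, and then exploit the support of $\hB_i$. By Definition~\ref{def:effective_lightcone}, $\hB_i=\hB_i^{\mathrm{LC}}\otimes\hid_{\overline{\mathrm{LC}(i)}}$ for some operator $\hB_i^{\mathrm{LC}}$ on $\mathrm{LC}(i)$, which we identify with $\hB_i$ in the statement. Writing the trace as a trace over $\mathrm{LC}(i)$ followed by a partial trace over the complement, the defining property of the partial trace gives $\Tr[(\hB_i^{\mathrm{LC}}\otimes\hid)\hrho_\psi]=\Tr[\hB_i^{\mathrm{LC}}\Tr_{\overline{\mathrm{LC}(i)}}\hrho_\psi]$. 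This is Eq.~(\ref{eq:gradient_local_linear}).

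For the second moment, we square the scalar and use $\Tr[X]\Tr[Y]=\Tr[X\otimes Y]$ together with $(A\otimes B)(C\otimes D)=AC\otimes BD$. This gives $g_i(\psi)^2=\Tr[(\hB_i\otimes\hB_i)(\hsigma_\psi^{(i)})^{\otimes2}]$ on two copies of $\mathrm{LC}(i)$. Taking $\mathbb E_\psi$ and exchanging the expectation with the finite-dimensional trace by linearity yields Eq.~(\ref{eq:gradient_second_moment_local}) with $\hM_{i,\mathrm{LC}}^{(2)}$ as defined.

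The vanishing mean is the only step that needs care. Here $\hrho_\psi(\bm{\theta})$ depends on $\psi$ through $\hU(\bm{\theta})$, and $\bm{\theta}$ is held fixed while $\psi$ is averaged. We therefore write $\mathbb E_\psi[g_i]=\Tr[\hB_i\,\hU\,\mathbb E_\psi(\ketbra{\psi}{\psi})\,\hU^\dagger]$. Using Eq.~(\ref{eq:sm_prod_moment}) with $t=1$, the one-copy symmetric projector is the full identity on $\mathbb C^2$, so $\hM^{(1)}_{\rm prod}=\hid/2^n$. The unitary conjugation then leaves $\hid/2^n$, and $\mathbb E_\psi[g_i]=\Tr(\hB_i)/2^n=0$ by Lemma~\ref{lem:Bi_basic}. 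Consequently $\Var_\psi[g_i]=\mathbb E_\psi[g_i^2]$, which is Eq.~(\ref{eq:gradient_variance_local}).

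The main obstacle is notational rather than substantive. We must make sure that tracelessness of the full $\hB_i$ transfers correctly to the local factor, since $\Tr\hB_i=2^{n-r_i}\Tr\hB_i^{\mathrm{LC}}$, so either version vanishes. We must also check that $\hsigma^{(i)}_\psi$ is taken at the same fixed $\bm{\theta}$ throughout.
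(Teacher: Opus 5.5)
Your proposal is correct and follows essentially the same route as the paper. You reduce to the light cone via the partial trace, use the two-copy identity $\Tr[X\omega]\Tr[Y\omega]=\Tr[(X\otimes Y)\omega^{\otimes 2}]$ with linearity of expectation, and kill the mean using the maximally mixed product-input first moment together with $\Tr\hB_i=0$. The only cosmetic difference is that you evaluate the mean globally as $\Tr(\hB_i)/2^n$, whereas the paper first partial-traces to $\mathbb{E}_\psi[\hsigma^{(i)}_\psi]=\hid_{\mathrm{LC}(i)}/d_i$; the two are equivalent.
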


\begin{proof}---Eq.~(\ref{eq:gradient_local_linear}) follows directly from Eq.~(\ref{eq:gradient_commutator_output}) and the fact that $\hB_i$ is supported on $\mathrm{LC}(i)$. Indeed,
\begin{eqnarray}
g_i(\psi)
=
\Tr\left[
\hB_i\,\hrho_{\psi}(\bm{\theta})
\right]
=
\Tr\left[
\hB_i\,
\Tr_{\overline{\mathrm{LC}(i)}}
\left(
\hrho_{\psi}(\bm{\theta})
\right)
\right]
=
\Tr\left[
\hB_i\,\hsigma_{\psi}^{(i)}
\right].
\end{eqnarray}

To compute the second moment, square Eq.~(\ref{eq:gradient_local_linear}) and use the standard identity
\begin{eqnarray}
\Tr[\hat{X}\hat{\omega}]\Tr[\hat{Y}\hat{\omega}]
=
\Tr\left[
(\hat{X} \otimes \hat{Y})\,
\hat{\omega}^{\otimes 2}
\right]
\label{eq:trace_square_identity}
\end{eqnarray}
for any operators $\hat{X}, \hat{Y}, \hat{\omega}$ on the same Hilbert space. With $\hat{X} = \hat{Y} = \hB_i$ and $\hat{\omega}=\hsigma_{\psi}^{(i)}$, this gives
\begin{eqnarray}
g_i(\psi)^2
=
\Tr\left[
(\hB_i\otimes \hB_i)\,
\left(\hsigma_{\psi}^{(i)}\right)^{\otimes 2}
\right].
\end{eqnarray}
Averaging over $\psi$ and using linearity of trace and expectation,
\begin{eqnarray}
\mathbb{E}_{\psi}[g_i(\psi)^2]
&=&
\Tr\left[
(\hB_i\otimes \hB_i)\,
\mathbb{E}_{\psi}
\left[
\left(\hsigma_{\psi}^{(i)}\right)^{\otimes 2}
\right]
\right]
\nonumber\\
&=&
\Tr\left[
(\hB_i\otimes \hB_i)\,
\hM_{i,\mathrm{LC}}^{(2)}
\right],
\end{eqnarray}
which proves Eq.~(\ref{eq:gradient_second_moment_local}).

It remains to show that the mean gradient vanishes under the Haar-product input ensemble. First note that
\begin{eqnarray}
\mathbb{E}_{\psi\sim\mu_{\mathrm{prod}}}
\left[
\ketbra{\psi}{\psi}
\right]
=
\left(
\frac{\hid}{2}
\right)^{\otimes n}
=
\frac{\hid}{2^n}.
\label{eq:avg_input_maxmixed}
\end{eqnarray}
Therefore,
\begin{eqnarray}
\mathbb{E}_{\psi}
\left[
\hrho_{\psi}(\bm{\theta})
\right]
=
\hU(\bm{\theta})
\frac{\hid}{2^n}
\hU(\bm{\theta})^{\dagger}
=
\frac{\hid}{2^n}.
\label{eq:avg_output_maxmixed}
\end{eqnarray}
Taking the partial trace over the complement of $\mathrm{LC}(i)$,
\begin{eqnarray}
\mathbb{E}_{\psi}
\left[
\hsigma_{\psi}^{(i)}
\right]
=
\frac{\hid_{\mathrm{LC}(i)}}{2^{r_i}}
=
\frac{\hid_{\mathrm{LC}(i)}}{d_i}.
\label{eq:avg_sigma_maxmixed}
\end{eqnarray}
Hence,
\begin{eqnarray}
\mathbb{E}_{\psi}[g_i(\psi)]
&=&
\Tr\left[
\hB_i\,
\mathbb{E}_{\psi}[\hsigma_{\psi}^{(i)}]
\right]
\nonumber\\
&=&
\frac{1}{d_i}\Tr(\hB_i)
=
0,
\end{eqnarray}
where Lemma~\ref{lem:Bi_basic} was used in the last step. Consequently,
\begin{eqnarray}
\Var_{\psi}[g_i(\psi)]
=
\mathbb{E}_{\psi}[g_i(\psi)^2],
\end{eqnarray}
and Eq.~(\ref{eq:gradient_variance_local}) follows from Eq.~(\ref{eq:gradient_second_moment_local}).
\end{proof}

Proposition~\ref{prop:lightcone_reduction} is the exact local statement underlying this paper's trainability narrative. The relevant object is neither the global expressibility of the full ansatz nor the global EP of the full circuit. For each parameter, the gradient variance is determined by a specific local two-copy moment, namely $\hM_{i,\mathrm{LC}}^{(2)}$, contracted against the parameter-dependent observable $\hB_i\otimes \hB_i$.

In this sense, the variance of the gradient is a \emph{local} analogue of the global two-copy formula for EP. The crucial difference is that the object depends on the parameter index $i$ and on the effective light cone induced by the cost operator. This is exactly why one should not expect a single global scalar descriptor to reconstruct the full trainability profile $\{\Var(g_i)\}_i$.

\subsection{Local Haar benchmark and barren-plateau scaling}\label{subsec:BP_Haar_benchmark}

We next compare the exact local second moment to a Haar benchmark on the $r_i$-qubit effective light-cone Hilbert space. As emphasized already in Sec.~\ref{sec:setup}, this benchmark should not be interpreted as an operator-level convergence claim for the fixed-circuit/product-input ensemble. Rather, it is a scalar reference value for the contraction relevant to gradient variance.

Let $\mu_{\Haar}^{(r_i)}$ denote the Haar measure on pure states of the $r_i$-qubit Hilbert space associated with $\mathrm{LC}(i)$, and let
\begin{eqnarray}
\hM_{\mathrm{Haar},r_i}^{(2)}
:=
\mathbb{E}_{\phi\sim\mu_{\Haar}^{(r_i)}}
\left[
\ketbra{\phi}{\phi}^{\otimes 2}
\right]
=
\frac{\hid+\hS_i}{d_i(d_i+1)},
\label{eq:local_Haar_moment}
\end{eqnarray}
where $\hS_i$ is the full swap operator on two copies of the $r_i$-qubit light-cone Hilbert space.

We first compute the Haar benchmark exactly for any traceless local observable.

\begin{lemma}[Haar benchmark for linear observables]\label{lem:Haar_gradient_benchmark}
Let $\hB=\hB^{\dagger}$ be an operator on a $d$-dimensional Hilbert space with $\Tr(\hB)=0$. For a Haar-random pure state $\ket{\phi}$, define
\begin{eqnarray}
h_{\hB}(\phi)
:=
\Tr\left[
\hB\,\ketbra{\phi}{\phi}
\right].
\label{eq:def_hB}
\end{eqnarray}
Then, we have
\begin{eqnarray}
\mathbb{E}_{\phi}[h_{\hB}(\phi)] = 0, 
\qquad
\Var_{\phi}[h_{\hB}(\phi)] = \frac{\Tr(\hB^2)}{d(d+1)}.
\label{eq:hB_var_exact}
\end{eqnarray}
In particular,
\begin{eqnarray}
\Var_{\phi}[h_{\hB}(\phi)]
\le
\frac{\norm{\hB}_{\infty}^{2}}{d+1}.
\label{eq:hB_var_bound}
\end{eqnarray}
\end{lemma}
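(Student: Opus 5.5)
The plan is to write both moments of $h_{\hB}$ as swap-operator contractions against the Haar moments $\mathbb{E}_\phi[\ketbra{\phi}{\phi}]$ and $\mathbb{E}_\phi[\ketbra{\phi}{\phi}^{\otimes 2}]$, exactly as in Proposition~\ref{prop:lightcone_reduction}. For the mean, I use the first Haar moment $\mathbb{E}_\phi[\ketbra{\phi}{\phi}]=\hid/d$. Linearity then gives $\mathbb{E}_\phi[h_{\hB}]=\Tr(\hB)/d=0$, so the variance equals the second moment $\mathbb{E}_\phi[h_{\hB}^2]$.

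For the second moment, I apply the identity in Eq.~(\ref{eq:trace_square_identity}) with $\hat X=\hat Y=\hB$ and $\hat\omega=\ketbra{\phi}{\phi}$. This gives $h_{\hB}(\phi)^2=\Tr[(\hB\otimes\hB)\ketbra{\phi}{\phi}^{\otimes2}]$. Averaging and using the standard Haar two-copy moment of Eq.~(\ref{eq:local_Haar_moment}), $(\hid+\hS)/[d(d+1)]$ with $\hS$ the full swap, yields
\begin{eqnarray}
\mathbb{E}_\phi[h_{\hB}^2]=\frac{\Tr[\hB\otimes\hB]+\Tr[(\hB\otimes\hB)\hS]}{d(d+1)} .
\end{eqnarray}
The first trace is $(\Tr\hB)^2=0$. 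For the second, I use the swap trick $\Tr[\hS(\hat A\otimes\hat C)]=\Tr(\hat A\hat C)$, which is the special case of Eq.~(\ref{eq:sm_swap_identity}) where the swap acts on the whole space. It gives $\Tr[(\hB\otimes\hB)\hS]=\Tr(\hB^2)$, which establishes Eq.~(\ref{eq:hB_var_exact}).

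For the bound in Eq.~(\ref{eq:hB_var_bound}), I estimate $\Tr(\hB^2)=\sum_k\lambda_k^2\le d\,\norm{\hB}_\infty^2$ using the eigenvalues $\lambda_k$ of the Hermitian $\hB$. Dividing by $d(d+1)$ gives the claim.

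The only step that is not a one-line manipulation is the Haar two-copy moment formula. If it is not simply cited, I would justify it by Schur--Weyl duality: the average commutes with every $\hat V^{\otimes 2}$ and is supported on $\mathrm{Sym}^2$, so it equals $\hat\Pi_{\rm sym}/\dim\mathrm{Sym}^2=(\hid+\hS)/[d(d+1)]$. This is the same fact already used in Eq.~(\ref{eq:sm_haar_moment}) with $t=2$, so I expect no real obstacle.
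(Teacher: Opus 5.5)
Your proposal is correct and follows the paper's proof: the first Haar moment gives a vanishing mean, the trace-square identity combined with the Haar two-copy moment $(\hid+\hS)/[d(d+1)]$ and the swap trick gives $\Tr(\hB^2)/[d(d+1)]$, and $\Tr(\hB^2)\le d\norm{\hB}_\infty^2$ gives the bound. Your optional Schur--Weyl justification of the two-copy moment (irreducibility of $\mathrm{Sym}^2$ plus unit trace) is also sound; the paper simply cites that formula.
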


\begin{proof}---The mean is immediate from the Haar first moment:
\begin{eqnarray}
\mathbb{E}_{\phi}\left[\ketbra{\phi}{\phi}\right]
=
\frac{\hid}{d},
\end{eqnarray}
hence
\begin{eqnarray}
\mathbb{E}_{\phi}[h_{\hB}(\phi)]
=
\frac{\Tr(\hB)}{d}
=
0.
\end{eqnarray}

For the second moment, use Eq.~(\ref{eq:trace_square_identity}) with $\hat{\omega}=\ketbra{\phi}{\phi}$ and $\hat{X}=\hat{Y}=\hB$:
\begin{eqnarray}
h_{\hB}(\phi)^2
=
\Tr\left[
(\hB\otimes \hB)
\ketbra{\phi}{\phi}^{\otimes 2}
\right].
\end{eqnarray}
Averaging over $\phi$ gives
\begin{eqnarray}
\mathbb{E}_{\phi}[h_{\hB}(\phi)^2]
&=&
\Tr\left[
(\hB\otimes \hB)\,
\hM_{\mathrm{Haar}}^{(2)}
\right]
=
\frac{1}{d(d+1)}
\Tr\left[
(\hB\otimes \hB)(\hid+\hS)
\right]
\nonumber\\
&=&
\frac{1}{d(d+1)}
\left(
\Tr(\hB)^2+\Tr(\hB^2)
\right)
=
\frac{\Tr(\hB^2)}{d(d+1)}.
\end{eqnarray}
Since the mean is zero, this is the variance, proving Eq.~(\ref{eq:hB_var_exact}). Finally,
\begin{eqnarray}
\Tr(\hB^2)\le d\norm{\hB}_{\infty}^{2},
\end{eqnarray}
which immediately yields Eq.~(\ref{eq:hB_var_bound}).
\end{proof}

Applying the lemma to the operator $\hB_i$ from Proposition~\ref{prop:lightcone_reduction} yields the local Haar benchmark for gradients.

\begin{corollary}[Local 2-copy criterion for barren-plateau scaling]\label{cor:local2copy_BP}
Let $g_i(\psi)$ be the gradient associated with the parameter $\theta_i$, and let $\hB_i$, $r_i$, $d_i$, and $\hM_{i,\mathrm{LC}}^{(2)}$ be as in Proposition~\ref{prop:lightcone_reduction}. Define the Haar benchmark variance on the $r_i$-qubit light-cone Hilbert space by
\begin{eqnarray}
\Var_{\mathrm{Haar},r_i}[g_i] := \Var_{\phi\sim\mu_{\Haar}^{(r_i)}} \left[ \Tr\left(\hB_i\ketbra{\phi}{\phi}\right) \right] = \Tr\left[(\hB_i\otimes\hB_i)\,\hM_{\mathrm{Haar},r_i}^{(2)}\right].
\label{eq:def_local_Haar_var}
\end{eqnarray}
Then,
\begin{eqnarray}
\Var_{\psi}[g_i(\psi)]
=
\Var_{\mathrm{Haar},r_i}[g_i]
+
\Tr\left[
(\hB_i\otimes \hB_i)
\left(
\hM_{i,\mathrm{LC}}^{(2)}-\hM_{\mathrm{Haar},r_i}^{(2)}
\right)
\right].
\label{eq:var_difference_exact}
\end{eqnarray}
Consequently,
\begin{eqnarray}
\left|
\Var_{\psi}[g_i(\psi)]
-
\Var_{\mathrm{Haar},r_i}[g_i]
\right|
\le
\norm{\hB_i}_{2}^{2}\,
\norm{
\hM_{i,\mathrm{LC}}^{(2)}-\hM_{\mathrm{Haar},r_i}^{(2)}
}_{2}.
\label{eq:var_difference_bound}
\end{eqnarray}
Moreover,
\begin{eqnarray}
\Var_{\mathrm{Haar},r_i}[g_i]
=
\frac{\Tr(\hB_i^2)}{d_i(d_i+1)}
\le
\frac{\norm{\hB_i}_{\infty}^{2}}{d_i+1}
\le
\frac{4\norm{\hG_i}_{\infty}^{2}\norm{\hO}_{\infty}^{2}}{2^{r_i}+1}.
\label{eq:local_Haar_BP_scale}
\end{eqnarray}
Therefore, whenever the correction term in Eq.~(\ref{eq:var_difference_exact}) is at most of order $2^{-r_i}$, one has
\begin{eqnarray}
\Var_{\psi}[g_i(\psi)]
=
O(2^{-r_i}).
\label{eq:BP_scale_result}
\end{eqnarray}
In particular, if $r_i=\Theta(n)$, this is exponentially small in the total system size and thus exhibits barren-plateau scaling.
\end{corollary}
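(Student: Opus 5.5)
The corollary has three claims: the exact decomposition, its Hilbert--Schmidt error bound, and the chain of Haar-scale inequalities. Each follows from Proposition~\ref{prop:lightcone_reduction} and Lemma~\ref{lem:Haar_gradient_benchmark}, so the plan is to assemble them in that order.

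\emph{Exact decomposition.} Start from Eq.~(\ref{eq:gradient_variance_local}), $\Var_\psi[g_i]=\Tr[(\hB_i\otimes\hB_i)\hM^{(2)}_{i,\mathrm{LC}}]$. Add and subtract $\hM^{(2)}_{\mathrm{Haar},r_i}$ inside the trace and use linearity. This gives Eq.~(\ref{eq:var_difference_exact}) once the added term is identified with $\Var_{\mathrm{Haar},r_i}[g_i]$. That identification is the second equality in Eq.~(\ref{eq:def_local_Haar_var}). It follows from Eq.~(\ref{eq:trace_square_identity}) applied to $\hat\omega=\ketbra{\phi}{\phi}$ on the light-cone space, together with the vanishing Haar mean from Lemma~\ref{lem:Haar_gradient_benchmark}. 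Since $\Tr\hB_i=0$ by Lemma~\ref{lem:Bi_basic}, the second moment equals the variance. Throughout I regard $\hB_i$ as its restriction to $\mathrm{LC}(i)$, an operator on a $d_i$-dimensional space. This is legitimate because $\hB_i=\hB_{i,\mathrm{LC}}\otimes\hid$ by the definition of support, and the reduced moment lives on $\mathrm{LC}(i)$. One point needs care: the trace and the Hilbert--Schmidt norm of $\hB_i$ must be understood on the restricted space, not on the full $2^n$-dimensional space.

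\emph{Error bound.} Apply Hilbert--Schmidt Cauchy--Schwarz to the correction term, as was done for Eq.~(\ref{eq:sm_EP_stability}):
\[
\bigl|\Tr[(\hB_i\otimes\hB_i)\Delta]\bigr|\le\norm{\hB_i\otimes\hB_i}_2\norm{\Delta}_2 .
\]
Then use multiplicativity, $\norm{\hB_i\otimes\hB_i}_2=\norm{\hB_i}_2^2$. This is the same step used for $\norm{\hT_Q}_2$ above.

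\emph{Haar scale.} The equality $\Var_{\mathrm{Haar},r_i}[g_i]=\Tr(\hB_i^2)/[d_i(d_i+1)]$ and the first inequality are exactly Eqs.~(\ref{eq:hB_var_exact})--(\ref{eq:hB_var_bound}) with $d=d_i$. The last inequality substitutes Eq.~(\ref{eq:Bi_norm_bound}) and $d_i=2^{r_i}$.

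\emph{Scaling conclusion.} Given the hypothesis that the correction is $O(2^{-r_i})$, combine it with the Haar bound. Here $\norm{\hG_i}_\infty$ is bounded for a fixed gate set and $\norm{\hO}_\infty\le1$, so the Haar term is also $O(2^{-r_i})$, and the sum is $O(2^{-r_i})$. If $r_i=\Theta(n)$, this is exponentially small in $n$.

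None of these steps is a serious obstacle. The only real subtlety is the bookkeeping between $\hB_i$ on $\mathcal H_n$ and its light-cone restriction. Getting it wrong would introduce spurious factors of $2^{n-r_i}$ in $\Tr\hB_i^2$ and in $\norm{\hB_i}_2$, so I would state the restriction convention explicitly at the outset.
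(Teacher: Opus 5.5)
Your proposal is correct and follows the paper's proof essentially step for step. You add and subtract $\hM^{(2)}_{\mathrm{Haar},r_i}$ in Eq.~(\ref{eq:gradient_variance_local}), identify the Haar term via Lemma~\ref{lem:Haar_gradient_benchmark}, bound the correction by Hilbert--Schmidt Cauchy--Schwarz with $\norm{\hB_i\otimes\hB_i}_2=\norm{\hB_i}_2^2$, and chain Lemmas~\ref{lem:Haar_gradient_benchmark} and~\ref{lem:Bi_basic} for the Haar scale. Your explicit convention that $\Tr(\hB_i^2)$ and $\norm{\hB_i}_2$ refer to the restriction of $\hB_i$ to $\mathrm{LC}(i)$ is a worthwhile clarification that the paper leaves implicit; without it, spurious factors of $2^{n-r_i}$ would appear.
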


\begin{proof}---The contraction form in Eq.~(\ref{eq:def_local_Haar_var}) is Lemma~\ref{lem:Haar_gradient_benchmark} with $\hB=\hB_i$ and $d=d_i$. To make the decomposition in Eq.~(\ref{eq:var_difference_exact}) explicit, write
\begin{eqnarray}
\hM_{i,\mathrm{LC}}^{(2)}
=
\hM_{\mathrm{Haar},r_i}^{(2)}
+
\left(
\hM_{i,\mathrm{LC}}^{(2)}-
\hM_{\mathrm{Haar},r_i}^{(2)}
\right).
\label{eq:local_moment_Haar_decomposition}
\end{eqnarray}
Substituting this identity into Eq.~(\ref{eq:gradient_variance_local}) and using Eq.~(\ref{eq:def_local_Haar_var}) yields Eq.~(\ref{eq:var_difference_exact}).

For Eq.~(\ref{eq:var_difference_bound}), apply the Hilbert--Schmidt Cauchy--Schwarz inequality:
\begin{eqnarray}
\abs{\Tr\left[ (\hB_i\otimes \hB_i) \left( \hM_{i,\mathrm{LC}}^{(2)}-\hM_{\mathrm{Haar},r_i}^{(2)} \right) \right]} &\le& \norm{\hB_i\otimes \hB_i}_{2}\, \norm{ \hM_{i,\mathrm{LC}}^{(2)} \hM_{\mathrm{Haar},r_i}^{(2)} }_{2} \nonumber\\
	&=& \norm{\hB_i}_{2}^{2}\, \norm{ \hM_{i,\mathrm{LC}}^{(2)}-\hM_{\mathrm{Haar},r_i}^{(2)} }_{2}.
\end{eqnarray}

Finally, Eq.~(\ref{eq:local_Haar_BP_scale}) is Lemma~\ref{lem:Haar_gradient_benchmark} together with Lemma~\ref{lem:Bi_basic}:
\begin{eqnarray}
\Var_{\mathrm{Haar},r_i}[g_i]
\le
\frac{\norm{\hB_i}_{\infty}^{2}}{d_i+1}
\le
\frac{4\norm{\hG_i}_{\infty}^{2}\norm{\hO}_{\infty}^{2}}{2^{r_i}+1}.
\end{eqnarray}
The final statement then follows directly from Eq.~(\ref{eq:var_difference_exact}).
\end{proof}

Corollary~\ref{cor:local2copy_BP} is the precise local statement behind the informal slogan that ``2-design behavior causes the barren plateaus.'' What actually enters the variance of the gradient is the contraction of the local two-copy moment against $\hB_i\otimes \hB_i$. If that contraction is already Haar-like on an effective subsystem of size $r_i$, then the gradient variance is forced down to the Haar scale $O(2^{-r_i})$.

A comment is needed for consistency with Sec.~\ref{sec:setup}. In the strict fixed-circuit/product-input model, Proposition~\ref{prop:spectral_obstruction} prevents literal operator convergence of the full moment orbit to a pure-state Haar moment operator. The same caveat applies here if one interprets $\hM_{i,\mathrm{LC}}^{(2)}$ as arising from a fixed-circuit orbit of local product inputs. Accordingly, the rigorous content of Corollary~\ref{cor:local2copy_BP} should be read at the level of the relevant scalar contraction in Eq.~(\ref{eq:var_difference_exact}), not as an unrestricted operator identity claim. In the usual random-parameter or random-circuit setting of McClean-type theorems, local approximate 2-designity is precisely a sufficient mechanism that makes this contraction Haar-like. In the present fixed-circuit formulation, the same mechanism is isolated as an exact local second-moment criterion.

It is also worth stressing how the light-cone size $r_i$ depends on the architecture and on the choice of cost function. If $\hO$ is a sum of local observables,
\begin{eqnarray}
\hO=\sum_{\alpha} c_{\alpha} \hat{O}_{\alpha},
\label{eq:O_local_sum}
\end{eqnarray}
then
\begin{eqnarray}
\hB_i
=
\sum_{\alpha} c_{\alpha}\, i[\hH_i^{\mathrm{out}},\hat{O}_{\alpha}],
\label{eq:Bi_local_sum}
\end{eqnarray}
and only those terms whose supports overlap the forward cone $\mathcal{F}(i)$ contribute. For bounded-depth circuits on bounded-degree lattices and local costs, the effective light cone often remains small, so the Haar benchmark in Eq.~(\ref{eq:local_Haar_BP_scale}) is not exponentially small in $n$. By contrast, for global costs or sufficiently deep circuits, the effective light cone can grow to $r_i=\Theta(n)$, in which case the same formula yields the familiar exponentially vanishing gradient scale. This recovers, in a single unified expression, the well-known distinction between local-cost and global-cost barren plateaus.

\subsection{Interpretation of global descriptors and local trainability}\label{subsec:BP_interpretation}

We can now articulate the structural relation among the quantities studied so far.

\begin{remark}[Moment hierarchy of coverage, fluctuation, and trainability]\label{rem:moment_hierarchy_BP}
The results of Secs.~\ref{sec:moment} and \ref{sec:gradientBP} can be condensed as
\begin{eqnarray}
\text{BP is local-2-copy},\qquad
\text{EP is global-2-copy-scalar},\qquad
\text{EPD is global-4-copy-scalar}.
\label{eq:BP_EP_EPD_slogan}
\end{eqnarray}
Accordingly, $\mathrm{EP}(\hU)$ alone cannot determine barren-plateau behavior. It is a global mean-entangling descriptor of one fixed circuit, whereas $\Var(g_i)$ is controlled parameter-by-parameter by a local second moment on the effective light cone. The quantity $\mathrm{EPD}(\hU)$ does not determine $\Var(g_i)$ either, but because it probes higher-order fluctuation structure invisible to EP, it can serve as a more informative diagnostic of the randomization process associated with plateau formation.
\end{remark}

Remark~\ref{rem:moment_hierarchy_BP} should be read as the conceptual punchline of the present section. The gradient variance is not governed by a one-dimensional trade-off between ``more entanglement'' and ``less trainability.'' Rather, trainability is a local causal property. The role of global descriptors such as expressibility, EP, and EPD is indirect: they provide increasingly refined information about how broadly and how unevenly the circuit explores the state space, but the actual gradient suppression occurs when the \emph{local} two-copy contraction seen by each parameter has already reached its Haar-like scale.

This is precisely the reason why one can, in principle, have an intermediate regime in which a circuit already has large EP or near-saturated coverage, while the local two-copy moments governing the gradients have not yet fully collapsed. It is also the reason why EPD, being a fluctuation-sensitive descriptor rather than a pure mean descriptor, can empirically align more closely with plateau onset than EP alone. The next sections exploit these observations to develop the structural EP--EPD picture and the associated numerical diagnostics of trainability.

\section{Why Mean Descriptors Are Not Enough}\label{sec:mean_not_enough}

The preceding sections already suggest a strong obstruction to any one-dimensional account of trainability. On the one hand, the fixed-circuit entangling power $\mathrm{EP}(\hU)$ is a single scalar functional of the \emph{global} two-copy output moment:
\begin{eqnarray}
\mathrm{EP}(\hU)=\Tr\left[\hK_Q \hM_{\hU}^{(2)}\right].
\label{eq:mean_not_enough_EP_recall}
\end{eqnarray}
On the other hand, the gradient-variance profile of a parametrized circuit,
\begin{eqnarray}
\Gamma(\hU)
:=
\left(
\Var_{\psi}[g_1(\psi)],
\dots,
\Var_{\psi}[g_p(\psi)]
\right),
\label{eq:def_trainability_profile}
\end{eqnarray}
is determined parameter-by-parameter by a family of \emph{local} two-copy contractions on effective light cones:
\begin{eqnarray}
\Var_{\psi}[g_i(\psi)]
=
\Tr\left[
(\hB_i\otimes \hB_i)\,
\hM_{i,\mathrm{LC}}^{(2)}
\right].
\label{eq:mean_not_enough_var_recall}
\end{eqnarray}
The structural mismatch is immediate: \emph{a single global scalar cannot, in general, encode an entire parameter-indexed collection of local second moments.}

The aim of the present section is to make this obstruction explicit. We do so in the strongest possible form, namely by constructing two entangling blocks that have the \emph{same mean entangling power} but induce \emph{different gradient variances} when inserted into the same trainable architecture. In this sense, the failure of mean descriptors is not merely a matter of asymptotic scaling or numerical correlation. It is already present at the level of an explicit, finite-dimensional, analytically tractable counterexample.

At the same time, the example will show why EPD is a useful refinement. Although EPD is still not the trainability object itself, it distinguishes the two blocks that EP cannot distinguish. This is exactly the kind of separation one expects from the hierarchy established earlier:
\begin{eqnarray}
\text{EP: global mean,}
\qquad
\text{EPD: global fluctuation,}
\qquad
\text{BP: local second-moment trainability.}
\label{eq:hierarchy_recall_mean_not_enough}
\end{eqnarray}

\subsection{A one-dimensional descriptor cannot parametrize a trainability profile}\label{subsec:nonidentifiability_setup}

Before giving the explicit construction, it is useful to state precisely what is meant by ``mean descriptors are not enough.'' For a fixed architecture and fixed cost class, the trainability data of interest is not a single number but the vector $\Gamma(\hU)$ in Eq.~(\ref{eq:def_trainability_profile}). Even if one focuses on a single designated parameter, the associated gradient variance remains a local object that depends on the interaction between the parameter generator, the cost observable, and the effective light cone. This already suggests non-identifiability from any one-number global mean descriptor.

To prepare the constructive theorem below, we record two elementary lemmas.

\begin{lemma}[Conjugation of anticommuting Pauli strings]\label{lem:anticommuting_pauli_conjugation}
Let $\hat P$ and $\hat Q$ be Hermitian Pauli strings satisfying
\begin{eqnarray}
\hat P^2=\hat Q^2=\hid,
\qquad
\{\hat P,\hat Q\}=0.
\end{eqnarray}
Then, for all real $\theta$,
\begin{eqnarray}
e^{i\theta \hat P} \hat Q e^{-i\theta \hat P}
=
\hat Q\cos(2\theta)
+
i\hat P\hat Q \sin(2\theta).
\label{eq:anticommuting_pauli_conjugation}
\end{eqnarray}
\end{lemma}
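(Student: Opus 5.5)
The plan is to expand the exponentials in a power series and use the anticommutation relation to push $\hat Q$ through. From $\{\hat P,\hat Q\}=0$ one gets $\hat P\hat Q=-\hat Q\hat P$. Induction then gives $\hat P^k\hat Q=(-1)^k\hat Q\hat P^k$, and termwise this yields
\begin{eqnarray}
e^{i\theta\hat P}\hat Q=\hat Q\,e^{-i\theta\hat P}.
\end{eqnarray}
Consequently
\begin{eqnarray}
e^{i\theta\hat P}\hat Q e^{-i\theta\hat P}=\hat Q e^{-2i\theta\hat P}.
\end{eqnarray}
This absorbs the conjugation into a single exponential with doubled angle, which is the origin of the $2\theta$ in Eq.~(\ref{eq:anticommuting_pauli_conjugation}).

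Next, I would evaluate $e^{-2i\theta\hat P}$ using $\hat P^2=\hid$. Splitting the exponential series into even and odd powers gives
\begin{eqnarray}
e^{-2i\theta\hat P}=\cos(2\theta)\hid-i\sin(2\theta)\hat P.
\end{eqnarray}
Multiplying on the left by $\hat Q$ gives $\hat Q\cos(2\theta)-i\sin(2\theta)\hat Q\hat P$. Applying anticommutation once more, $-\hat Q\hat P=\hat P\hat Q$, turns this into $\hat Q\cos(2\theta)+i\hat P\hat Q\sin(2\theta)$, which is exactly the claimed identity.

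As a consistency check, I would confirm that the right-hand side is Hermitian. Indeed $(i\hat P\hat Q)^\dagger=-i\hat Q\hat P=i\hat P\hat Q$, so it is Hermitian, as it must be since it is a unitary conjugate of the Hermitian $\hat Q$.

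No step is a real obstacle. The only point needing care is the sign bookkeeping: the direction in which $\hat Q$ is moved past the exponential, and the final reordering of $\hat Q\hat P$ into $\hat P\hat Q$. An alternative route, useful as a cross-check, is to differentiate $f(\theta)=e^{i\theta\hat P}\hat Q e^{-i\theta\hat P}$. This gives $f'(\theta)=i[\hat P,f(\theta)]$, and $[\hat P,\hat Q]=2\hat P\hat Q$ and $[\hat P,\hat P\hat Q]=2\hat Q$ show the span of $\{\hat Q, i\hat P\hat Q\}$ is invariant. The result is a $2\times2$ rotation ODE with angular frequency $2$, whose solution with $f(0)=\hat Q$ matches the formula.
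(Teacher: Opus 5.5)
Your proof is correct, but your main argument differs from the paper's. The paper's proof is essentially the differential-equation route you list as a cross-check. It sets $F(\theta)=e^{i\theta\hat P}\hat Q e^{-i\theta\hat P}$ and $R(\theta)=e^{i\theta\hat P}(i\hat P\hat Q)e^{-i\theta\hat P}$. It then uses $[\hat P,\hat Q]=2\hat P\hat Q$ and $[\hat P,i\hat P\hat Q]=2i\hat Q$ to obtain the closed system $F'=2R$, $R'=-2F$, and appeals to uniqueness of its solution with $F(0)=\hat Q$, $R(0)=i\hat P\hat Q$.

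Your push-through identity $e^{i\theta\hat P}\hat Q=\hat Q e^{-i\theta\hat P}$ instead collapses the conjugation into the single factor $\hat Q e^{-2i\theta\hat P}$. The cosine/sine splitting for an involution then finishes the computation. This route needs only termwise manipulation of convergent power series rather than an ODE uniqueness argument, and it makes the doubled angle transparent.

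The paper's adjoint-action argument is the more portable technique. It applies whenever the nested commutators of the generator with $\hat Q$ stay in a small linear span, not only under exact anticommutation.

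Both proofs use only $\hat P^2=\hid$ and $\{\hat P,\hat Q\}=0$; the hypothesis $\hat Q^2=\hid$ is never needed.
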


\begin{proof}---Define
\begin{eqnarray}
F(\theta):=e^{i\theta \hat P}\hat Qe^{-i\theta \hat P}.
\end{eqnarray}
Then,
\begin{eqnarray}
F'(\theta) = ie^{i\theta \hat P}[\hat P,\hat Q]e^{-i\theta \hat P} = 2i\,e^{i\theta \hat P}\hat P\hat Q\,e^{-i\theta \hat P},
\label{eq:Fprime_anticomm}
\end{eqnarray}
because $\{\hat P,\hat Q\}=0$ implies $[\hat P,\hat Q]=2\hat P\hat Q$. Likewise, with $R(\theta):=e^{i\theta \hat P}(i\hat P\hat Q)e^{-i\theta \hat P}$, one finds
\begin{eqnarray}
R'(\theta)
&=&
ie^{i\theta \hat P}[\hat P,i\hat P\hat Q]e^{-i\theta \hat P}.
\end{eqnarray}
Since $\hat P(i\hat P\hat Q)=i\hat Q$ and $(i\hat P\hat Q)\hat P=-i\hat Q$, we have
\begin{eqnarray}
[\hat P,i\hat P\hat Q]=2i\hat Q,
\end{eqnarray}
and hence,
\begin{eqnarray}
R'(\theta)=-2F(\theta).
\label{eq:Rprime_anticomm}
\end{eqnarray}
The pair $(F,R)$ therefore satisfies the linear system
\begin{eqnarray}
F'(\theta)=2R(\theta),
\qquad
R'(\theta)=-2F(\theta),
\end{eqnarray}
with initial conditions
\begin{eqnarray}
F(0)=\hat Q,
\qquad
R(0)=i\hat P\hat Q.
\end{eqnarray}
The unique solution is exactly Eq.~(\ref{eq:anticommuting_pauli_conjugation}).
\end{proof}

\begin{lemma}[Single-qubit Haar moments of Bloch coordinates]\label{lem:single_qubit_Haar_bloch}
Let $\ket{\psi}\sim \mu_{\Haar}^{(1)}$ be Haar-random on one qubit, and define
\begin{eqnarray}
x:=\bra{\psi}\hX\ket{\psi},
\qquad
y:=\bra{\psi}\hY\ket{\psi},
\qquad
z:=\bra{\psi}\hZ\ket{\psi}.
\end{eqnarray}
Then,
\begin{eqnarray}
\mathbb{E}[x]=\mathbb{E}[y]=\mathbb{E}[z]=0,
\quad\text{and}\quad
\mathbb{E}[x^2]=\mathbb{E}[y^2]=\mathbb{E}[z^2]=\frac{1}{3}.
\label{eq:bloch_second_moments}
\end{eqnarray}
More generally,
\begin{eqnarray}
\mathbb{E}[x_ax_b]=\frac{\delta_{ab}}{3},
\qquad
a,b\in\{x,y,z\}.
\label{eq:bloch_covariance}
\end{eqnarray}
\end{lemma}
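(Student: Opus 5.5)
The plan is to compute these moments from the explicit form of the single-qubit Haar moment in Eq.~(\ref{eq:sm_one_qubit_moment}), rather than by parametrizing the Bloch sphere. A Bloch-sphere parametrization would also work, but the moment-operator route stays within the paper's own tools. For a one-qubit pure state, $x_a=\Tr[\hat\sigma_a\ketbra{\psi}{\psi}]$ with $\hat\sigma_a\in\{\hX,\hY,\hZ\}$.

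For the first moments, I would use $\mathbb{E}[\ketbra{\psi}{\psi}]=\hid/2$, which is the $t=1$ case of Eq.~(\ref{eq:sm_one_qubit_moment}). This gives $\mathbb{E}[x_a]=\Tr(\hat\sigma_a)/2=0$ at once.

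For the second moments, I would apply the identity in Eq.~(\ref{eq:trace_square_identity}):
\begin{eqnarray}
x_a x_b=\Tr\bigl[(\hat\sigma_a\otimes\hat\sigma_b)\ketbra{\psi}{\psi}^{\otimes2}\bigr].
\end{eqnarray}
Averaging and using the $t=2$ moment $(\hid+\hS)/6$, which is the case $d=2$ of Eq.~(\ref{eq:local_Haar_moment}), gives
\begin{eqnarray}
\mathbb{E}[x_a x_b]=\frac{1}{6}\left(\Tr\hat\sigma_a\,\Tr\hat\sigma_b+\Tr(\hat\sigma_a\hat\sigma_b)\right).
\end{eqnarray}
The second term is the swap contraction $\Tr[(\hA\otimes\hB)\hS]=\Tr(\hA\hB)$, which is Eq.~(\ref{eq:sm_swap_identity}) for a single site. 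Pauli orthogonality, $\Tr(\hat\sigma_a\hat\sigma_b)=2\delta_{ab}$, then yields $\mathbb{E}[x_ax_b]=2\delta_{ab}/6=\delta_{ab}/3$. This is Eq.~(\ref{eq:bloch_covariance}), and the diagonal case gives Eq.~(\ref{eq:bloch_second_moments}).

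No step is a real obstacle; the argument is essentially Lemma~\ref{lem:Haar_gradient_benchmark} polarized to two different observables. The only point to check is that the $t=2$ normalization $1/[d(d+1)]=1/6$ at $d=2$ agrees with $\hat\Pi^{(2,1)}_{\rm sym}/3$ in Eq.~(\ref{eq:sm_one_qubit_moment}). It does, because $\hat\Pi^{(2,1)}_{\rm sym}=(\hid+\hS)/2$.

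As a cross-check, I would note an independent symmetry argument. Since $x^2+y^2+z^2=1$ for pure states, $SU(2)$ invariance of the Haar measure makes the three diagonal second moments equal, so each is $1/3$. The off-diagonal moments vanish because the reflection $(x,y,z)\mapsto(-x,y,z)$, induced by conjugation with $\hZ$ composed with a suitable rotation, preserves the measure and flips the sign of $x$.
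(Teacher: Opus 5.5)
Your main argument is correct, and it takes a genuinely different route from the paper's proof.

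The paper argues geometrically. The one-qubit Haar measure induces the uniform measure on the Bloch sphere, and rotational invariance makes the means vanish and the three diagonal second moments equal. The pure-state constraint $x^2+y^2+z^2=1$ then fixes each diagonal moment to $1/3$, and the mixed moments are dismissed in one line ``by rotational symmetry.''

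You instead contract the two-copy moment $(\hid+\hS)/6$ against $\hat{\sigma}_a\otimes\hat{\sigma}_b$. Pauli orthogonality $\Tr(\hat{\sigma}_a\hat{\sigma}_b)=2\delta_{ab}$ then gives the diagonal and off-diagonal entries at once. There is no symmetry case analysis and no separate use of the unit-length constraint, since purity is already encoded in the symmetric-subspace projector.

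Your route buys generality and consistency with the rest of the paper. It is exactly the polarized form of Lemma~\ref{lem:Haar_gradient_benchmark}, giving $\mathbb{E}[\bra{\phi}\hA\ket{\phi}\bra{\phi}\hB\ket{\phi}]=\Tr(\hA\hB)/[d(d+1)]$ for any traceless Hermitian $\hA,\hB$ in any dimension $d$. The paper's route buys elementarity: it needs only the Bloch-sphere picture, not the $t=2$ moment formula. The price is that the vanishing of the off-diagonal moments is left somewhat implicit.

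One correction to your cross-check, which does not affect the main proof. The reflection $(x,y,z)\mapsto(-x,y,z)$ has determinant $-1$. It therefore cannot come from conjugation with $\hZ$ composed with any rotation, because every unitary conjugation acts on the Bloch vector as an element of $SO(3)$. The reflection is realized instead by the antiunitary complex conjugation in the computational basis, which gives $(x,-y,z)$, followed by conjugation with $\hZ$.

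More simply, $\pi$ rotations are enough. Conjugation by $\hY$ sends $(x,y,z)\mapsto(-x,y,-z)$, which gives $\mathbb{E}[xy]=\mathbb{E}[yz]=0$. Conjugation by $\hZ$ sends $(x,y,z)\mapsto(-x,-y,z)$, which gives $\mathbb{E}[xz]=0$.
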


\begin{proof}---The Haar measure on one qubit induces the uniform distribution on the Bloch sphere. Hence the random vector
\begin{eqnarray}
\vec r := (x,y,z)
\end{eqnarray}
is rotationally invariant and satisfies
\begin{eqnarray}
x^2+y^2+z^2=1
\end{eqnarray}
almost surely. Rotational invariance implies
\begin{eqnarray}
\mathbb{E}[x]=\mathbb{E}[y]=\mathbb{E}[z]=0
\end{eqnarray}
and
\begin{eqnarray}
\mathbb{E}[x^2]=\mathbb{E}[y^2]=\mathbb{E}[z^2].
\end{eqnarray}
Summing these equal quantities and using $\mathbb{E}[x^2+y^2+z^2]=1$ gives Eq.~(\ref{eq:bloch_second_moments}). The mixed moments in Eq.~(\ref{eq:bloch_covariance}) vanish by rotational symmetry.
\end{proof}

\subsection{Constructive equal-EP, unequal-trainability theorem}\label{subsec:constructive_separation}

We now present the promised explicit separation. The two entanglers are chosen from the canonical two-qubit KAK family analyzed in the preceding EPD work~\cite{SM-ChoBang2026EPD}. Concretely, define
\begin{eqnarray}
\hE_{\mathrm{C}}
:=
\exp\left(
-i\frac{\pi}{4}\hX_1\hX_2
\right),
\label{eq:def_EC}
\end{eqnarray}
and
\begin{eqnarray}
\hE_{\mathrm{B}}
:=
\exp\left(
-i\frac{\pi}{4}\hX_1\hX_2
-i\frac{\pi}{8}\hY_1\hY_2
\right).
\label{eq:def_EB}
\end{eqnarray}
Because $\hX_1\hX_2$ and $\hY_1\hY_2$ commute, one may also write
\begin{eqnarray}
\hE_{\mathrm{B}}
=
\exp\left(
-i\frac{\pi}{4}\hX_1\hX_2
\right)
\exp\left(
-i\frac{\pi}{8}\hY_1\hY_2
\right).
\label{eq:def_EB_factorized}
\end{eqnarray}
These are the canonical Cartan representatives with parameters
\begin{eqnarray}
\chi_{\mathrm{C}}=\left(\frac{\pi}{4},0,0\right),
\qquad
\chi_{\mathrm{B}}=\left(\frac{\pi}{4},\frac{\pi}{8},0\right),
\end{eqnarray}
which are locally equivalent to the CNOT and $B$-gate classes, respectively~\cite{SM-ChoBang2026EPD}.

We embed each entangler into the same one-parameter hardware-efficient block,
\begin{eqnarray}
\hU_{\alpha}(\theta)
:=
\hE_{\alpha}
\left(
e^{-i\theta \hZ_1/2}\otimes \hid_2
\right),
\qquad
\alpha\in\{\mathrm{C},\mathrm{B}\},
\label{eq:def_UEalpha}
\end{eqnarray}
and consider the same cost observable for both:
\begin{eqnarray}
\hO := \hid_1\otimes \hZ_2.
\label{eq:def_O_separation}
\end{eqnarray}
For an input product state $\ket{\psi}=\ket{\psi_1}\otimes \ket{\psi_2}$ sampled from $\mu_{\mathrm{prod}}$, define
\begin{eqnarray}
C_{\alpha}(\theta,\psi)
:=
\bra{\psi}
\hU_{\alpha}(\theta)^{\dagger}
\hO
\hU_{\alpha}(\theta)
\ket{\psi},
\label{eq:def_Calpha}
\end{eqnarray}
and the corresponding gradient
\begin{eqnarray}
g_{\alpha}(\psi,\theta)
:=
\partial_{\theta} C_{\alpha}(\theta,\psi).
\label{eq:def_galpha}
\end{eqnarray}

We can now state the central result of this section.

\begin{theorem}[Constructive equal-EP, unequal-trainability theorem]\label{thm:constructive_equalEP_unequaltrainability}
The two one-parameter blocks $\hU_{\mathrm{C}}(\theta)$ and $\hU_{\mathrm{B}}(\theta)$ defined in Eq.~(\ref{eq:def_UEalpha}) satisfy the following properties.

\textbf{(i) Equal mean entangling power but unequal entangling-power deviation~\cite{SM-ChoBang2026EPD}.}
For every $\theta$,
\begin{eqnarray}
\mathrm{EP}(\hU_{\mathrm{C}}(\theta))
=
\mathrm{EP}(\hU_{\mathrm{B}}(\theta))
=
\frac{4}{9},
\label{eq:equal_EP_theta_all}
\end{eqnarray}
whereas
\begin{eqnarray}
\mathrm{EPD}(\hU_{\mathrm{C}}(\theta))
=
\frac{4\sqrt{11}}{45},
\qquad
\mathrm{EPD}(\hU_{\mathrm{B}}(\theta))
=
\frac{2\sqrt{35}}{45}.
\label{eq:unequal_EPD_theta_all}
\end{eqnarray}
In particular,
\begin{eqnarray}
\mathrm{EP}(\hU_{\mathrm{C}}(\theta))
=
\mathrm{EP}(\hU_{\mathrm{B}}(\theta)),
\qquad
\mathrm{EPD}(\hU_{\mathrm{C}}(\theta))
\neq
\mathrm{EPD}(\hU_{\mathrm{B}}(\theta)).
\label{eq:sameEP_diffEPD}
\end{eqnarray}

\textbf{(ii) Unequal gradient variance in the same trainable architecture.}
For every $\theta$,
\begin{eqnarray}
\Var_{\psi}\left[g_{\mathrm{C}}(\psi,\theta)\right]
=
\frac{1}{9},
\qquad
\Var_{\psi}\left[g_{\mathrm{B}}(\psi,\theta)\right]
=
\frac{1}{18}.
\label{eq:grad_var_explicit_difference}
\end{eqnarray}
Hence, the two blocks have identical mean entangling power but different trainability, even though the trainable layer, parameter placement, and cost observable are otherwise the same.
\end{theorem}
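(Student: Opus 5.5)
The plan has two independent parts: local invariance reduces part (i) to a computation on the bare entanglers, and part (ii) is an exact Heisenberg-picture calculation built on Lemma~\ref{lem:anticommuting_pauli_conjugation} and Lemma~\ref{lem:single_qubit_Haar_bloch}.

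\emph{Part (i).} Since $\hU_\alpha(\theta)=\hE_\alpha\,(e^{-i\theta\hZ_1/2}\otimes\hid_2)$ is $\hE_\alpha$ times a local pre-unitary, Eq.~(\ref{eq:sm_local_invariance}) gives $\EP(\hU_\alpha(\theta))=\EP(\hE_\alpha)$ and $\EPD(\hU_\alpha(\theta))=\EPD(\hE_\alpha)$ for every $\theta$. It therefore suffices to evaluate the two bare entanglers.

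For $n=2$ and a pure output, both one-qubit purities coincide, so $Q=2(1-\Tr\hrho_1^2)=1-|\vec r_1^{\,\rm out}|^2$. Write the inputs in terms of their Bloch vectors $\vec a,\vec b$. For a canonical gate $\exp(-i\sum_k c_k\hat\sigma^k_1\hat\sigma^k_2)$ I will expand the output reduced Bloch vector $\vec r_1^{\,\rm out}$ as a polynomial in $\vec a,\vec b$. Its components are the expectations of the Heisenberg-evolved $\hat\sigma^k_1$, and these follow from repeated use of Lemma~\ref{lem:anticommuting_pauli_conjugation}. This makes $Q$ a polynomial of degree at most two in each of $\vec a$ and $\vec b$.

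Averaging $Q$ needs only the second moments of Lemma~\ref{lem:single_qubit_Haar_bloch}, which should give $4/9$ for both gates. The variance additionally needs the fourth Bloch moments $\mathbb E[a_k^4]=1/5$ and $\mathbb E[a_k^2a_l^2]=1/15$ for $k\ne l$, with odd moments vanishing. With these I expect to reproduce $\EPD^2=176/2025$ for $\hE_{\rm C}$ and $140/2025$ for $\hE_{\rm B}$, consistent with the KAK-family values of the earlier EPD work, which I may cite as a cross-check.

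This fourth-moment bookkeeping for $\hE_{\rm B}$ is the step I expect to be the main obstacle. The cross terms between the $\hX\hX$ and $\hY\hY$ rotations must be tracked carefully. To keep it manageable I will express $Q$ through the correlation-tensor form of $\hE^{\dagger}(\hat\sigma^k_1\otimes\hid)\hE$ before averaging.

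\emph{Part (ii).} Work in the Heisenberg picture: $C_\alpha=\bra{\psi}R^\dagger\hE_\alpha^\dagger\hZ_2\hE_\alpha R\ket{\psi}$ with $R=e^{-i\theta\hZ_1/2}$.

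For $\hE_{\rm C}$, $\hZ_2$ anticommutes with $\hX_1\hX_2$, so Lemma~\ref{lem:anticommuting_pauli_conjugation} at angle $\pi/4$ gives $\hE_{\rm C}^\dagger\hZ_2\hE_{\rm C}=\hX_1\hY_2$ (up to a sign convention that does not affect the variance). Conjugating by $R$ turns $\hX_1$ into $\cos\theta\,\hX_1\pm\sin\theta\,\hY_1$. Hence $g_{\rm C}=\pm(a_x\sin\theta+a_y\cos\theta)\,b_y$. Independence of the two qubits together with Lemma~\ref{lem:single_qubit_Haar_bloch} then gives $\Var[g_{\rm C}]=\tfrac13\cdot\tfrac13=\tfrac19$, independent of $\theta$.

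For $\hE_{\rm B}$, I use the factorization in Eq.~(\ref{eq:def_EB_factorized}) and conjugate in two stages. First the $\hY_1\hY_2$ factor at angle $\pi/8$ gives $(\hZ_2\mp\hY_1\hX_2)/\sqrt2$. Then the $\hX_1\hX_2$ factor maps $\hZ_2\mapsto\hX_1\hY_2$, and, since $\hY_1\hX_2$ anticommutes with $\hX_1\hX_2$, it maps $\hY_1\hX_2\mapsto\pm\hZ_1$. The result is $(\hX_1\hY_2\pm\hZ_1)/\sqrt2$.

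The $\hZ_1$ term commutes with $R$ and so drops out of the derivative. Thus $g_{\rm B}=g_{\rm C}/\sqrt2$ and $\Var[g_{\rm B}]=1/18$ for all $\theta$. The mean of each gradient vanishes, as Proposition~\ref{prop:lightcone_reduction} requires.
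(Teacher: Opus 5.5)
Your proposal is correct. Part (ii) is essentially the paper's argument: the same two applications of the anticommuting-Pauli lemma give $\hX_1\hY_2$ and $(\hX_1\hY_2+\hZ_1)/\sqrt2$. The paper merely forms $\hA_\alpha=\tfrac{i}{2}[\hZ_1\otimes\hid_2,\hE_\alpha^\dagger\hO\hE_\alpha]$ first and removes $\theta$ by local invariance of $\mu_{\mathrm{prod}}$, rather than rotating $\hX_1$ explicitly.

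Part (i) takes a genuinely different route. The paper computes nothing there. After the same local-invariance reduction you use, it imports $\mathrm{ep}_E$ and $\Delta_E$ for the CNOT and $B$ classes from the earlier EPD work and doubles them via $Q=2E$ at $n=2$. Your direct Bloch-polynomial computation is self-contained and certifies both the imported values and the normalization conversion. The cost is that you need the fourth moments $\mathbb E[a_k^4]=1/5$ and $\mathbb E[a_k^2a_l^2]=1/15$, which the paper's Bloch lemma does not supply. They follow from $a_z$ being uniform on $[-1,1]$ together with $\mathbb E[|\vec a|^4]=1$.

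The bookkeeping you flag as the main obstacle closes cleanly. For $\hE_{\mathrm{C}}$ the Heisenberg-evolved Bloch vector of qubit 1 is $(a_x,\,-a_zb_x,\,a_yb_x)$. Hence $Q=(1-a_x^2)(1-b_x^2)$ factorizes, and $\EP=4/9$ and $\EPD^2=(8/15)^2-(4/9)^2=176/2025$ are immediate. For $\hE_{\mathrm{B}}$ the evolved Bloch vector is $\bigl((a_x+a_zb_y)/\sqrt2,\,-a_zb_x,\,(a_yb_x+b_z)/\sqrt2\bigr)$. Its squared norm has mean $5/9$ and second moment $17/45$, giving $\EP=4/9$ and $\EPD^2=17/45-25/81=140/2025$.

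The paper's route buys brevity; yours buys independence from the external reference and its linear-entropy convention.
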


\begin{proof}---We divide the proof into two parts.

\paragraph*{Part I: same EP, different EPD.}
By Remark~\ref{rem:LU_invariance}, EP and EPD are invariant under local pre- and post-unitaries. Therefore, the fixed-circuit descriptors of $\hE_{\mathrm{C}}$ and $\hE_{\mathrm{B}}$ agree with those of the CNOT and $B$ local-equivalence classes analyzed in Ref.~\cite{SM-ChoBang2026EPD}. That work uses the two-qubit linear entropy
\begin{eqnarray}
E(\ket{\phi}) = 1-\Tr \hrho_1^2,
\end{eqnarray}
whereas our present two-qubit descriptor is the Meyer--Wallach quantity
\begin{eqnarray}
Q(\ket{\phi})
=
\frac{2}{2}\sum_{j=1}^{2}(1-\Tr \hrho_j^2)
=
2E(\ket{\phi}).
\label{eq:Q_equals_2E_twobits}
\end{eqnarray}
Hence, the EP and EPD values from Ref.~\cite{SM-ChoBang2026EPD} are multiplied by a factor of $2$ in our normalization. Since the CNOT and $B$ classes have
\begin{eqnarray}
\mathrm{ep}_{E}(\mathrm{CNOT})
=
\mathrm{ep}_{E}(B)
=
\frac{2}{9},
\qquad
\Delta_{E}(\mathrm{CNOT})
=
\frac{2\sqrt{11}}{45},
\qquad
\Delta_{E}(B)
=
\frac{\sqrt{35}}{45},
\end{eqnarray}
we obtain
\begin{eqnarray}
\mathrm{EP}(\hE_{\mathrm{C}})
=
\mathrm{EP}(\hE_{\mathrm{B}})
=
\frac{4}{9},
\label{eq:EP_EC_EB}
\end{eqnarray}
and
\begin{eqnarray}
\mathrm{EPD}(\hE_{\mathrm{C}})
=
\frac{4\sqrt{11}}{45},
\qquad
\mathrm{EPD}(\hE_{\mathrm{B}})
=
\frac{2\sqrt{35}}{45}.
\label{eq:EPD_EC_EB}
\end{eqnarray}
Now $\hU_{\alpha}(\theta)$ differs from $\hE_{\alpha}$ only by right multiplication with the local unitary $e^{-i\theta \hZ_1/2}\otimes \hid_2$, so Remark~\ref{rem:LU_invariance} gives
\begin{eqnarray}
\mathrm{EP}(\hU_{\alpha}(\theta))=\mathrm{EP}(\hE_{\alpha}),
\qquad
\mathrm{EPD}(\hU_{\alpha}(\theta))=\mathrm{EPD}(\hE_{\alpha})
\end{eqnarray}
for all $\theta$ and both $\alpha\in\{\mathrm{C},\mathrm{B}\}$. This proves Eqs.~(\ref{eq:equal_EP_theta_all})--(\ref{eq:sameEP_diffEPD}).

\paragraph*{Part II: explicit gradient variance calculation.}
We now compute the gradient with respect to the trainable $\hZ_1$-rotation. Write
\begin{eqnarray}
\hR_Z(\theta):=e^{-i\theta \hZ_1/2}\otimes \hid_2.
\end{eqnarray}
Then,
\begin{eqnarray}
\hU_{\alpha}(\theta)=\hE_{\alpha}\hR_Z(\theta),
\end{eqnarray}
so Eq.~(\ref{eq:def_Calpha}) becomes
\begin{eqnarray}
C_{\alpha}(\theta,\psi)
=
\bra{\psi}
\hR_Z(\theta)^{\dagger}
\hE_{\alpha}^{\dagger}\hO\hE_{\alpha}
\hR_Z(\theta)
\ket{\psi}.
\label{eq:Calpha_input_picture}
\end{eqnarray}
Differentiating,
\begin{eqnarray}
g_{\alpha}(\psi,\theta)
&=&
\bra{\psi}
\hR_Z(\theta)^{\dagger}
\hA_{\alpha}
\hR_Z(\theta)
\ket{\psi},
\label{eq:galpha_input_picture}
\end{eqnarray}
where
\begin{eqnarray}
\hA_{\alpha}
:=
\frac{i}{2}
\left[
\hZ_1\otimes \hid_2,
\hE_{\alpha}^{\dagger}\hO\hE_{\alpha}
\right].
\label{eq:def_Aalpha}
\end{eqnarray}
Thus, the problem reduces to computing the dressed observable $\hE_{\alpha}^{\dagger}\hO\hE_{\alpha}$.

We first treat the $\mathrm{C}$ block. Since $\hX_1\hX_2$ anticommutes with $\hO=\hid_1\otimes \hZ_2$, Lemma~\ref{lem:anticommuting_pauli_conjugation} with
\begin{eqnarray}
\hat P=\hX_1\hX_2,
\qquad
\hat Q=\hid_1\otimes \hZ_2,
\qquad
\theta=\frac{\pi}{4}
\end{eqnarray}
yields
\begin{eqnarray}
\hE_{\mathrm{C}}^{\dagger}\hO\hE_{\mathrm{C}}
=
e^{i\frac{\pi}{4}\hX_1\hX_2}
(\hid_1\otimes \hZ_2)
e^{-i\frac{\pi}{4}\hX_1\hX_2}
=
i(\hX_1\hX_2)(\hid_1\otimes \hZ_2).
\label{eq:EC_conjugates_O}
\end{eqnarray}
Using $\hX\hZ=-i\hY$, we obtain
\begin{eqnarray}
i(\hX_1\hX_2)(\hid_1\otimes \hZ_2)
=
i\left(\hX_1\otimes \hX_2\hZ_2\right)
=
i\left(\hX_1\otimes (-i\hY_2)\right)
=
\hX_1\hY_2.
\label{eq:EC_conjugates_O_XY}
\end{eqnarray}
Therefore,
\begin{eqnarray}
\hA_{\mathrm{C}}
&=&
\frac{i}{2}
\left[
\hZ_1\otimes \hid_2,
\hX_1\hY_2
\right].
\end{eqnarray}
Now
\begin{eqnarray}
(\hZ_1\otimes \hid_2)(\hX_1\hY_2)=i\hY_1\hY_2,
\qquad
(\hX_1\hY_2)(\hZ_1\otimes \hid_2)=-i\hY_1\hY_2,
\end{eqnarray}
so
\begin{eqnarray}
\hA_{\mathrm{C}}
=
\frac{i}{2}(2i\hY_1\hY_2)
=
-\,\hY_1\hY_2.
\label{eq:AC_equals_YY}
\end{eqnarray}

Next consider the $\mathrm{B}$ block. Since $\hX_1\hX_2$ and $\hY_1\hY_2$ commute, Eq.~(\ref{eq:def_EB_factorized}) gives
\begin{eqnarray}
\hE_{\mathrm{B}}^{\dagger}\hO\hE_{\mathrm{B}}
&=&
e^{i\frac{\pi}{8}\hY_1\hY_2}
\left(
e^{i\frac{\pi}{4}\hX_1\hX_2}
(\hid_1\otimes \hZ_2)
e^{-i\frac{\pi}{4}\hX_1\hX_2}
\right)
e^{-i\frac{\pi}{8}\hY_1\hY_2}
\nonumber\\
&=&
e^{i\frac{\pi}{8}\hY_1\hY_2}
(\hX_1\hY_2)
e^{-i\frac{\pi}{8}\hY_1\hY_2},
\label{eq:EB_conjugation_step1}
\end{eqnarray}
where Eq.~(\ref{eq:EC_conjugates_O_XY}) was used. Since $\hY_1\hY_2$ anticommutes with $\hX_1\hY_2$, Lemma~\ref{lem:anticommuting_pauli_conjugation} now gives
\begin{eqnarray}
\hE_{\mathrm{B}}^{\dagger}\hO\hE_{\mathrm{B}}
&=&
(\hX_1\hY_2)\cos\frac{\pi}{4}
+
i(\hY_1\hY_2)(\hX_1\hY_2)\sin\frac{\pi}{4}.
\end{eqnarray}
Because $\hY_1\hX_1=-i\hZ_1$ and $\hY_2\hY_2=\hid_2$, we have
\begin{eqnarray}
i(\hY_1\hY_2)(\hX_1\hY_2)
=
i(\hY_1\hX_1\otimes \hid_2)
=
i(-i\hZ_1\otimes \hid_2)
=
\hZ_1\otimes \hid_2.
\end{eqnarray}
Hence
\begin{eqnarray}
\hE_{\mathrm{B}}^{\dagger}\hO\hE_{\mathrm{B}}
=
\frac{1}{\sqrt{2}}
\left(
\hX_1\hY_2 + \hZ_1\otimes \hid_2
\right).
\label{eq:EB_conjugates_O}
\end{eqnarray}
Substituting into Eq.~(\ref{eq:def_Aalpha}),
\begin{eqnarray}
\hA_{\mathrm{B}} = \frac{i}{2} \left[ \hZ_1\otimes \hid_2, \frac{1}{\sqrt{2}} \left( \hX_1\hY_2+\hZ_1\otimes \hid_2 \right) \right] = \frac{i}{2\sqrt{2}} \left[ \hZ_1\otimes \hid_2, \hX_1\hY_2 \right] = -\frac{1}{\sqrt{2}}\,\hY_1\hY_2,
\label{eq:AB_equals_YY}
\end{eqnarray}
where Eq.~(\ref{eq:AC_equals_YY}) was used in the last step.

Eqs.~(\ref{eq:AC_equals_YY}) and (\ref{eq:AB_equals_YY}) show that the two architectures reduce to the same input observable $\hY_1\hY_2$, but with different amplitudes. For an input product state $\ket{\psi_1}\otimes \ket{\psi_2}$, define
\begin{eqnarray}
y_j := \bra{\psi_j}\hY_j\ket{\psi_j},
\qquad
j=1,2.
\end{eqnarray}
Then Eq.~(\ref{eq:galpha_input_picture}) gives
\begin{eqnarray}
g_{\mathrm{C}}(\psi,\theta)
=
-\,
\bra{\psi}
\hR_Z(\theta)^{\dagger}(\hY_1\hY_2)\hR_Z(\theta)
\ket{\psi},
\label{eq:gC_theta}
\end{eqnarray}
and
\begin{eqnarray}
g_{\mathrm{B}}(\psi,\theta)
=
-\frac{1}{\sqrt{2}}\,
\bra{\psi}
\hR_Z(\theta)^{\dagger}(\hY_1\hY_2)\hR_Z(\theta)
\ket{\psi}.
\label{eq:gB_theta}
\end{eqnarray}
Now $\hR_Z(\theta)$ is a local unitary and $\mu_{\mathrm{prod}}$ is invariant under independent single-qubit Haar rotations. Therefore the distribution of
\begin{eqnarray}
\bra{\psi}
\hR_Z(\theta)^{\dagger}(\hY_1\hY_2)\hR_Z(\theta)
\ket{\psi}
\end{eqnarray}
is the same as that of
\begin{eqnarray}
\bra{\psi}\hY_1\hY_2\ket{\psi}
=
y_1y_2.
\end{eqnarray}
Hence the gradient distribution is independent of $\theta$, and it suffices to compute at $\theta=0$:
\begin{eqnarray}
g_{\mathrm{C}}(\psi,0)=-y_1y_2,
\qquad
g_{\mathrm{B}}(\psi,0)=-\frac{1}{\sqrt{2}}y_1y_2.
\label{eq:gC_gB_y1y2}
\end{eqnarray}

By Lemma~\ref{lem:single_qubit_Haar_bloch}, the independent random variables $y_1$ and $y_2$ satisfy
\begin{eqnarray}
\mathbb{E}[y_j]=0,
\qquad
\mathbb{E}[y_j^2]=\frac{1}{3}.
\end{eqnarray}
Therefore,
\begin{eqnarray}
\mathbb{E}[y_1y_2]=\mathbb{E}[y_1]\mathbb{E}[y_2]=0,
\end{eqnarray}
and
\begin{eqnarray}
\mathbb{E}[y_1^2y_2^2]
=
\mathbb{E}[y_1^2]\mathbb{E}[y_2^2]
=
\frac{1}{9}.
\end{eqnarray}
Thus
\begin{eqnarray}
\Var_{\psi}[g_{\mathrm{C}}(\psi,\theta)]
&=&
\Var_{\psi}[y_1y_2]
=
\frac{1}{9},
\\
\Var_{\psi}[g_{\mathrm{B}}(\psi,\theta)]
&=&
\frac{1}{2}\Var_{\psi}[y_1y_2]
=
\frac{1}{18},
\end{eqnarray}
for all $\theta$. This proves Eq.~(\ref{eq:grad_var_explicit_difference}).
\end{proof}

Theorem~\ref{thm:constructive_equalEP_unequaltrainability} is the key structural separation result of the paper. It exhibits, in a fully analytic manner, two circuit blocks that are indistinguishable by the scalar mean descriptor EP but distinguishable both by EPD and by an actual trainability metric, namely the gradient variance of the same parameter in the same trainable architecture.

Several important consequences follow immediately.

\subsection{Non-identifiability of barren plateaus from scalar mean descriptors}\label{subsec:nonidentifiability_formal}

We can now state the general structural conclusion.

\begin{proposition}[Non-identifiability of trainability from scalar mean descriptors]\label{prop:nonidentifiability_EP}
Fix an architecture class and a cost observable, and let
\begin{eqnarray}
\Gamma(\hU)
=
\left(
\Var_{\psi}[g_1(\psi)],
\dots,
\Var_{\psi}[g_p(\psi)]
\right)
\end{eqnarray}
denote the gradient-variance profile with respect to the Haar-product input ensemble. Then the map $\hU\mapsto \Gamma(\hU)$ does \emph{not} factor through the scalar mean descriptor $\mathrm{EP}(\hU)$. Equivalently, there does not exist a function
\begin{eqnarray}
F:\mathbb{R}\to \mathbb{R}^{p}
\end{eqnarray}
such that
\begin{eqnarray}
\Gamma(\hU)=F(\mathrm{EP}(\hU))
\qquad
\text{for all }\hU
\end{eqnarray}
in the architecture class.
\end{proposition}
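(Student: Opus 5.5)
The plan is to derive the proposition directly from the constructive witness in Theorem~\ref{thm:constructive_equalEP_unequaltrainability}, by reading ``architecture class'' as the one-parameter wrapper $\hU_\alpha(\theta)=\hE_\alpha\hR_Z(\theta)$, $\alpha\in\{\mathrm{C},\mathrm{B}\}$, with the fixed cost $\hO=\hid_1\otimes\hZ_2$ and $p=1$. This choice of class is the key interpretive step. The proposition is an existence-of-counterexample statement (there is an architecture class on which factorization fails), exactly as in Theorem~\ref{thm:2} of the main text, and not a claim about every class. For instance, a class containing a single circuit trivially factors. I will make this reading explicit at the outset so that the quantifier matches Theorem~\ref{thm:2}.

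Suppose, for contradiction, that some $F:\mathbb R\to\mathbb R^p$ satisfies $\Gamma(\hU)=F(\mathrm{EP}(\hU))$ for all $\hU$ in the class. Fix any $\theta$. Part (i) of the theorem, which rests on the local-unitary invariance~(\ref{eq:sm_local_invariance}) and the two-qubit values inherited from Ref.~\cite{SM-ChoBang2026EPD}, gives $\mathrm{EP}(\hU_{\mathrm{C}}(\theta))=\mathrm{EP}(\hU_{\mathrm{B}}(\theta))=4/9$. Hence the two profiles must coincide:
\[
\Gamma(\hU_{\mathrm{C}}(\theta))=F(4/9)=\Gamma(\hU_{\mathrm{B}}(\theta)).
\]
Part (ii) gives $\Var_\psi[g_{\mathrm{C}}]=1/9\neq 1/18=\Var_\psi[g_{\mathrm{B}}]$, which is a contradiction. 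Because both parts of the theorem hold for every $\theta$, the failure occurs at every parameter value, which also yields the ``for all $\bm\theta$'' strengthening stated in Theorem~\ref{thm:2}.

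To complement this, I will remark on why the mismatch is expected structurally. EP is the single contraction~(\ref{eq:mean_not_enough_EP_recall}), whereas $\Gamma$ is built from the local contractions~(\ref{eq:mean_not_enough_var_recall}). In the witness, the local observable is $\hB$, which in the input picture is $\hA_\alpha\propto\hY_1\hY_2$, and its amplitude changes between the two blocks while the EP contraction does not.

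The main obstacle is not computational, since all the values are already supplied by the theorem. It lies in the logical scope of ``architecture class'': the statement is only true existentially, so the proof should note that it cannot hold for every class. I would also add one line noting that the gradient here is taken with respect to the parameter in the same wrapper for both blocks, so that $\Gamma$ is compared componentwise. Finally, I will add the remark that EPD separates the two blocks ($4\sqrt{11}/45\neq2\sqrt{35}/45$), so the same witness satisfies the last clause of Theorem~\ref{thm:2}.
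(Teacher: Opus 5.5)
Your proposal is correct and follows the paper's proof. Both arguments use the witness pair $\hU_{\mathrm{C}}(\theta),\hU_{\mathrm{B}}(\theta)$ from the constructive theorem: EP is $4/9$ for both, while the gradient variances are $1/9\neq 1/18$, so no $F$ can exist. Your explicit existential reading of ``architecture class'' is a worthwhile clarification, since the paper's proof also tacitly assumes a class that contains this witness pair, which is how the main-text Theorem~2 phrases it.
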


\begin{proof}
Theorem~\ref{thm:constructive_equalEP_unequaltrainability} provides two one-parameter blocks $\hU_{\mathrm{C}}(\theta)$ and $\hU_{\mathrm{B}}(\theta)$ such that
\begin{eqnarray}
\mathrm{EP}(\hU_{\mathrm{C}}(\theta))
=
\mathrm{EP}(\hU_{\mathrm{B}}(\theta))
\end{eqnarray}
for all $\theta$, while
\begin{eqnarray}
\Var_{\psi}[g_{\mathrm{C}}(\psi,\theta)]
\neq
\Var_{\psi}[g_{\mathrm{B}}(\psi,\theta)].
\end{eqnarray}
Hence even the variance of a \emph{single} designated parameter cannot be recovered as a function of $\mathrm{EP}(\hU)$ alone. A fortiori, the full profile $\Gamma(\hU)$ cannot factor through $\mathrm{EP}(\hU)$.
\end{proof}

Proposition~\ref{prop:nonidentifiability_EP} is stronger than a mere empirical warning. It says that the insufficiency of mean descriptors is a matter of principle, not of imperfect regression or finite-sample error. The map
\begin{eqnarray}
\hU \longmapsto \mathrm{EP}(\hU)
\end{eqnarray}
forgets information that is genuinely relevant to trainability.

This conclusion is entirely consistent with the moment-theoretic analysis of Secs.~\ref{sec:moment} and \ref{sec:gradientBP}. The quantity $\mathrm{EP}(\hU)$ is a single contraction of the global two-copy output moment against the observable $\hK_Q$. By contrast, each gradient variance is a different contraction of a parameter-dependent local two-copy moment against a parameter-dependent local observable $\hB_i\otimes \hB_i$. There is no reason, in general, for all of that local information to be recoverable from one global scalar.

The explicit example above also clarifies the role of EPD. Since
\begin{eqnarray}
\mathrm{EP}(\hU_{\mathrm{C}}(\theta))
=
\mathrm{EP}(\hU_{\mathrm{B}}(\theta)),
\qquad
\mathrm{EPD}(\hU_{\mathrm{C}}(\theta))
\neq
\mathrm{EPD}(\hU_{\mathrm{B}}(\theta)),
\end{eqnarray}
the higher-order fluctuation descriptor already distinguishes the two blocks before any gradient is computed. In this sense, EPD is strictly more informative than EP for this pair. However, it is important not to overstate the conclusion.

\begin{remark}[EPD is more informative than EP, but still not the trainability invariant]\label{rem:EPD_not_complete}
Theorem~\ref{thm:constructive_equalEP_unequaltrainability} does \emph{not} imply that EPD alone determines barren-plateau behavior. EPD remains a global scalar four-copy quantity, whereas gradient variance is a local two-copy quantity. What the theorem does show is more precise: EPD can distinguish circuits that are invisible to EP and that, once embedded into the same trainable block, exhibit different gradient variances. Thus EPD is not the trainability object itself, but it is a strictly richer diagnostic than EP.
\end{remark}

The non-identifiability statement can be rephrased in the language of information loss. A scalar mean descriptor compresses the circuit to one number, the trainability profile is, in general, a many-parameter object. Theorem~\ref{thm:constructive_equalEP_unequaltrainability} shows that this compression is already too coarse at the two-qubit block level. Consequently, when such blocks are used as building units of hardware-efficient ansätze, the inability of a mean descriptor to resolve local trainability is not an exceptional pathology but a generic structural limitation.

\subsection{Interpretation for expressibility and barren-plateau phenomenology}\label{subsec:mean_not_enough_interpretation}

Although the explicit construction above is formulated in terms of fixed-circuit EP and EPD, its interpretive consequence extends directly to the broader expressibility literature. Indeed, scalar expressibility indices are likewise one-number summaries of a circuit family. Such descriptors can correlate with broad architectural trends, and in many random-circuit settings they do. But the present analysis shows why no such scalar should be expected to provide a complete criterion for plateau onset.

The correct hierarchy is now clear:
\begin{enumerate}
\item A \emph{global mean descriptor} such as EP measures average entangling strength.
\item A \emph{global fluctuation descriptor} such as EPD measures the input dependence of that strength.
\item The \emph{actual trainability object} is the parameter-resolved local two-copy contraction on the effective light cone.
\end{enumerate}
Accordingly, the common informal slogan
\begin{eqnarray}
\text{``more expressibility''} \Longrightarrow \text{``more barren plateau''}
\end{eqnarray}
must be interpreted with care. What can be true, under additional assumptions, is that stronger randomization tends to push several descriptors in the same direction. But the descriptors live at different levels of the moment hierarchy and encode different information. A mean descriptor can saturate before the local second moments relevant to trainability have fully collapsed.

This is exactly the regime that motivates the practical two-dial picture advocated in this paper. One dial measures coverage or average entangling strength, the other measures fluctuation or variability. Theorem~\ref{thm:constructive_equalEP_unequaltrainability} shows that fixing the first dial does not fix the second, nor does it fix trainability. Therefore any meaningful account of barren plateaus must go beyond a single mean descriptor.

We summarize the lesson of this section as follows:
\begin{quote}
\centering
{\em Equal mean entangling power does not imply equal trainability.}
\end{quote}
The obstruction is not numerical but structural, and it arises already at the smallest nontrivial entangling scale. This is why the transition from mean descriptors to fluctuation-sensitive and local-moment-sensitive diagnostics is not merely a refinement of language, but a necessary step in understanding PQC trainability.

\section{Numerical protocols and supplementary diagnostics}

This section gives the numerical details supporting the depth-onset and route-scan claims in the main manuscript. The guiding question is the following: as the depth $L$ of a PQC family is increased, can one locate a regime in which the circuit has already acquired substantial coverage while its product-input response has not yet homogenized?

All simulations use product-Haar input probes for fixed-circuit EP and EPD. Family-level quantities are obtained in fixed-circuit order: for each sampled parameter vector, the product-input distribution of $Q(\hU_L(\bm{\theta})\ket{\psi})$ is first estimated; the corresponding fixed-circuit mean and standard deviation give EP and EPD; only after this step are the resulting descriptors averaged over parameter samples. This ordering is essential. It keeps the present EP/EPD descriptors distinct from Sim-style parameter-ensemble expressibility, in which one samples many circuit instances from a fixed reference input.

The numerical section should therefore be read with the same hierarchy as the analytical sections:
\begin{eqnarray}
\text{coverage dial} &:& \text{how far the family has moved toward high mean entanglement},\nonumber\\
\text{variability dial} &:& \text{how much product-input response width remains},\nonumber\\
\text{gradient probe} &:& \text{a cost-dependent consistency check of local trainability}.
\label{eq:sm_numeric_hierarchy}
\end{eqnarray}
The first two entries define the EP/EPD scan; the third is not used to define the scan and is included only to compare the diagnostic with an explicit gradient scale.

\subsection{Main example: Ring-CZ depth scan}

The ring-CZ hardware-efficient ansatz used in the onset diagnostic is
\begin{eqnarray}
\hU_L(\bm{\theta})=\prod_{\ell=1}^{L}
\left[
\left(\prod_{j=1}^{n}CZ_{j,j+1}\right)
\left(\prod_{j=1}^{n}\hR_z(\phi_{\ell j})\hR_y(\vartheta_{\ell j})\right)
\right],
\label{eq:sm_ringcz}
\end{eqnarray}
with periodic boundary conditions. All parameters are sampled independently from $[0,2\pi)$. This family is useful as a first analysis example because it exhibits the separation between three depth-dependent events: coverage saturation, mean-EP growth, and variability/gradient collapse.

The Sim-style coverage proxy is~\cite{SM-Sim2019expressibility}
\begin{eqnarray}
X_L=\exp[-D_{\rm KL}(P_L(F)\Vert P_{\Haar}(F))],
\label{eq:sm_expr_proxy}
\end{eqnarray}
where $P_L(F)$ is the empirical pairwise-fidelity distribution obtained from parameter sampling and the reference input $\ket{0}^{\otimes n}$. A larger $X_L$ means that the parameter-sampled fidelity distribution is closer to the Haar fidelity distribution. This quantity is intentionally included as a contrast: it is a parameter-ensemble coverage proxy, whereas EP and EPD are fixed-circuit product-input descriptors.

For the onset comparison, we normalize increasing quantities as
\begin{eqnarray}
\widetilde X_L=\frac{X_L}{\max_{L'}X_{L'}},
\qquad
\widetilde{\EP}_L=\frac{\EP_L-\min_{L'}\EP_{L'}}{\max_{L'}\EP_{L'}-\min_{L'}\EP_{L'}},
\label{eq:sm_normalized_increasing}
\end{eqnarray}
and define collapse ratios
\begin{eqnarray}
R_{\EPD}(L)=\frac{\EPD_L}{\EPD_1},
\qquad
R_g(L)=\frac{\overline V_L}{\overline V_1}.
\label{eq:sm_collapse_ratios}
\end{eqnarray}
Here, $\overline V_L$ denotes the input-induced variance of a representative finite-difference gradient at depth $L$. Thus, $R_{\EPD}$ and $R_g$ decrease when the output-entanglement distribution and the representative gradient scale collapse relative to their $L=1$ values.

\begin{figure}[t]
\centering
\includegraphics[width=0.88\textwidth]{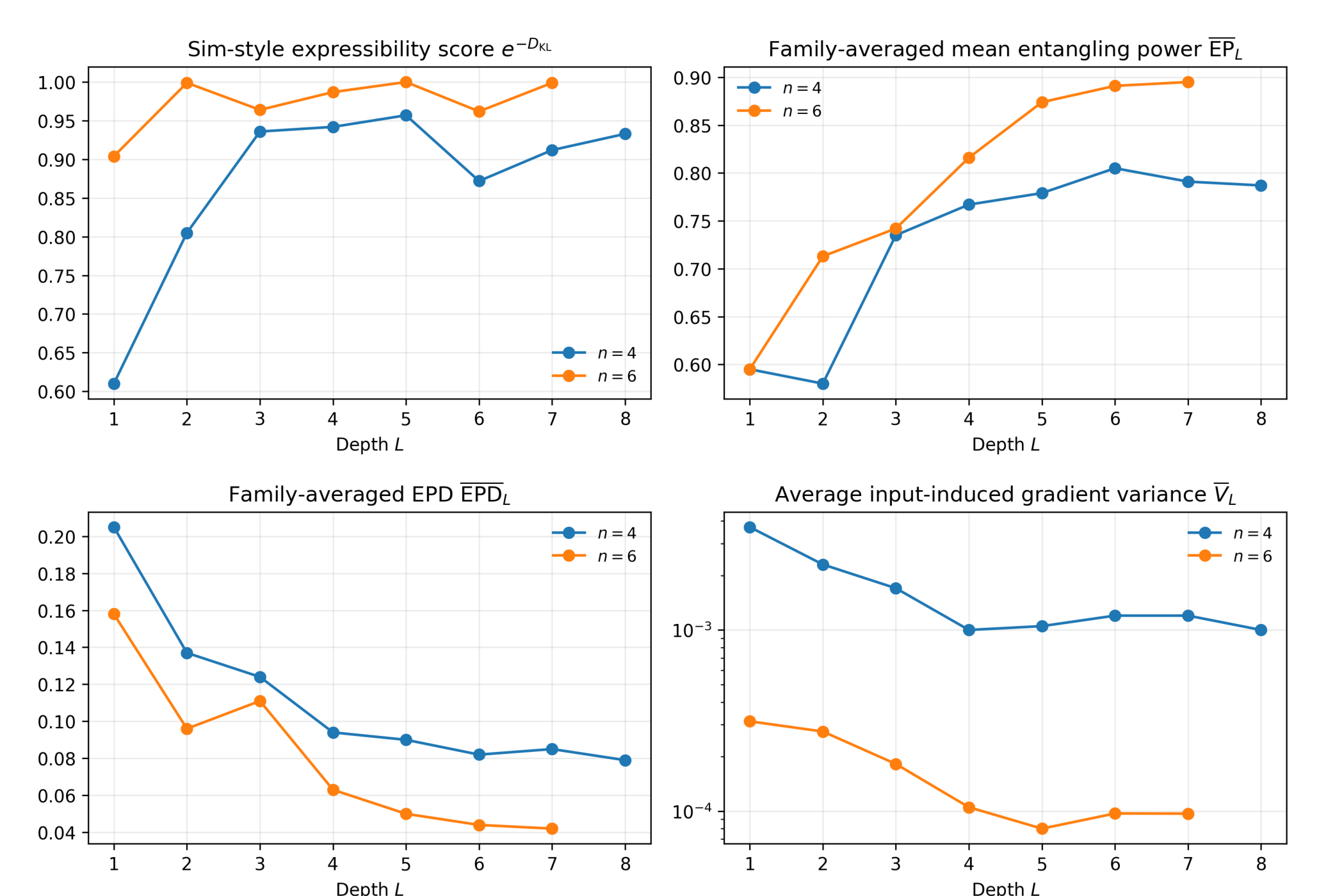}
\caption{Depth scan for the ring-CZ hardware-efficient ansatz. The four panels show the Sim-style coverage proxy, family-averaged EP, family-averaged EPD, and the input-induced variance of a representative gradient. The key feature is the mismatch between the early saturation of the coverage proxy and the later collapse of EPD and the gradient scale.}
\label{fig:sm_depth}
\end{figure}

Fig.~\ref{fig:sm_depth} should be read as follows: (a) The coverage proxy $X_L$ rises rapidly with depth. For the finite-size data shown here, the $n=6$ curve is already close to its scan maximum at small depth. This means that the parameter ensemble has quickly learned to reproduce a Haar-like fidelity distribution. (b) The mean fixed-circuit EP generally increases with depth, but it does not carry the same information as the coverage proxy. It measures the average Meyer-Wallach entanglement generated from product inputs by a fixed circuit and then averaged over circuit instances. (c) EPD decreases after an intermediate depth. This is the narrowing of the product-input output-entanglement distribution: different product inputs begin to produce nearly the same amount of entanglement. (d) The representative gradient variance also decreases, on a logarithmic scale, in the same depth range where EPD collapses. This is consistent with the local two-copy criterion of Sec.~III, although it is not a universal functional relation because the gradient variance depends on the chosen parameter and cost. Thus, Fig.~\ref{fig:sm_depth} separates two statements that are often conflated: a circuit family can look highly covering according to a fidelity-distribution proxy before the fixed-circuit response has fully homogenized.

\begin{figure}[t]
\centering
\includegraphics[width=0.80\textwidth]{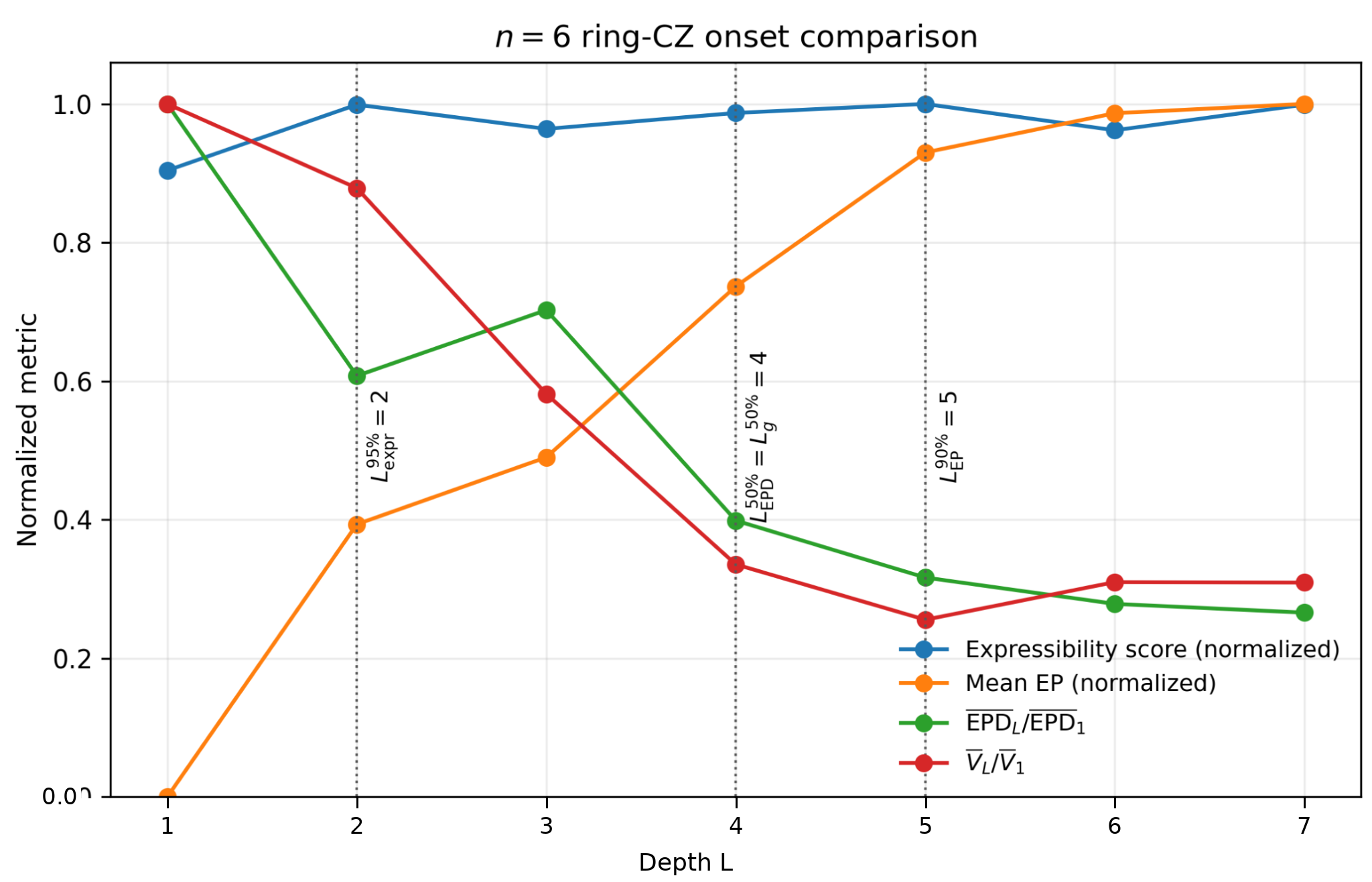}
\caption{Normalized onset comparison for the $n=6$ ring-CZ ansatz. The coverage proxy reaches a near-ceiling value at $L=2$, while EPD and the representative gradient variance both pass their half-collapse thresholds at $L=4$ in this finite-sample run.}
\label{fig:sm_onset}
\end{figure}

Fig.~\ref{fig:sm_onset} restates the same information in threshold language. For the finite-sample $n=6$ scan, the ordering is
\begin{eqnarray}
L^{95\%}_{\rm expr}=2,
\qquad
L^{90\%}_{\rm EP}=5,
\qquad
L^{50\%}_{\rm EPD}=4,
\qquad
L^{50\%}_{g}=4.
\label{eq:sm_thresholds_expanded}
\end{eqnarray}
The meaning of these four depths is summarized in Table~\ref{tab:sm_onset_meaning}. The thresholds are not universal constants; they are a compact way to compare crossover events in one reproducible finite-size scan. The robust observation is the ordering: coverage can become large before EPD and the representative gradient variance have collapsed.

\begin{table}[h]
\centering
\caption{Interpretation of the onset depths in Eq.~(\ref{eq:sm_thresholds_expanded}). The values are finite-sample diagnostics for the $n=6$ ring-CZ scan, not universal constants.}
\label{tab:sm_onset_meaning}
\begin{tabular}{ccl}
\toprule
quantity & depth & \parbox[t]{0.75\textwidth}{interpretation} \\
\midrule
$L^{95\%}_{\rm expr}$ & 2 & \parbox[t]{0.75\textwidth}{the Sim-style coverage proxy has reached $95\%$ of its maximum value in the depth scan; coverage is already near its finite-sample ceiling.} \\
$L^{90\%}_{\rm EP}$ & 5 & \parbox[t]{0.75\textwidth}{the normalized mean fixed-circuit EP reaches $90\%$ of its scan range; average entangling strength continues to grow after the coverage proxy has saturated.} \\
$L^{50\%}_{\rm EPD}$ & 4 & \parbox[t]{0.75\textwidth}{EPD has fallen to one half of its $L=1$ value; product-input response variability has significantly narrowed.} \\
$L^{50\%}_{g}$ & 4 & \parbox[t]{0.62\textwidth}{the representative gradient variance has fallen to one half of its $L=1$ value; this finite-difference probe aligns with the EPD-collapse depth in this scan.} \\
\bottomrule
\end{tabular}
\end{table}

The interval between early coverage saturation and later variability collapse is the numerical prototype of the design window emphasized in the main manuscript. In this window, the ansatz is no longer trivially shallow, but it has not yet reached the low-EPD regime associated with BP-like homogenization.

Fig.~\ref{fig:sm_hist} shows histograms of $Q(\hU_L\ket{\psi})$ over Haar-random product inputs for $n=6$ ring-CZ circuits. This plot is the most direct way to understand what EP and EPD measure. EP tracks the horizontal drift of the distribution; EPD tracks its width.

\begin{figure}[t]
\centering
\begin{tabular}{c}
\includegraphics[width=0.98\textwidth]{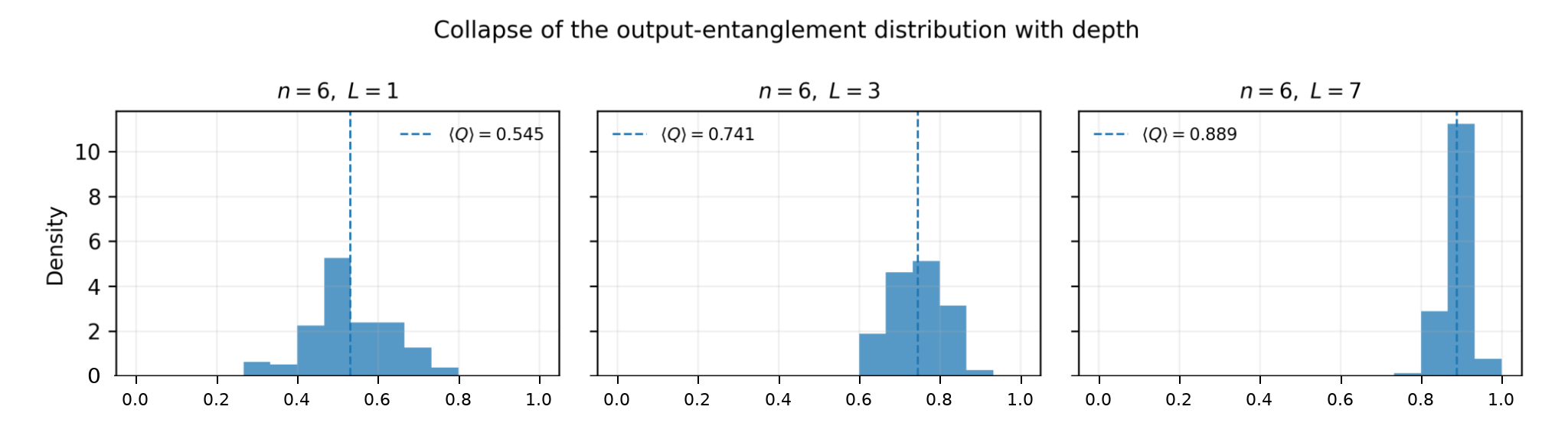}\\[-0.35em]
{\small $Q(\hU_L(\bm{\theta})\ket{\psi})$}
\end{tabular}
\caption{Collapse of the output-entanglement distribution with depth. Histograms of $Q(\hU_L\ket{\psi})$ are shown for representative $n=6$ ring-CZ circuits at $L=1,3,7$. Increasing depth shifts the distribution to larger $Q$ and then narrows it.}
\label{fig:sm_hist}
\end{figure}

The three histograms should be read as follows. At $L=1$, the circuit is shallow. The distribution is relatively broad because different product inputs experience noticeably different entangling responses. At the intermediate depth, the distribution has moved to the right while still retaining a visible width; this is the desired qualitative signature of high coverage without full homogenization. At larger depth, the histogram is concentrated near a high-entanglement value. The mean is high, but the width is small. This last regime is precisely why EP alone is insufficient: a high-EP circuit can be either useful and still input-sensitive, or high-EP and already homogenized.

A small EPD therefore has two different meanings depending on coverage. A shallow circuit can have small EPD simply because it barely entangles any product input. A deep circuit can have small EPD because it entangles almost all product inputs in the same way. For this reason, EPD should always be read together with a coverage coordinate.

\subsection{Ansatz trajectory scan}

The depth scan above studies one ansatz family. The route scan generalizes the same idea to five families: ring-CZ HEA, line-CNOT HEA, brickwork RXX, QAOA-style ring-ZZ, and ring-CRX HEA. Fig.~\ref{fig:sm_l1_circuits} shows the representative $n=5$ and $L=1$ circuit blocks used in the route scan. The drawings use five wires only as a schematic stencil; for larger $n$, the same local pattern is extended along the line or ring.

\begin{figure}[t]
\centering
\includegraphics[width=0.84\textwidth]{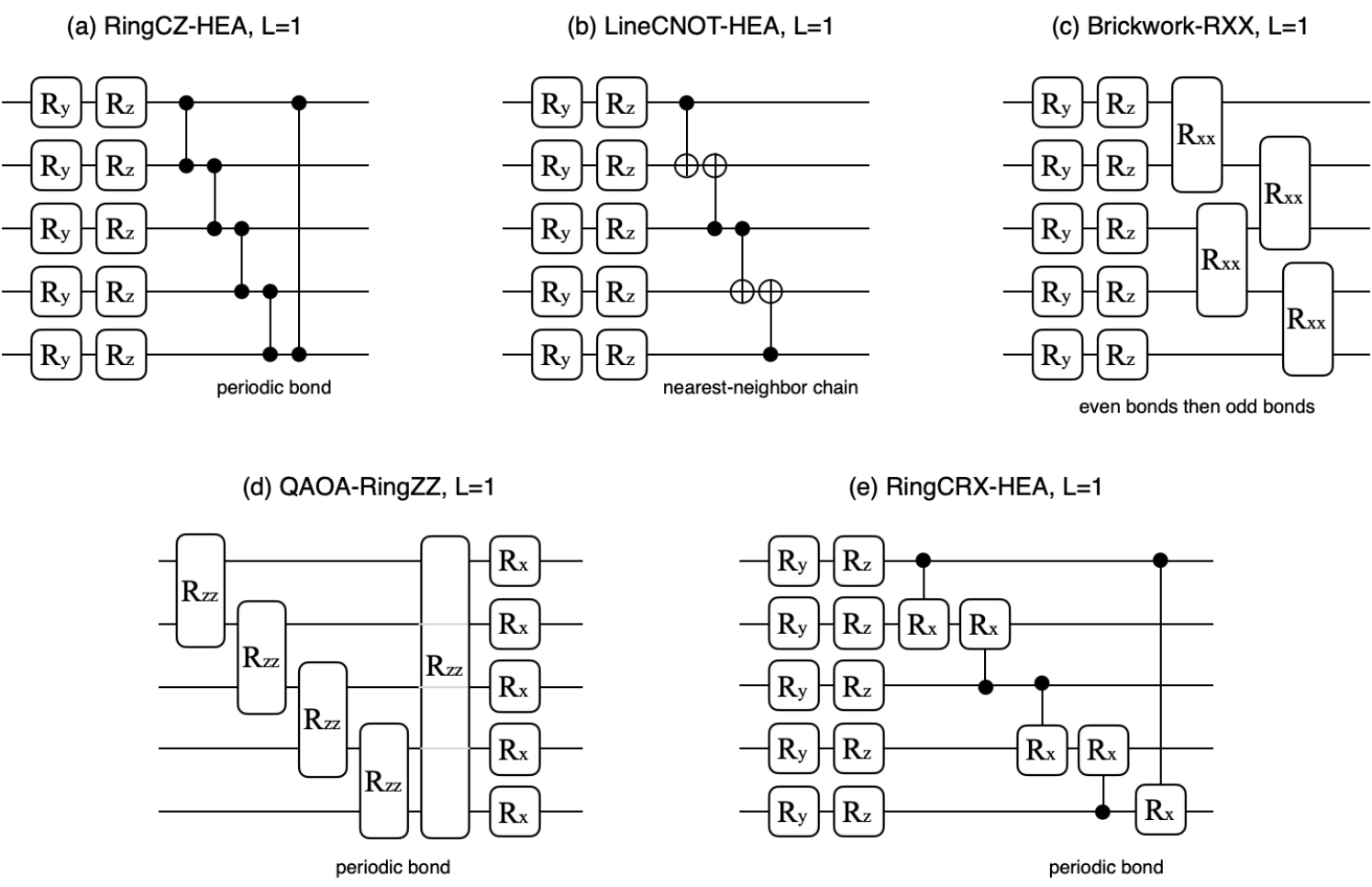}
\caption{Representative $L=1$ circuit blocks for the five ansatz families used in the route scan~\cite{SM-Sim2019expressibility,SM-Kandala2017HardwareEfficient,SM-Farhi2014QAOA}. Top row: (a) ring-CZ HEA, (b) line-CNOT HEA, and (c) brickwork RXX. Bottom row: (d) QAOA-style ring-ZZ and (e) ring-CRX HEA. The figure is drawn for five qubits for readability; the simulations extend the same connectivity pattern to $n=5,10,15,20$.}
\label{fig:sm_l1_circuits}
\end{figure}

The route-scan simulations use the following elementary gates:
\begin{eqnarray}
\hR_X(\theta)=e^{-i\theta \hX/2},
\quad
\hR_Y(\theta)=e^{-i\theta \hY/2},
\quad
\hR_Z(\theta)=e^{-i\theta \hZ/2},
\quad
\hR_{XX}(\theta)=e^{-i\theta \hX\otimes \hX/2},
\quad
\hR_{ZZ}(\theta)=e^{-i\theta \hZ\otimes \hZ/2},
\end{eqnarray}
with standard $CZ$, $CNOT$, and controlled-$\hR_X$ gates. The five ansatz families are:

\begin{itemize}
\item \textit{RingCZ-HEA}: each layer applies $\hR_Y$ and $\hR_Z$ rotations to every qubit, followed by a periodic ring of $CZ$ gates.
\item \textit{LineCNOT-HEA}: each layer applies $\hR_Y$ and $\hR_Z$ rotations to every qubit, followed by a nearest-neighbor CNOT chain. The control direction is alternated in the implementation across layers to avoid a fixed orientation bias.
\item \textit{Brickwork-RXX}: each layer applies $\hR_Y$ and $\hR_Z$ rotations to every qubit, followed by $\hR_{XX}$ gates on alternating even and odd nearest-neighbor bonds.
\item \textit{QAOA-RingZZ}: each layer applies $\hR_{ZZ}(\gamma_\ell)$ gates on the ring followed by single-qubit $\hR_X(\beta_\ell)$ mixers.
\item \textit{RingCRX-HEA}: each layer applies $\hR_Y$ and $\hR_Z$ rotations to every qubit, followed by controlled-$\hR_X$ gates on a ring, with the directed pattern alternated across layers.
\end{itemize}

These families were chosen to produce visibly different route geometries. Line-like entanglers can move rapidly along the coverage direction but may also lose EPD quickly. Brickwork interactions can preserve variability longer while accumulating coverage more slowly. Ring-based and controlled-rotation hardware-efficient families often create intermediate-depth windows. QAOA-style ring-ZZ uses a more constrained parameter structure and can enter the sweet region at a different depth scale.

The representative gradient-variance probe uses the global projector cost $\hO_0=\ketbra{0^n}{0^n}$ and the first trainable rotation parameter of each ansatz family. For hardware-efficient ansatze this is the first-layer single-qubit $\hR_Y$ parameter on qubit $0$; for the QAOA-style ansatz it is the first mixer angle. In the reproduction package this diagnostic is evaluated with a small-budget central finite difference. It is included only to compare the EPD collapse with a concrete cost-dependent gradient scale.

The simulation budgets are intentionally modest and are stored in the metadata file. They are summarized in Table~\ref{tab:sm_budget} below. Increasing these budgets improves Monte Carlo precision without altering the pipeline.
\begin{table}[h]
\centering
\setlength{\tabcolsep}{20pt}
\begin{tabular}{cccc}
\toprule
qubits $n$ & $N_\theta$ & $N_\psi$ & $N_{\rm targ}$ \\
\midrule
5  & 16 & 40 & 5 \\
10 & 8  & 24 & 6 \\
15 & 4  & 12 & 4 \\
20 & 2  & 4  & 2 \\
\bottomrule
\end{tabular}
\caption{Monte Carlo budgets used in the route-scan script. $N_\theta$ is the number of parameter samples at each ansatz-depth-width point, $N_\psi$ is the number of product-input probes per circuit sample, and $N_{\rm targ}$ is the number of one-qubit reductions sampled when target-qubit subsampling is used for larger systems.}
\label{tab:sm_budget}
\end{table}

At each $n\in\{5,10,15,20\}$ and $L=1,\ldots,6$, the code estimates
\begin{eqnarray}
\overline{\EP}_{A,L}^{(n)}=\mathbb E_{\bm{\theta}\sim\nu_{A,L}^{(n)}}[\EP(\hU_{A,L}^{(n)}(\bm{\theta}))],
\qquad
\overline{\EPD}_{A,L}^{(n)}=\mathbb E_{\bm{\theta}\sim\nu_{A,L}^{(n)}}[\EPD(\hU_{A,L}^{(n)}(\bm{\theta}))],
\label{eq:sm_family_averages}
\end{eqnarray}
where $A$ labels the ansatz family and $\nu_{A,L}^{(n)}$ is the ansatz-, depth-, and size-dependent product measure over trainable angles. Eq.~(\ref{eq:sm_family_averages}) again encodes the fixed-circuit ordering: for each sampled $\bm{\theta}$, EP and EPD are first estimated over product-Haar inputs, and only then averaged over the parameter ensemble.

\subsubsection{Algorithmic pipeline} 

The numerical pipeline is deliberately written in fixed-circuit order. For each sampled parameter vector, the code first estimates the product-input distribution of $Q(\hU_L(\bm{\theta})\ket{\psi})$ and records its mean and standard deviation. Only after this fixed-circuit step are the resulting EP and EPD values averaged over parameter samples. This order is essential: averaging $Q$ over parameter samples from a fixed reference input would reproduce the Sim-style entangling-capability convention rather than the fixed-circuit EP/EPD convention used here.

The numerical steps are summarized in Algorithm~\ref{alg:sm_pipeline}.
\begin{algorithm}[H]
\caption{Fixed-circuit EP/EPD route scan and gradient diagnostic}\label{alg:sm_pipeline}
\begin{algorithmic}[1]
\Require Ansatz family $A$, qubit-size $n$, depth $L$, parameter measure $\nu_{A,L}^{(n)}$, product-Haar measure $\mu_{\mathrm{prod}}$, budgets $N_\theta,N_\psi$, and finite-difference step $\delta$.
\For{$r=1,\ldots,N_\theta$}
    \State Sample $\bm{\theta}^{(r)}\sim\nu_{A,L}^{(n)}$ and instantiate the fixed circuit $\hU_{A,L}^{(n)}(\bm{\theta}^{(r)})$.
    \For{$q=1,\ldots,N_\psi$}
        \State Sample a product-Haar input $\ket{\psi^{(q)}}\sim\mu_{\mathrm{prod}}$.
        \State Propagate $\ket{\psi^{(q)}}$ through the fixed circuit.
        \State Compute $Q(\hU_{A,L}^{(n)}(\bm{\theta}^{(r)})\ket{\psi^{(q)}})$ from one-qubit reduced purities.
        \State For the optional gradient, evaluate the global-projector cost at $\bm{\theta}^{(r)}_{+i}$ and $\bm{\theta}^{(r)}_{-i}$, where only the designated parameter is shifted by $\pm\delta$.
    \EndFor
    \State Estimate the fixed-circuit $\EP$ and $\EPD$ from the $N_\psi$ values of $Q$.
    \State Estimate the input-induced variance of the central finite-difference gradients, when the probe is used.
\EndFor
\State Average the fixed-circuit estimates over $r=1,\ldots,N_\theta$ to obtain $\overline{\EP}_{A,L}^{(n)}$ and $\overline{\EPD}_{A,L}^{(n)}$.
\State Normalize the route cloud to obtain $x_{A,L}^{(n)}$ and $y_{A,L}^{(n)}$.
\State Apply the maximum-separation rule to compute $x_c^{(n)}$ and $y_c^{(n)}$, and assign each point to $\mathcal U^{(n)}$, $\mathcal S^{(n)}$, or $\mathcal P^{(n)}$.
\end{algorithmic}
\end{algorithm}

For the probe, we use
\begin{eqnarray}
g_i(\psi)\simeq
\frac{\abs{\bra{0^n}\hU(\bm{\theta}_{+i})\ket{\psi}}^2-
\abs{\bra{0^n}\hU(\bm{\theta}_{-i})\ket{\psi}}^2}{2\delta},
\qquad \delta=10^{-6},
\label{eq:sm_fd_gradient}
\end{eqnarray}
where $\bm{\theta}_{\pm i}$ denotes the same parameter vector as $\bm{\theta}$ except that the $i$th component is shifted to $\theta_i\pm\delta$.
The analytic results in the main manuscript do not depend on this numerical probe. It is a cost-dependent consistency check showing that low-EPD regions often coincide with smaller representative gradient scales, as expected from the local two-copy criterion, but it is not a substitute for a task-specific trainability analysis.


\subsubsection{Operational definition of the sweet spot} 

For each qubit number $n$, we convert the family-averaged descriptors into two panel-normalized coordinates,
\begin{eqnarray}
x_{A,L}^{(n)}=\frac{\overline{\EP}_{A,L}^{(n)}}{Q_{\Haar}^{(n)}},
\qquad
 y_{A,L}^{(n)}=\frac{\overline{\EPD}_{A,L}^{(n)}}{\max_{A,L}\overline{\EPD}_{A,L}^{(n)}},
\label{eq:sm_route_xy}
\end{eqnarray}
where
\begin{eqnarray}
Q_{\Haar}^{(n)}=\frac{2^n-2}{2^n+1}
\label{eq:sm_qhaar_route}
\end{eqnarray}
is the Haar mean of the Meyer-Wallach entanglement. The horizontal coordinate $x$ is the coverage dial: it measures how close the family-averaged mean entanglement is to the Haar benchmark at the same $n$. The vertical coordinate $y$ is the variability dial: it measures how much input-dependent output-entanglement spread remains relative to the largest EPD value observed in the same $n$ panel.

The shaded regions in the main Fig.~1 are fixed by the route data, not by hand-tuned constants. For a finite set $Z=\{z_m\}_{m=1}^M\subset[0,1]$, let $\mathcal T(Z)$ be the midpoint cuts between adjacent sorted distinct values. For $\tau\in\mathcal T(Z)$ define
\begin{eqnarray}
W_Z(\tau)=\sum_{z_m<\tau}(z_m-\bar z_<)^2+
\sum_{z_m\ge\tau}(z_m-\bar z_\ge)^2,
\label{eq:sm_within_cluster}
\end{eqnarray}
where $\bar z_<$ and $\bar z_\ge$ are the means of the two groups. We set
\begin{eqnarray}
\tau_*(Z)=\arg\min_{\tau\in\mathcal T(Z)} W_Z(\tau).
\label{eq:sm_tau_star}
\end{eqnarray}
Equivalently, $\tau_*$ is the one-dimensional Otsu/Fisher split that maximizes the separation between the two classes while keeping each class internally compact~\cite{SM-Otsu1979Threshold,SM-Fisher1936Taxonomic}. This is a useful choice because it introduces no external threshold, percentile, or gradient information.

The coverage boundary is the split of all coverage coordinates in a panel,
\begin{eqnarray}
x_c^{(n)}=\tau_*\!\left(\{x_{A,L}^{(n)}\}_{A,L}\right).
\label{eq:sm_xc_rule}
\end{eqnarray}
The variability boundary is then computed only inside the high-coverage subset,
\begin{eqnarray}
y_c^{(n)}=\tau_*\!\left(\{y_{A,L}^{(n)}:x_{A,L}^{(n)}\ge x_c^{(n)}\}_{A,L}\right).
\label{eq:sm_yc_rule}
\end{eqnarray}
The conditional second step is essential. Low variability at low coverage can simply mean that the circuit is too weak to produce a broad entanglement distribution. It is not evidence of BP-like homogenization. The EPD-collapse cut is therefore learned only from points that already passed the coverage cut.

The three regions used in the route scan are
\begin{eqnarray}
\mathcal U^{(n)} &=& \{(A,L):x_{A,L}^{(n)}<x_c^{(n)}\},\nonumber\\
\mathcal S^{(n)} &=& \{(A,L):x_{A,L}^{(n)}\ge x_c^{(n)},\;y_{A,L}^{(n)}\ge y_c^{(n)}\},\nonumber\\
\mathcal P^{(n)} &=& \{(A,L):x_{A,L}^{(n)}\ge x_c^{(n)},\;y_{A,L}^{(n)}<y_c^{(n)}\}.
\label{eq:sm_regions}
\end{eqnarray}
Here, $\mathcal U$ is the underexpressive region: the coverage dial has not crossed the data-driven coverage cut. High variability here is not enough, because the family has not yet reached substantial coverage. $\mathcal S$ is the sweet spot: the coverage dial is high and the variability dial remains open. This is the desired high-coverage/nonhomogenized design window. $\mathcal P$ is the near-plateau or homogenized high-coverage region

The reproduced route scan gives
\begin{eqnarray}
(x_c^{(5)},y_c^{(5)})&=&(0.731,0.472),\nonumber\\
(x_c^{(10)},y_c^{(10)})&=&(0.700,0.352),\nonumber\\
(x_c^{(15)},y_c^{(15)})&=&(0.698,0.321),\nonumber\\
(x_c^{(20)},y_c^{(20)})&=&(0.712,0.263).
\label{eq:sm_threshold_values}
\end{eqnarray}
The absolute threshold values should not be interpreted as universal physical constants. What is meaningful is the route geometry: a point in $\mathcal S^{(n)}$ has already moved far along the coverage direction but still retains a broad output-entanglement distribution over product inputs, whereas a point in $\mathcal P^{(n)}$ has comparable coverage but has lost that variability.

For a compact one-number ranking inside each panel, we use the descriptive sweet score
\begin{eqnarray}
s_{A,L}^{(n)}=x_{A,L}^{(n)}y_{A,L}^{(n)}.
\label{eq:sm_sweet_score}
\end{eqnarray}
The score rewards points that are simultaneously far to the right and high in the two-dial plane. It is not a new objective function, and it is not used as a theorem hypothesis. In particular, an underexpressive point with very large $y$ can still have a sizable $s$. Therefore Eq.~(\ref{eq:sm_sweet_score}) must be read together with the region classification in Eq.~(\ref{eq:sm_regions}).

\begin{table}[h]
\centering
\caption{Meaning of the factors in the sweet score $s_{A,L}^{(n)}=x_{A,L}^{(n)}y_{A,L}^{(n)}$.}
\label{tab:sm_sweet_score_meaning}
\begin{tabular}{ccl}
\toprule
symbol & role & \parbox[t]{0.75\textwidth}{interpretation} \\
\midrule
$x_{A,L}^{(n)}$ & coverage factor & \parbox[t]{0.75\textwidth}{large when the family-averaged EP is close to the Haar entanglement benchmark. It prevents a shallow low-coverage circuit from being selected only because it has high variability.} \\
$y_{A,L}^{(n)}$ & variability factor & \parbox[t]{0.75\textwidth}{large when the fixed-circuit product-input response remains broad. It distinguishes high-coverage/nonhomogenized circuits from high-coverage/homogenized circuits.} \\
$s_{A,L}^{(n)}$ & descriptive rank & \parbox[t]{0.75\textwidth}{useful for ranking candidate depths within a panel, but not a universal trainability metric. It should be interpreted only after checking whether the point lies in $\mathcal U$, $\mathcal S$, or $\mathcal P$.} \\
\bottomrule
\end{tabular}
\end{table}

\subsubsection{Detailed reading of the route geometry} 

Fig.~1 in the main manuscript plots the trajectories of the five ansatz families in the $(x,y)$ plane as $L$ increases from $1$ to $6$. The route geometry carries the physical information. Moving to the right means that the family has gained mean entangling coverage. Moving downward means that the product-input response has homogenized. A useful route moves sufficiently far right before it moves too far down.

For $n=5$, two types of behavior are already visible. LineCNOT-HEA starts in the sweet region and then rapidly moves toward high coverage with reduced variability. RingCZ-HEA and RingCRX-HEA pass through an intermediate region before descending toward lower EPD. Brickwork-RXX retains appreciable variability but remains mostly under the coverage cut. QAOA-RingZZ reaches the sweet region only at later depth. Thus even for small $n$, EP alone would merge late high-coverage points with genuinely useful sweet-spot points, while EPD separates them. For $n=10$, the intermediate-depth windows become clearer. RingCZ-HEA and RingCRX-HEA develop visible sweet-spot portions before falling into the near-plateau sector. LineCNOT-HEA reaches high coverage early and then loses variability quickly. Brickwork-RXX remains mostly underexpressive, despite retaining width in the output-entanglement distribution. QAOA-RingZZ approaches the useful region more slowly. The lesson is that faster coverage is not automatically better: a route that arrives at high $x$ only after $y$ has collapsed is less attractive than one that enters $\mathcal S^{(n)}$. For $n=15$, the distinction between coverage and homogenization is sharper. LineCNOT-HEA and the deepest RingCZ-HEA points lie at high $x$ but low $y$, illustrating why a one-dimensional EP-vs-BP reading is incomplete. RingCZ-HEA and RingCRX-HEA still show middle-depth sweet windows, whereas QAOA-RingZZ reaches a useful region at a later depth. Brickwork-RXX again demonstrates the opposite failure mode: noncollapsed variability without enough coverage. For $n=20$, the newly added panel confirms that the two-dial interpretation is not restricted to the smaller scans. The value of $y_c^{(20)}$ is lower than in the smaller panels because it is a panel-wise split of the normalized diagnostic cloud, not a universal BP threshold. RingCZ-HEA retains a middle-depth high-coverage/nonhomogenized window before descending. LineCNOT-HEA largely occupies the high-coverage/low-variability side, with only a narrow sweet-spot passage. Brickwork-RXX and QAOA-RingZZ reach the useful region later. RingCRX-HEA also illustrates the importance of depth scheduling: too shallow is underexpressive, whereas too deep tends toward homogenization.

The route-scan conclusion is therefore structural rather than family-specific. The different entanglers and connectivities trace different paths through the two-dial plane. The design task is to choose circuit blocks, connectivities, and depth schedules that move far enough along the coverage dial while keeping the variability dial open.

Table~\ref{tab:sm_threshold_counts} below also explains why the route plot should be read panel by panel. The coverage split is relatively stable near $0.7$ for this diagnostic cloud, while the variability split decreases with $n$. This does not mean that $y_c$ is a system-size-independent trainability threshold. It means that the normalized EPD cloud changes with system size and must be separated within each panel.
\begin{table}[h]
\centering
\setlength{\tabcolsep}{20pt}
\begin{tabular}{cccccc}
\toprule
$n$ & $x_c^{(n)}$ & $y_c^{(n)}$ & $\abs{\mathcal U^{(n)}}$ & $\abs{\mathcal S^{(n)}}$ & $\abs{\mathcal P^{(n)}}$ \\
\midrule
5  & 0.731 & 0.472 & 14 & 6 & 10 \\
10 & 0.700 & 0.352 & 15 & 8 & 7 \\
15 & 0.698 & 0.321 & 15 & 6 & 9 \\
20 & 0.712 & 0.263 & 15 & 7 & 8 \\
\bottomrule
\end{tabular}
\caption{Panel-wise thresholds and region counts for the reproduced route scan. The counts are over the $5$ ansatz families and $6$ depths in each $n$ panel.}
\label{tab:sm_threshold_counts}
\end{table}

Table~\ref{tab:sm_sweet_summary} below lists the data-driven thresholds and, for each ansatz family and qubit number, the depth that maximizes $s_{A,L}^{(n)}$. The additional region column is important: the largest sweet score for a family is not automatically a sweet-spot point if the coverage cut is not passed.
\begin{table}[h]
\centering
\setlength{\tabcolsep}{13pt}
\begin{adjustbox}{max width=0.98\textwidth}
\begin{tabular}{ccclccccc}
\toprule
$n$ & $x_c^{(n)}$ & $y_c^{(n)}$ & ansatz & best $L$ & $x_*$ & $y_*$ & $s_*$ & region \\
\midrule
5 & 0.731 & 0.472 & RingCZ-HEA     & 1 & 0.659 & 1.000 & 0.659 & $\mathcal U$ \\
5 & 0.731 & 0.472 & LineCNOT-HEA   & 1 & 0.760 & 0.766 & 0.582 & $\mathcal S$ \\
5 & 0.731 & 0.472 & Brickwork-RXX  & 6 & 0.660 & 0.650 & 0.429 & $\mathcal U$ \\
5 & 0.731 & 0.472 & QAOA-RingZZ    & 6 & 0.797 & 0.525 & 0.418 & $\mathcal S$ \\
5 & 0.731 & 0.472 & RingCRX-HEA    & 2 & 0.694 & 0.649 & 0.450 & $\mathcal U$ \\
\midrule
10 & 0.700 & 0.352 & RingCZ-HEA     & 1 & 0.596 & 1.000 & 0.596 & $\mathcal U$ \\
10 & 0.700 & 0.352 & LineCNOT-HEA   & 1 & 0.739 & 0.713 & 0.527 & $\mathcal S$ \\
10 & 0.700 & 0.352 & Brickwork-RXX  & 5 & 0.614 & 0.781 & 0.480 & $\mathcal U$ \\
10 & 0.700 & 0.352 & QAOA-RingZZ    & 4 & 0.640 & 0.659 & 0.422 & $\mathcal U$ \\
10 & 0.700 & 0.352 & RingCRX-HEA    & 4 & 0.814 & 0.535 & 0.435 & $\mathcal S$ \\
\midrule
15 & 0.698 & 0.321 & RingCZ-HEA     & 1 & 0.629 & 0.944 & 0.594 & $\mathcal U$ \\
15 & 0.698 & 0.321 & LineCNOT-HEA   & 1 & 0.806 & 0.537 & 0.433 & $\mathcal S$ \\
15 & 0.698 & 0.321 & Brickwork-RXX  & 3 & 0.565 & 0.729 & 0.412 & $\mathcal U$ \\
15 & 0.698 & 0.321 & QAOA-RingZZ    & 1 & 0.550 & 1.000 & 0.550 & $\mathcal U$ \\
15 & 0.698 & 0.321 & RingCRX-HEA    & 3 & 0.762 & 0.479 & 0.365 & $\mathcal S$ \\
\midrule
20 & 0.712 & 0.263 & RingCZ-HEA     & 1 & 0.666 & 0.585 & 0.389 & $\mathcal U$ \\
20 & 0.712 & 0.263 & LineCNOT-HEA   & 3 & 0.855 & 0.328 & 0.281 & $\mathcal S$ \\
20 & 0.712 & 0.263 & Brickwork-RXX  & 4 & 0.571 & 0.546 & 0.312 & $\mathcal U$ \\
20 & 0.712 & 0.263 & QAOA-RingZZ    & 5 & 0.665 & 0.524 & 0.349 & $\mathcal U$ \\
20 & 0.712 & 0.263 & RingCRX-HEA    & 1 & 0.535 & 1.000 & 0.535 & $\mathcal U$ \\
\bottomrule
\end{tabular}
\end{adjustbox}
\caption{Route-scan thresholds and best sweet-score summary. The thresholds $x_c^{(n)}$ and $y_c^{(n)}$ are computed by Eqs.~(\ref{eq:sm_xc_rule}) and (\ref{eq:sm_yc_rule}). $x_*$ and $y_*$ are the coordinates at the depth maximizing $s_{A,L}^{(n)}=x_{A,L}^{(n)}y_{A,L}^{(n)}$, and $s_*$ is the corresponding sweet score. The final column reports the region of that best-score point.}
\label{tab:sm_sweet_summary}
\end{table}

The score table reinforces the same caution. In several rows, the maximum score occurs in $\mathcal U^{(n)}$ because $y$ is large while $x$ is still below the coverage cut. These points should not be read as final sweet-spot candidates. The practical screening rule is: first identify whether the route reaches $\mathcal S^{(n)}$, then use the score and task-specific gradient checks to rank nearby depths.

\section{Task-level validation on QML benchmarks}\label{subsec:task_validation}

The route scan in the previous Sec.~V is task-independent. It uses only fixed-circuit EP/EPD descriptors and therefore cannot, by itself, certify high accuracy for every possible dataset, data encoding, optimizer, readout, and loss. A natural question is therefore whether the high-coverage/nonhomogenized region selected by the two-dial scan is truly useful in an actual learning loop. In this section, we address this question by adding a small descriptor-blind QML validation. The purpose is not to turn the sweet score into a universal accuracy theorem. Rather, it is to test whether the scan identifies ansatz-depth points at which task-specific training is most worthwhile.

The validation uses the $n=5$ route cloud from Fig.~1 of the main text. All $30$ ansatz-depth points, namely the five ansatz families at $L=1,\ldots,6$, are trained under the same supervised-learning protocol. The EP/EPD coordinates, region assignments, and sweet scores are fixed before training and are not adjusted using the task labels. Thus, the comparison is descriptor-blind: the learning task only tests the candidates selected by the scan. The two summary figures for this validation are included in the main text; the present section keeps the full protocol, tables, and interpretation.

\subsection{Benchmark construction and training protocol}

We use a simple teacher-student binary classification benchmark. For each input $\bm{x}=(x_1,x_2)\in[-1,1]^2$, the data embedding prepares
\begin{eqnarray}
\ket{\eta(\bm{x})}=\bigotimes_{j=1}^{5}\hat R_z\!\left(b_j(\bm{x})\right)\hat R_y\!\left(a_j(\bm{x})\right)\ket{0},
\label{eq:sm_task_embedding}
\end{eqnarray}
where the angles $a_j,b_j$ are fixed linear combinations of $x_1$ and $x_2$. The same embedding is used for every ansatz-depth point. We then generate labels using two fixed sweet-region teacher circuits, $\hat{T}_1=({\rm RingCZ\text{-}HEA},L=3)$ and $\hat{T}_2=({\rm RingCRX\text{-}HEA},L=3)$, both of which lie in the green region of the $n=5$ route scan. For teacher $\hat{T}$, we define
\begin{eqnarray}
z_T(\bm{x})=\bra{\eta(\bm{x})}\hat U_T^\dagger \hat Z_0 \hat U_T\ket{\eta(\bm{x})}.
\label{eq:sm_teacher_score}
\end{eqnarray}
The binary label is obtained by thresholding $z_T(\bm{x})$ at the median value on the training set, so that the two classes are approximately balanced:
\begin{eqnarray}
y_T(\bm{x})=\begin{cases}
+1, & z_T(\bm{x})\ge m_T,\\
-1, & z_T(\bm{x})<m_T,
\end{cases}
\qquad
m_T={\rm median}_{\bm{x}\in\mathcal D_{\rm train}}z_T(\bm{x}).
\label{eq:sm_teacher_label}
\end{eqnarray}
This teacher-student construction is useful because it probes an actual QML learning loop while keeping the target function in the representational regime that the two-dial scan is designed to find: high coverage is useful, but the teacher itself is not taken from the low-EPD near-plateau region.

For a student ansatz $(A,L)$, the trainable prediction is
\begin{eqnarray}
f_{A,L}(\bm{x};\bm{\theta})=\bra{\eta(\bm{x})}\hat U_{A,L}(\bm{\theta})^\dagger \hat Z_0 \hat U_{A,L}(\bm{\theta})\ket{\eta(\bm{x})},
\label{eq:sm_student_readout}
\end{eqnarray}
and the loss is the mean squared error
\begin{eqnarray}
\mathcal L_{A,L}(\bm{\theta})=\frac{1}{|\mathcal D_{\rm train}|}\sum_{(\bm{x},y)\in\mathcal D_{\rm train}}\left(f_{A,L}(\bm{x};\bm{\theta})-y\right)^2.
\label{eq:sm_task_loss}
\end{eqnarray}
All circuits are initialized from the same type of random angle distribution used in the route scan, namely independent uniform angles in $[0,2\pi)$. We train with full-batch Adam-SPSA for $50$ iterations. SPSA is used only to keep the benchmark architecture-neutral: each update requires two loss evaluations independent of the number of parameters. For each ansatz-depth point and each teacher task, we use two independent optimizer seeds. The training set has $40$ samples and the test set has $96$ samples.

The validation metrics are listed in Table~\ref{tab:task_metric_definitions}. A run is called successful when its final test accuracy is at least $75\%$. The initial gradient scale is the per-parameter squared norm of the first SPSA gradient estimate,
\begin{eqnarray}
G_0=\frac{1}{p_{A,L}}\left\|\widehat{\nabla}_{\rm SPSA}\mathcal L_{A,L}(\bm{\theta}_0)\right\|_2^2,
\label{eq:sm_initial_spsa_scale}
\end{eqnarray}
which should be read as an optimizer-level proxy for the available initial training signal, not as the exact local two-copy variance of Sec.~III.
\begin{table}[h]
\centering
\setlength{\tabcolsep}{13pt}
\begin{tabular}{lcl}
\toprule
metric & symbol & interpretation \\
\midrule
final test accuracy & $A_{\rm test}$ & classification accuracy after the fixed training budget \\
final training loss & $\mathcal L_{\rm final}$ & residual optimization error on the training set \\
success probability & $P_{\rm succ}$ & fraction of runs with $A_{\rm test}\ge0.75$ \\
first success iteration & $T_{\rm succ}$ & first logged iteration at which the success threshold is reached \\
initial gradient scale & $G_0$ & per-parameter squared norm of the first SPSA gradient estimate \\
\bottomrule
\end{tabular}
\caption{Task-level validation metrics. The benchmark is not used to define EP, EPD, or the sweet spot; it is used only after the descriptor scan to test whether the selected ansatz-depth regions behave as useful QML candidates.}
\label{tab:task_metric_definitions}
\end{table}

\subsection{Results: sweet region as a useful screening window, not a guarantee}

Table~\ref{tab:task_region_summary} groups the task-level results by the three EP/EPD regions. The most relevant comparison for the two-dial claim is the high-coverage comparison between the green and red regions. Both have passed the coverage cut, but they differ in whether the variability dial remains open. In this comparison, the sweet region has higher median test accuracy, lower median final loss, larger success probability, and a much faster first-success time than the near-plateau region. In particular, the median first-success iteration is $5$ in the sweet region but $20$ in the near-plateau region. This supports the intended interpretation of the green region as the depth window in which high-coverage training is most worthwhile.
\begin{table}[h]
\centering
\setlength{\tabcolsep}{13pt}
\begin{tabular}{lcccccc}
\toprule
region & runs & median $A_{\rm test}$ & median $\mathcal L_{\rm final}$ & $P_{\rm succ}$ & median $T_{\rm succ}$ & median $G_0$ \\
\midrule
underexpressive & 56 & 0.682 & 0.793 & 0.357 & 12.5 & $3.51\times10^{-1}$ \\
sweet spot & 24 & 0.656 & 0.804 & 0.333 & 5.0 & $1.72\times10^{-1}$ \\
near plateau & 40 & 0.609 & 0.865 & 0.200 & 20.0 & $2.91\times10^{-1}$ \\
\bottomrule
\end{tabular}
\caption{Region-wise task-level validation for the $n=5$ route cloud. The table reports medians over all ansatz-depth points, the two teacher tasks, and two optimizer seeds. The success threshold is $A_{\rm test}\ge0.75$.}
\label{tab:task_region_summary}
\end{table}

The underexpressive region deserves a careful interpretation. Some gray-region circuits perform competitively on these small $n=5$ teacher tasks. This is not a contradiction of the two-dial framework. It means that a small finite task may not require the full coverage that the descriptor scan regards as desirable. The gray region can therefore contain trainable circuits for simple tasks, but it is not the high-coverage design window targeted by the main Letter. The operational question addressed by the sweet spot is different: once high coverage has been reached, should one continue increasing depth into the homogenized red sector? The answer from Table~\ref{tab:task_region_summary} is no; the high-coverage/noncollapsed region gives better task-level behavior than the high-coverage/low-EPD region under the same training protocol.

The two task-level validation plots have been moved to the main text. The EP/EPD scatter plot shown there gives the same result directly in the route plane: the color indicates the empirical success probability of each ansatz-depth point, and the high-coverage/low-EPD sector contains several low-success points. Successful points are more often found away from the collapsed lower-right corner. This is exactly the design information lost by projecting the route cloud onto the coverage coordinate alone.

The region-wise diagnostic plot in the main text isolates three aggregate quantities. The first panel shows that sweet-region success is slightly below the gray region but above the red near-plateau region. The second panel is the most diagnostic for trainability: among successful runs, the sweet region reaches the success threshold earlier than both gray and red regions. The third panel reports the initial SPSA gradient scale. These quantities are not expected to be monotonic functions of EPD, because the exact gradient object is local and cost-dependent. Nevertheless, the red sector does not provide an advantage despite having the largest coverage; this is the task-level counterpart of the main two-dial message.

We also compare simple screening rules in Table~\ref{tab:task_screening_rules} below. The rule based on the product $s=xy$ selects points that balance coverage and variability and gives a larger success probability than selecting by coverage alone. Selecting by variability alone can also perform well in this small task, but it includes low-coverage gray points; this is why the main text defines the sweet spot by applying the variability cut only after the coverage cut. Depth alone and parameter count alone are not reliable substitutes for the two-dial geometry.
\begin{table}[h]
\centering
\setlength{\tabcolsep}{15pt}
\begin{tabular}{lccc}
\toprule
screening rule & median $A_{\rm test}$ & mean $P_{\rm succ}$ & median $G_0$ \\
\midrule
sweet score $s=xy$ & 0.648 & 0.375 & $3.51\times10^{-1}$ \\
coverage $x$ only & 0.570 & 0.125 & $4.11\times10^{-1}$ \\
variability $y$ only & 0.703 & 0.417 & $5.26\times10^{-1}$ \\
deepest $L$ only & 0.698 & 0.333 & $2.72\times10^{-1}$ \\
fewest parameters & 0.589 & 0.208 & $3.22\times10^{-1}$ \\
all points & 0.641 & 0.300 & $3.46\times10^{-1}$ \\
\bottomrule
\end{tabular}
\caption{Comparison of simple descriptor-blind screening rules. Each rule selects six ansatz-depth points from the $n=5$ route cloud before any task training. The table then reports the median test accuracy, mean success probability, and median initial gradient scale of the selected points.}
\label{tab:task_screening_rules}
\end{table}

Finally, Table~\ref{tab:task_predictor_corr} reports Spearman rank correlations between scalar predictors and task success. The coverage coordinate alone is negatively correlated with success in this benchmark, because many of the highest-coverage points lie in the low-EPD near-plateau sector. By contrast, the variability coordinate and the sweet score have positive correlations with success. These numbers should not be overinterpreted as universal constants. Their role is to verify the central logic: once one moves beyond very small tasks, EP-only or coverage-only screening is incomplete, and retaining noncollapsed variability is useful information for ansatz selection.
\begin{table}[h]
\centering
\setlength{\tabcolsep}{15pt}
\begin{tabular}{lc}
\toprule
predictor & Spearman $\rho$ with $P_{\rm succ}$ \\
\midrule
coverage dial $x$ & -0.251 \\
variability dial $y$ & 0.374 \\
sweet score $s=xy$ & 0.364 \\
mean EP & -0.251 \\
EPD & 0.374 \\
depth $L$ & -0.061 \\
parameter count & -0.102 \\
\bottomrule
\end{tabular}
\caption{Spearman rank correlation between descriptor-level predictors and success probability in the task-level validation. Positive values indicate that larger predictor values tend to rank ansatz-depth points with higher success probability.}
\label{tab:task_predictor_corr}
\end{table}

The validation therefore answers the practical question in a deliberately limited sense. A high sweet score is not a universal guarantee of high QML accuracy, and the present paper does not claim such a no-free-lunch reversal. Instead, the EP/EPD scan is a pre-training filter. It identifies the regime in which coverage has been gained without obvious homogenization, and the task-level check shows that this regime is a better high-coverage target than the low-EPD near-plateau sector.

\section{Interpretive remarks and limitations of the diagnostic}

The central theoretical hierarchy is
\begin{eqnarray}
\mathrm{coverage} &:& \Tr[\hK_Q\hM^{(2)}_{\hU}],\nonumber\\
\mathrm{variability} &:& \Tr[\hT_Q\hM^{(4)}_{\hU}]-\EP^2,\nonumber\\
\mathrm{trainability} &:& \Tr[(\hB_i\otimes \hB_i)\hM^{(2)}_{i,\LC}].
\label{eq:sm_final_hierarchy}
\end{eqnarray}
The first term is global and mean-like, the second is global but fluctuation-sensitive, and the third is local and parameter-resolved. This is why the Letter does not claim that EPD is the barren-plateau invariant. Rather, EPD is a compact fourth-moment diagnostic of the same homogenization process that, when seen locally by a parameter light cone, suppresses gradient variance.

This distinction also clarifies how to interpret small EPD. A shallow underexpressive circuit may have small EPD because it generates little entanglement at all. A deep randomizing circuit may have small EPD because it generates almost the same high entanglement for most product inputs. The diagnostic is therefore most informative in combination with a coverage coordinate: high normalized EP together with low EPD indicates high-coverage/low-variability homogenization. The route plot in the Letter is designed precisely to show this joint behavior.

Several extensions are natural. One may replace the Meyer-Wallach functional by other entanglement or correlation functionals; one may replace the Haar-product input distribution by a data-conditioned distribution; and one may adapt the moment formulas to noisy channels. These generalizations change the operational meaning of the descriptors but not the main lesson: mean coverage, fluctuation width, and local trainability are distinct objects and should not be collapsed into a single scalar trade-off.

\end{document}